\documentclass[12pt]{article}
\usepackage[utf8]{inputenc}
\usepackage{array,fullpage,multirow}
\usepackage{amssymb,amsmath,amsthm,sectsty,url,nicefrac,color}

\usepackage{paper}
\usepackage{subcaption}
\newcounter{num}
\setcounter{num}{1}

\makeatletter

\newcommand{\Daniel}[1]{\textcolor{blue}{[Daniel\@ifnotempty{#1}{: #1}]}}
\newcommand{\Zongchen}[1]{\textcolor{teal}{[Zongchen\@ifnotempty{#1}{: #1}]}}

\newcommand{\aw}[1]{\textcolor{orange}{[Alex\@ifnotempty{#1}{: #1}]}}

\makeatother

\usepackage{algorithm}
\usepackage{algpseudocode}
\algnewcommand\algorithmicinput{\textbf{Input:}}
\algnewcommand\algorithmicoutput{\textbf{Output:}}
\algnewcommand\Input{\item[\algorithmicinput]}%
\algnewcommand\Output{\item[\algorithmicoutput]}%
\usepackage{wrapfig}

\newcommand{\PP}{\mathbb{P}}
\newcommand{\EE}{\mathbb{E}}

\usepackage{verbatim}
\newcommand{\cA}{\mathcal{A}}
\newcommand{\cB}{\mathcal{B}}

\title{Time Lower Bounds for the Metropolis Process and Simulated Annealing}




\author{
Zongchen Chen\thanks{Department of Computer Science and Engineering, University at Buffalo. Email: \texttt{zchen83@buffalo.edu}}
\and
Dan Mikulincer\thanks{Department of Mathematics, MIT. Email: \texttt{danmiku@mit.edu.}}
\and 
Daniel Reichman\thanks{Department of Computer Science, WPI. Email: \texttt{dreichman@wpi.edu}}
\and 
Alexander S.\ Wein\thanks{Department of Mathematics, UC Davis. Email: \texttt{aswein@ucdavis.edu}. Partially supported by an Alfred P.\ Sloan Research Fellowship.}
}

\setlength {\marginparwidth }{2cm}
\begin{document}

\maketitle

\begin{abstract}
    The Metropolis process (MP) and Simulated Annealing (SA) are stochastic local search heuristics that are often used in solving combinatorial optimization problems. Despite significant interest, there are very few theoretical results regarding the quality of approximation obtained by MP and SA (with polynomially many iterations) for NP-hard optimization problems.    

   We provide rigorous lower bounds for MP and SA with respect to the classical maximum independent set problem when the algorithms are initialized from the empty set. We establish the existence of a family of graphs for which both MP and SA fail to find approximate solutions in polynomial time. 
    More specifically, we show that for any $\varepsilon \in (0,1)$ there are $n$-vertex graphs for which the probability SA (when limited to polynomially many iterations) will approximate the optimal solution within ratio $\Omega\left(\frac{1}{n^{1-\varepsilon}}\right)$ is exponentially small. Our lower bounds extend to graphs of constant average degree $d$, illustrating the failure of MP to achieve an approximation ratio of $\Omega\left(\frac{\log (d)}{d}\right)$ in polynomial time. In some cases, our impossibility results also go beyond Simulated Annealing and apply even when the temperature is chosen adaptively.
    Finally, we prove time lower bounds when the inputs to these algorithms are bipartite graphs, and even trees, which are known to admit polynomial-time algorithms for the independent set problem. 
     
\end{abstract}

\thispagestyle{empty}

\newpage
\tableofcontents

\thispagestyle{empty}

\newpage

\setcounter{page}{1}

\section{Introduction}

Simulated Annealing~\cite{kirkpatrick1983optimization,vcerny1985thermodynamical} (SA) is a family of randomized local search heuristics, that is widely applicable for approximating solutions of combinatorial optimization problems. In the maximization version 
we are given a finite search space $\mathcal{C}$ of feasible solutions, and a cost $f(x)$, for every solution $x \in \mathcal{C}$. Additionally, for every solution $x \in \mathcal{C}$ there is a set $N(x) \subseteq \mathcal{C}$ of \emph{neighboring solutions} accessible via a single move of the search algorithm. In contrast to hill-climbing methods that consistently choose an element $y \in N(x)$ with  $f(y) \geq f(x)$, SA may choose a neighboring solution $y$ satisfying $f(y) < f(x)$ with probability $e^{-\frac{\Delta}{T}}$ where $\Delta:=f(x)-f(y)$ and $T>0$ is a \emph{temperature parameter}, that governs the behavior of the algorithm. Typically one gradually reduces the temperature over time allowing for a more exploratory algorithm in the early stages.   
The idea is that since the algorithm is allowed to accept downhill moves it should be able to escape local maxima. So, with an appropriately chosen cooling schedule, the hope is for the algorithm to find near-optimal solutions. The case where the temperature $T$ is fixed throughout the algorithm has received attention as well~\cite{metropolis1953equation}: in this case, the algorithm is called the Metropolis process (MP).

Since its inception in the 1980s SA was found empirically to be highly effective for numerous optimization problems in diverse fields such as VLSI design, pattern recognition, and quantum computing. The great popularity of SA is acknowledged in several dedicated books, articles, surveys, and textbooks concerned with algorithm design~\cite{kleinberg2006algorithm,dasgupta2008algorithms,aarts1989simulated,bertsimas1993simulated,johnson1989optimization,johnson1991optimization}. 

Owing to the wide applicability of SA for NP-hard optimization problems, one may wonder what rigorous results can be obtained regarding the algorithm's performance.
It is well-known~\cite{hajek1988cooling} that for a suitable cooling schedule, SA, if run for sufficiently many iterations, will almost surely converge to a global optimizer. However, there is no guarantee that the running time will be polynomial in the size of the input. This begs the question of what can be said with respect to SA when it is constrained to run for \emph{polynomially many steps}. This question is explicitly mentioned as challenging in several papers~\cite{aldous1994go,bertsimas1993simulated,gharan2011submodular,jerrum1996markov}. For example, it is mentioned in~\cite{aldous1994go} that ``The polynomial time behavior of simulated annealing is notoriously hard to analyze rigorously". In the field of approximation algorithms for NP-hard optimization problems not much seems to be known with respect to upper and lower bounds regarding the \emph{approximation factor} that can be achieved efficiently with SA. The situation for MP is similar: Little is known about the approximation ratio achievable by MP (when run for polynomially many steps) for NP-hard optimization problems. As stated by~\cite{jerrum1996markov}, ``Rigorous results...about the performance of the Metropolis algorithm on non-trivial optimization problems are few and far between". Despite some recent developments~\cite{coja2015independent,chen2022almost}, the literature on rigorous results for MP and SA remains sparse and experts have noticed the ``gap between theory and practice...for Simulated Annealing"~\cite{doerr2022simulated}.

The lack of runtime complexity lower bounds for SA and MP is not a coincidence, since some natural approaches run into difficulties. One direction to prove time lower bounds for MP and SA is to rely on known bounds for the mixing time of the relevant Markov chains. For such bounds, there is a wealth of existing established techniques~\cite{levin2017markov,mihail1989conductance,fill1991eigenvalue,chen1999lifting,dyer2002counting,bhatnagar2004torpid,mossel2009hardness}. However, slow mixing does not necessarily imply anything about the efficiency of MP as an approximation algorithm. For example, a simple conductance argument shows that MP has exponential mixing time, with any temperature parameter, when searching for the maximum independent set in the complete bipartite graph $K_{n,n}$. On the other hand, it is easily seen that with an appropriate temperature, MP will find an optimal independent set in $K_{n,n}$ in polynomial time. Furthermore, lower bounds on mixing times imply the existence of a ``bad" initial state from which the expected time of the chain to mix is super-polynomial. These kinds of statements do not usually carry information about initialization from specific states, as done in practice. For example, as observed in~\cite{jerrum1992large} and further elaborated in~\cite{chen2022almost}, conductance lower bounds on the mixing time do not imply comparable lower bounds on the time it takes MP for the independent set problem to find an optimal or even near-optimal solution when using the natural empty state initialization. Finally, as noted in~\cite{jerrum1992large}, common techniques to prove mixing times lower bounds for homogeneous Markov chains do not generalize in a straightforward way to inhomogeneous chains such as SA. 

The overarching aim of this paper is to further our understanding of the theoretical guarantees afforded by the above-mentioned algorithms and investigate the possible limitations and hard instances. Specifically, we focus on the \emph{maximum independent set} problem and establish lower bounds on the number of iterations required by MP or SA to obtain a reasonable approximation. As we detail in the next section, we consider several different families of graphs: dense graphs, graphs of bounded average degree, bipartite graphs, and trees. For each such class, we prove corresponding lower bounds, that differ in the obtained approximation ratio and allowed cooling schedule, used in SA. Notably, for dense graphs, which represent the most general category, we present particularly robust results that extend beyond SA, applying to \emph{any} cooling schedule, even adaptive ones.  While the specifics of our results, and their proofs, differ across graph classes, the common thread is this: We establish the existence of graph families where either MP or SA must run for an exponentially large number of steps to approximate the maximum independent set. The only exception is our lower bound for trees where we study the time to find the \emph{optimal} solution (as we will see in this case that MP does succeed at finding efficiently an \emph{approximate} solution for tree instances).

It is well known that the independent set problem is NP-hard not only to compute exactly but also to approximate~\cite{hastad1996clique,zuckerman2006linear,bhangale2022ug}. Nevertheless, the quest for \emph{unconditional} lower bounds for achieving efficient approximations by algorithmic methods such as SA and MP is of interest. As SA and MP are general methods that apply to general combinatorial optimization problems, it is expected that such methods are unlikely to outperform specially designed algorithms for specific problems~\cite{auger2011theory}. We thus view this work as being related more to the analysis of SA and MP and less to the study of the best approximation ratios that can be efficiently attained for the independent set problem. We note that proving lower bounds for MP and SA has proven difficult to achieve even for instances where the independent set problem (or equivalently the clique problem) is believed to be intractable such as sparse random graphs \cite{coja2015independent} and the planted clique problem~\cite{chen2022almost,jerrum1992large}. For both of these families of instances, it is not known how to prove lower bounds against MP and SA in their full generality. Tackling first the easier challenge of proving lower bounds for worst-case instances could be instrumental in proving more general lower bounds for MP with respect to sparse random graphs or the planted clique problem. 

\subsection{Our results}
As detailed in the previous section, our lower bounds apply to the problem of approximating the maximum independent set of a graph. Given a graph $G$, an independent set in $G$ is a subset of vertices that spans no edge. The cardinality of a maximum independent set in $G$ is denoted by $\alpha(G).$ Computing $\alpha(G)$ exactly or approximately is a classical NP-hard problem. 

The exact dynamics of the Metropolis process, Simulated Annealing, and the various variants we consider are introduced in \cref{sec:prelim}, and we refer the reader there for exact definitions. For now, let us mention that the main differences between the different algorithms lie in how the temperature, also called (inverse) fugacity, changes over the execution of the algorithm. 

In the classical Metropolis process, the temperature is fixed and does not change, while for Simulated Annealing there is some fixed schedule for decreasing the temperature over time. Some of our results also apply to a more general class of algorithms where the temperature can be chosen adaptively during the algorithm's execution. In the sequel, for simplicity we shall colloquially refer to all these algorithms as the Metropolis process (MP) and make sure to mention the different temperature schedules when relevant.

Our main results consist of exponential lower bounds on the time complexity of MP when approximating the value of $\alpha(G)$. We construct infinite families of graphs $G_1, G_2, \ldots, G_n, \ldots$ (with $G_n$ having $n$ vertices) such that the following holds. There is a function $p:\mathbb{N} \to (0,1)$ and a constant $\eta > 0$, such that if MP runs for fewer than $e^{n^{\eta}}$ iterations the probability it will find an independent set in $G_n$ larger than $p(n)\alpha(G_n)$ is at most $e^{-n^{\eta}}$. In other words, when run for less than an exponential number of steps, MP gives a multiplicative approximation of at most $p(n)$ to $\alpha(G_n)$.
As an instructive case, for general graphs, we show that one can take $p(n) = \frac{1}{n^{1-2\eps}}$, while $\alpha(G_n) = n^{1-\eps}$, for some $\eps > 0$ arbitrarily small. Thus, even though $G_n$ contains an independent set of nearly linear size, MP may struggle to even find an independent set of size $n^{\eps}$. All of our results hold when MP is initialized from the empty set. As previously noted, proving lower bounds for these algorithms when starting from a given state has proven challenging

\paragraph{Results for general graphs.} Our first main result is rather general and establishes lower bounds for the classical Metropolis process (with constant temperature) on graphs parametrized by their average degree.
In \cref{sec:technique} we outline the key ideas used in the proof and the complete proof can be found in \cref{sec:failure}.
\begin{theorem} \label{thm:sparse}
	Let $\{d_n\}_{n=1}^{\infty}$ satisfy $d_n \ge C$ for some large enough constant $C > 0$, and $d_n = o\left(\frac{n}{\log^2(n)}\right)$. There exists a sequence of graphs $\{G_n\}_{n\geq 0}$ satisfying:
	\begin{itemize}
		\item $G_n$ has $\Theta(n)$ vertices, the average degree of $G_n$ is  $\Theta(d_n)$, and $\alpha(G_n) = \Theta(\frac{n}{\log(d_n)})$.
		\item If $\{I_t\}_{t \geq 0}$ is the process of independent sets maintained by MP with any fixed\footnote{By fixed we mean that the temperature does not change during the algorithm. The temperature parameter may depend on $n$.} temperature, 
		$$\PP\left(\max\limits_{t \leq \exp\left(\frac{n}{C' d_n}\right)}|I_t| \geq C'\frac{\log(d_n)}{d_n}\alpha(G_n) \right)\leq \exp\left(-\frac{n}{d_n\log(d_n)}\right),$$
		where $C' > 0$ is a universal constant.
	\end{itemize}
\end{theorem}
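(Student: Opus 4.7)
My plan uses three ingredients: a graph construction, a combinatorial ``barrier'' property for independent sets, and a Markov-chain hitting-time estimate that is uniform in the fugacity $\lambda$.

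For the construction, I would seek a graph $G_n$ of average degree $\Theta(d_n)$ with $\alpha(G_n)=\Theta(n/\log d_n)$ whose large independent set is realized by a ``hidden'' structure — for example, a planted independent set $A$ of size $\Theta(n/\log d_n)$ against a random regular backbone, with the cross-edges tuned so that $A$ is simultaneously maximum and locally indistinguishable from smaller greedy-style IS. A natural alternative is a random sparsification of the complete $(\log d_n)$-partite graph $K_{t,\ldots,t}$ with parts of size $t=\Theta(n/\log d_n)$, retaining each edge independently with probability $\Theta(d_n/n)$; here the color classes provide the planted IS.

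Given $G_n$, I would prove (with probability $1-o(1)$ over the construction randomness) the following barrier property: for every IS $I$ of size $k_*\asymp n/d_n$, only $o(k_*)$ outside vertices are non-adjacent to all of $I$, so extending $I$ by one vertex requires first removing a constant fraction of it; analogous rigidity at scales $k>k_*$ ensures no bypass route. This is a union bound over IS of the barrier size plus tail estimates for the random edges of the construction. Then for any fixed $\lambda>0$, MP is reversible with stationary measure $\mu_\lambda(I)\propto\lambda^{|I|}$ on IS, and a standard bottleneck / flow inequality bounds the hitting time of $S=\{I:|I|\ge C' n/d_n\}$ from $\emptyset$ below by $\Omega(\mu_\lambda(\emptyset)/\Phi_\lambda(B))$, where $B$ is any separating set and $\Phi_\lambda(B)$ is the total MP-flow across $B$. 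Taking $B$ to be the family of IS at the barrier scale and using the barrier property to count cross-transitions yields $\Phi_\lambda(B)/\mu_\lambda(\emptyset)\le \exp(-\Omega(n/(d_n\log d_n)))$ uniformly in $\lambda$, via a case split on small versus large $\lambda$: for small $\lambda$ the hard-core measure concentrates near $\emptyset$ and both $B$ and $S$ already have tiny measure, while for large $\lambda$ the scarcity of barrier-level IS dominates the $\lambda^{k_*}$ factor. A union bound over the $\exp(n/(C'd_n))$ time steps then delivers the stated failure probability.

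The main obstacle is the construction step: one must achieve both $\alpha(G_n)=\Theta(n/\log d_n)$ — strictly larger than the $\Theta(n\log d_n/d_n)$ baseline for an Erd\H{o}s--R\'enyi graph of average degree $d_n$ — and the barrier at scale $n/d_n$. Naive planted constructions tend to let MP greedily extend a Tur\'an-scale IS toward the planted IS (each planted-IS vertex is usually ``addable'' with positive probability after a random greedy run), which would contradict the claim; the actual construction must therefore exhibit a genuine combinatorial rigidity that blocks such greedy extension, not merely a planted large IS against a random background. A secondary obstacle is ensuring the bottleneck estimate is truly uniform in $\lambda$, which I would address by the case split above and, if necessary, by choosing $B$ adaptively as a function of $\lambda$. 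The Markov-chain portion is otherwise template, so I expect the bulk of the technical work to lie in the construction and the barrier union bound.
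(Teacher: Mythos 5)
There are two genuine gaps here, and both touch the crux of the theorem. First, the construction. You correctly flag that a naive planted or multipartite construction lets MP greedily extend toward the planted set, but neither of your candidates resolves this, and the paper's resolution is a specific mechanism you do not have: start from an unbalanced random bipartite graph $L\cup R$ with $|L|=n$, $|R|=kn$ (so $R$ is the large independent set), and replace every vertex of $L$ by a clique of size $\ell\gg k$. The cliques do double duty: since MP selects vertices uniformly, the process is overwhelmingly likely to update the clique side, so it quickly occupies a constant fraction of the cliques while touching almost none of $R$; and once a clique is occupied, vacating it requires hitting the single occupied vertex among $\ell$, so occupied cliques are sticky \emph{independently of $\lambda$}. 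The trap is then the set of independent sets supported on one vertex per clique, of size at most $n=N/D$, versus $\alpha=\Theta(N/\log D)$. Without some such asymmetry in the update rates and deletion probabilities, the barrier property you want (no independent set of size $n/d_n$ has many common non-neighbors) is in tension with $\alpha(G_n)=\Theta(n/\log d_n)\gg n/d_n$: subsets of the maximum independent set of size $n/d_n$ always extend freely, so a barrier defined purely by size cannot separate $\emptyset$ from the large sets.

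Second, and more fatally, the hitting-time bound $\PP_\emptyset(\tau_S\le T)\lesssim T\,\Phi_\lambda(B)/\mu_\lambda(\emptyset)$ cannot be made uniform in $\lambda$, and your case split does not rescue it. For large $\lambda$ one has $\mu_\lambda(\emptyset)=1/Z_\lambda$ with $Z_\lambda\ge\lambda^{\alpha(G)}$, while the flow across any size-$k_*$ barrier is at least $\lambda^{k_*}/(NZ_\lambda)$ for a single extendable barrier set (and extendable ones must exist since $\alpha>k_*$), so the ratio is at least $\lambda^{k_*}/N\to\infty$; the claim that ``scarcity of barrier-level IS dominates the $\lambda^{k_*}$ factor'' is false for any graph with $\alpha(G)>k_*$. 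This is exactly the obstruction the paper illustrates with $K_{n,n}$: conductance relative to the empty initialization says nothing useful for large fugacity. The paper instead runs a two-phase argument: a burn-in analysis (valid for every schedule) showing the process first occupies a set $L_1$ of $\Omega(n)$ cliques while leaving a set $R_0$ of $(k-1)n$ right-vertices untouched and unrevealed; then a monotone coupling that dominates the true process by an idealized chain on $L_1\cup R_0$ started from the \emph{stationary distribution conditioned on} $|I\cap R_0|\le n/20$, for which the standard escape bound does apply. Uniformity in $\lambda$ comes out because the bottleneck ratio compares configurations with comparable numbers of occupied vertices, so the $\lambda$ powers cancel and what remains is $(Ck/\ell)^{n/20}$, controlled by the deterministic clique size rather than the fugacity. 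You would need to replace your single conductance-from-$\emptyset$ step with something of this two-phase, initialization-aware form.
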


Let us unpack \cref{thm:sparse} and consider the extreme cases for the average degree. The largest degree we can take and still obtain super-polynomial bounds is $d_n = \frac{n}{\mathrm{polylog}(n)}$. In this case, \cref{thm:sparse} guarantees that $G_n$ has an independent set of nearly linear size $\frac{n}{\mathrm{polylog}(n)}$. However, with any fixed temperature, if MP runs for only a polynomial number of steps, it will fail to find an independent set of even polynomial size and will only result in a set of size $\mathrm{polylog}(n)$. 
To get exponential lower bounds we can slightly lower the average degree and take $d_n = n^{1-\eps}$ for any fixed $\eps > 0$. For these slightly sparser graphs, if MP runs for $\exp(n^\eps)$ iterations, it will only find a set of size at most $\tilde O(n^{\eps})$. To put this result in context, as was mentioned above, it is known that it is NP-hard to approximate the maximum independent set to within an $O\left(\frac{1}{n^{1-\eps}}\right)$ factor~\cite{hastad1996clique,zuckerman2006linear,khot2006better} for any $\eps \in (0,1)$. Thus, the exponential lower bound is predicted by this hardness result and should be seen as an unconditional proof of this prediction for MP.

\cref{thm:sparse} also applies when the average degree of the graph is a constant, that does not depend on the number of vertices. For these sparse graphs, there is extensive literature surrounding the question of approximating the maximum independent set~\cite{halldorsson1994improved,bhangale2022ug,halperin2002improved,bansal2015lovasz}. Hence, it is interesting to study the approximation achieved by MP (with polynomial running time) for sparse graphs. 
\cref{thm:sparse} allows to take $d_n \equiv d$, for some large enough constant average degree $d > 0$ and obtain a sparse graph. For our sparse graphs, MP will only find an $O\left(\frac{\log(d)}{d}\right)$ approximation of $\alpha(G)$. As an algorithmic counterpart to our lower bound, the randomized greedy algorithm will find an independent set of expected size at least $\frac{n}{d+1}$. Below, in \cref{sec:prelim} we explain how the randomized greedy algorithm can be instantiated as an MP algorithm, which shows that our lower bound is tight up to the $\log(d)$ factor. 
In \cref{sec:related} we discuss some more related results pertaining to the performance of Metropolis process in sparse Erd\"os-R\'enyi graphs.
\paragraph{Simulated Annealing in dense graphs.}
To go beyond the classical MP, and allow the temperature to change over time, we specialize \cref{thm:sparse} to denser graphs. A key appealing feature of our result in this case is that the theorem applies to any sequence of temperatures. In particular, the sequence can be adaptive (see \cref{sec:prelim} for the exact meaning of an adaptive sequence) and may be changed adversarially during the algorithm's execution.
\begin{theorem} \label{thm:dense}
	For every $\eps \in (0,\frac{1}{2})$, there exists a sequence of graphs $\{G_n\}_{n\geq 0}$ satisfying:
	\begin{itemize}
		\item $G_n$ has $\Theta(n)$ vertices and $\alpha(G_n) = \Theta(n^{1-2\eps})$.
		\item For any temperature schedule, which can be adaptive, if $\{I_t\}_{t \geq 0}$ is  the process of independent sets maintained by MP, then
		$$\PP\left(\max\limits_{t \leq e^{n^\eta}}|I_t| \geq 2n^{\eps}\right) \leq e^{-n^\eta},$$
		for some $\eta >0$.
	\end{itemize}
\end{theorem}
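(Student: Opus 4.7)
The proof follows the framework of \cref{thm:sparse} but specialized to graphs with polynomially large degree and extended to handle adaptive temperature schedules. I would proceed in three steps.

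\emph{Construction.} First adapt the sparse-case construction so that $|V(G_n)|=\Theta(n)$, $\alpha(G_n)=\Theta(n^{1-2\varepsilon})$, and every independent set $I$ with $|I|\le 2n^{\varepsilon}$ admits at most $K$ extensions in $G_n$, for some $K=n^{o(1)}$. A natural ansatz is a two-level construction --- roughly $\Theta(n^{2\varepsilon})$ expanding gadgets of size $\Theta(n^{1-2\varepsilon})$ each, coupled through additional dense inter-gadget edges --- calibrated so that small independent sets quickly saturate the expansion property used in the sparse case, while the maximum independent set lives essentially inside a single gadget and reaches size $\Theta(n^{1-2\varepsilon})$.

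\emph{Temperature-free upward bound.} The crucial observation is that in MP for the independent set problem, any proposal that strictly increases $|I|$ is accepted with probability $1$, independently of the temperature. Hence, for every (possibly adaptive) temperature at time $t$, the one-step probability $\PP(|I_{t+1}|=|I_t|+1 \mid I_t=I)$ equals exactly $|N^c(I)|/n$. Combined with the construction's expansion property, this yields a per-step upward-transition probability of at most $K/n$ whenever $|I_t|\le 2n^{\varepsilon}$, \emph{regardless} of the cooling schedule. All temperature dependence is absorbed into the downward move $|I|/n\cdot e^{-1/T_t}$, which can only suppress net growth.

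\emph{Hitting-time lower bound.} The previous bound alone is not enough: at very low temperature the chain behaves like randomized greedy, and over $e^{n^{\eta}}$ steps the naive sum of up-move probabilities is large, so one must separately argue that (i) randomized greedy from $\emptyset$ gets stuck at size $O(n^{\varepsilon})$ almost surely, and (ii) at higher temperatures downward moves cancel enough upward moves to prevent any net drift past the barrier. I would combine a greedy-trap argument exploiting the construction of $G_n$ with a temperature-uniform bottleneck estimate of the form $\max_{\lambda>0}\pi_\lambda(\partial A)/\pi_\lambda(A)\le e^{-n^{\eta}}$, where $A=\{I:|I|<2n^{\varepsilon}\}$ and $\pi_\lambda$ is the hard-core measure. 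An inhomogeneous-chain conductance argument then converts these into $\PP(\tau_{A^c}\le e^{n^{\eta}})\le e^{-n^{\eta}}$ after a union bound over time.

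The hardest step is the uniform-in-$\lambda$ bottleneck estimate. A single choice of $\lambda$ optimizes the boundary-to-bulk ratio only for one particular set size, so the construction must force $\pi_\lambda$ to concentrate strictly inside $A$ simultaneously for every $\lambda>0$. I expect this to require either a log-concavity-type property of the independence polynomial of $G_n$, or an explicit layered counting argument showing that the number of independent sets of size $k$ grows quickly enough in $k$ up through $k=2n^{\varepsilon}$ to dominate the boundary layer across all $\lambda$ at once --- and this is the step where the specific gadget structure from \cref{thm:sparse} would have to be leveraged most carefully.
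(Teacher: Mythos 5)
Your high-level instinct --- that the only temperature-uniform facts available are ``insertions are always accepted'' and ``$\lambda_t \ge 1$ can only suppress deletions'' --- is the right one, but the way you deploy it does not work, and two of your three steps contain concrete errors. First, the property you require of the construction, that every independent set of size at most $2n^{\eps}$ has at most $K=n^{o(1)}$ single-vertex extensions, is impossible: the empty set (the initialization) has $\Theta(n)$ extensions, and any $I$ contained in a maximum independent set $S$ with $|S|=\Theta(n^{1-2\eps})$ has at least $|S|-|I| = \Omega(n^{1-2\eps})$ extensions. So you cannot make the chain grow slowly; on the paper's instance the chain in fact grows quickly, to size $\Theta(n^{\eps})$, and the obstruction is elsewhere. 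Second, the uniform-in-$\lambda$ bottleneck estimate $\max_{\lambda}\pi_\lambda(\partial A)/\pi_\lambda(A)\le e^{-n^{\eta}}$ with $A=\{I : |I|<2n^{\eps}\}$ is false for any graph meeting the theorem's first bullet: as $\lambda\to\infty$, $\pi_\lambda$ concentrates on maximum independent sets, which lie in $A^c$, so $\pi_\lambda(A)\to 0$ and the ratio diverges. Even with a correctly chosen bottleneck set, a per-$\lambda$ conductance bound does not transfer to adaptive schedules: the standard argument (the one the paper does use, in \cref{lem:bottleneck}, for the \emph{fixed}-temperature \cref{thm:sparse}) requires initializing from $\pi_\lambda(\cdot\mid A)$ for a single $\lambda$, and an inhomogeneous --- let alone adaptive --- chain has no stationary measure to compare against. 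This is precisely why the paper's sparse-case result is stated only for fixed fugacity.

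The paper's dense-case argument is pathwise and avoids conductance entirely. The instance is the unbalanced random bipartite graph $L\cup R$ with the small side $L$ blown up into cliques of size $\ell$, so that (via \cref{lem:generalMP}) $L$-vertices are updated $\ell$ times more often and an occupied clique is vacated only with probability at most $1/(\ell\lambda_t)$ per update. After a schedule-independent burn-in (\cref{lem:burn-in}), a constant fraction $L_1$ of $L$ is occupied and a large $R_0\subseteq R$ is untouched, with the adjacency between $L_1$ and $R_0$ still unrevealed. The key step (\cref{lem:postdense}) tracks, for each $v\in R_0$, the number $\Gamma^v_t$ of occupied neighbors of $v$ in $L_1$: in each step this count increases with probability polynomially larger than the probability it decreases, where the deletion bound uses only $\lambda_t\ge 1$ and is therefore uniform over adaptive schedules. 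A gambler's-ruin estimate for this positively drifted walk shows $\Gamma^v_t$ never hits $0$ within $e^{\Omega(n^{\eta})}$ steps, so no vertex of $R_0$ ever becomes addable, whence $|I_t|\le |L|+|R\setminus R_0|\le 2n^{\eps}$ while $\alpha(G_n)\ge |R|=\Theta(n^{1-2\eps})$. If you want to repair your write-up, the conductance step must be replaced by a drift argument of this kind on blocker counts; that is the only place where uniformity over adaptive schedules is actually achieved.
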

As mentioned above, when the temperature is some predetermined sequence that decreases over time, the MP algorithm is also known as Simulated Annealing. Therefore, by considering this temperature scheduling, \cref{thm:dense} bounds the best approximation ratio SA can achieve. As discussed, this bound precisely matches the best-known results that follow from NP-hardness and again serves as proof of their prediction. The theorem goes beyond SA and, unsurprisingly, shows that there is no way to change the temperature schedule (even if one is allowed to make changes during execution) to go beyond the hurdle suggested by NP-hardness results. Adaptive changes to the temperature in the SA algorithm have been suggested before~\cite{ingber1989very}. We are not aware of previous rigorous results about the benefits or limitations of adaptivity when using these methods to efficiently solve NP-hard optimization problems. 

The proof of \cref{thm:dense} appears in \cref{sec:failure}.

\paragraph{Results for bipartite graphs.}
The hard instances for \cref{thm:sparse,thm:dense} will be introduced in \cref{sec:failure}. A key feature of our construction is that the hard instances are constructed from bipartite graphs. These graphs are then augmented by blowing up some vertices into cliques, losing the bipartite structure. Given our construction, it is also interesting to study the performance of MP on general bipartite graphs. The point is that, in this case, there exists a simple linear time algorithm to obtain a $\frac{1}{2}$ approximation of $\alpha(G)$ by finding a bipartition. Furthermore, the standard linear programming relaxation for the independent set problem can recover the exact size of $\alpha(G)$ in polynomial time. Keeping in mind the tractability of the problem for bipartite graphs, it seems natural to expect that there exists some variant of MP that will fare similarly in these instances. On the contrary, our next result shows that in general, MP with any temperature schedule fails to come close to the performance of the mentioned algorithms.
\begin{theorem} \label{thm:bipartite}
    Let $d_n \leq \frac{\log(n)}{100}$. There exists a sequence of \emph{bipartite} graphs $\{G_n\}_{n \geq 0}$ satisfying.
    \begin{itemize}
        \item $G_n$ has $\Theta(n)$ vertices, average degree $\Theta(d_n)$.
        \item For any temperature schedule, which can be adaptive, if $\{I_t\}_{t \geq 0}$ is the process of independent sets maintained by MP, then
		$$\PP\left(\max\limits_{t \leq e^{n^\eta}}|I_t| \geq (4 + o(1))\frac{\log(d_n)}{d_n}\alpha(G_n)\right) \leq e^{-n^\eta},$$
		for some $\eta >0$.
    \end{itemize}
\end{theorem}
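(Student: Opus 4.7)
The plan is to construct a bipartite hard instance in which balanced independent sets are forced to be small via a first moment density bound, and then to show that MP, initialized from the empty set, maintains a roughly balanced split between the two sides under any adaptive temperature schedule for exponentially many steps.

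For the construction, I would take $G_n=(L\cup R,E)$ to be a suitably (pseudo)random bipartite graph with $|L|=|R|=\Theta(n)$ and edge density $\Theta(d_n/n)$, so that $G_n$ has average degree $\Theta(d_n)$ and $\alpha(G_n)\geq\max(|L|,|R|)=\Theta(n)$. A first moment calculation---counting pairs $(A,B)$ with $A\subseteq L$, $B\subseteq R$, $|A|=|B|=k$, and no edges between them, which gives expected count at most $(en/k)^{2k}e^{-k^{2}d_n/n}$---then shows that with high probability every independent set $I$ with $|I\cap L|$ and $|I\cap R|$ of comparable order satisfies
\[
|I|\;\le\;(4+o(1))\,\frac{\log(d_n)}{d_n}\,n.
\]
The constant $4$ appears because the balanced count involves two binomial factors instead of one, exactly doubling the threshold one obtains for a one-sided independent set; this matches the theorem after noting that $\alpha(G_n)=\Theta(n)$. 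The hypothesis $d_n\leq\log(n)/100$ enters precisely to make this first moment bound concentrated enough to fail only on an event of probability $e^{-n^{\eta}}$.

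I would then argue that, started from $I_0=\emptyset$, MP keeps the signed imbalance $\Phi_t:=|I_t\cap L|-|I_t\cap R|$ at $o(n)$ for all $t\le e^{n^{\eta}}$ under any temperature schedule. Since the distribution of $G_n$ is invariant under swapping $L$ and $R$ and the empty state is fixed by this swap, I would couple the MP trajectory with its $L\leftrightarrow R$-reflected copy; the drift of $\Phi_t$ then vanishes by symmetry, regardless of the (state-dependent) fugacity chosen by an adaptive adversary. A concentration argument in the spirit of Azuma's inequality then gives $|\Phi_t|=o(n)$ with the stated failure probability, and combining this with the balanced-IS density bound yields $|I_t|\le(4+o(1))\frac{\log(d_n)}{d_n}\alpha(G_n)$ throughout the run, proving the theorem.

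The main obstacle is ruling out that an adaptive schedule exploits the mean-field instability of the balanced fixed point: for $\lambda d_n$ large enough, a linearization around the balanced state has a growing mode in the imbalance direction, so the naive symmetric coupling by itself may not suffice to control $\Phi_t$ on exponential time scales. Resolving this likely requires refining the construction beyond a bare random bipartite graph---bipartiteness precludes the clique-blowup trick used in \cref{thm:sparse}, so the refinement must come entirely from the edge structure (for instance a specific pseudorandom construction or a disjoint union of carefully chosen bipartite gadgets)---and adapting the adversarial-temperature techniques of \cite{chen2022almost} to keep $\Phi_t$ controlled throughout.
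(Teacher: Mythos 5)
Your construction and the balanced-independent-set density bound are on the right track: the first-moment count over pairs $(A,B)$ with $|A|=|B|=k$ and the resulting threshold $(2+o(1))\frac{\log d_n}{d_n}n$ per side (hence $4$ overall) is exactly the paper's Lemma~5.5, and the idea of controlling the signed imbalance between the two sides via a martingale/Azuma argument is also how the paper proceeds --- but only for the \emph{randomized greedy} algorithm, not for MP itself. The genuine gap is in your second step. The $L\leftrightarrow R$ reflection symmetry only shows that $\Phi_t$ and $-\Phi_t$ are equal in distribution (averaged over the graph and the process), which gives $\EE[\Phi_t]=0$ but says nothing about $|\Phi_t|$ being small: for a fixed typical graph, once a fluctuation creates an imbalance, the conditional drift can amplify it (the instability you yourself flag), and $\Phi_t$ is not a bounded-difference martingale to which Azuma applies. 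You acknowledge this obstacle but do not resolve it, and resolving it for a bare random bipartite graph would in fact settle a question the paper explicitly lists as open in its conclusion --- whether MP/SA finds large (indeed optimal) independent sets in random bipartite graphs. So the proposal as written does not close.

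The paper sidesteps this entirely by a reduction rather than a direct analysis. It first proves the imbalance/martingale bound only for randomized greedy (fugacity $\equiv\infty$, so no deletions and the two-dimensional process $(L_t,R_t)$ admits an explicit compensator), concluding that randomized greedy gets stuck at $(4+o(1))\frac{\log d_n}{d_n}n$ on a random bipartite graph. It then converts this into a hardness result for MP with \emph{any} adaptive schedule via a bipartite blowup: each vertex becomes a ``cloud'' (independent set) of size $K$, edges become complete bipartite graphs, and many disjoint copies are taken. Once a cloud is occupied, a birth--death/gambler's-ruin bound shows it is exponentially unlikely ever to empty, so on all but a vanishing fraction of copies MP couples exactly with randomized greedy on the base graph. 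This is the missing ingredient your plan gestures at when you say the refinement ``must come entirely from the edge structure'': you would need either this blowup-to-greedy reduction or some genuinely new way to control the MP imbalance directly.
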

\cref{thm:bipartite} implies that there exist $n$-vertex bipartite graphs for which SA cannot efficiently approximate the size of the largest independent set within a ratio better than $O\left(\frac{1}{\log(n)}\right)$. We did not attempt to optimize the hardness ratio as our main point is that MP, with any temperature (thus also covering SA), fails to find an approximate solution even in instances where the independent set problem is tractable. It is possible that stronger inapproximability results hold for SA: It might be that it fails to efficiently approximate the independent set problem in $n$-vertex bipartite graphs within a ratio larger than $1/n^c$ for some $c \in (0,1)$. Studying this question is left for future work.

The proof of \cref{thm:bipartite} appears in \cref{sec:bipartite}. 

\paragraph{Performance of Simulated Annealing on trees.}
\cref{thm:bipartite} shows that, even on tractable instances, MP and its variants achieve significantly worse approximation when compared to polynomial-time algorithms. In our final hardness result, we further emphasize this point by considering the, arguably, easiest class of graphs for the independent set problem: trees. Trees are a strict and simpler sub-class of bipartite graphs. A simple greedy algorithm will return the maximum independent set in polynomial time~\cite{kleinberg2006algorithm}. In our next theorem, we give a complete characterization of the performance of MP, with a non-adaptive temperature schedule (like in SA), on trees. In particular, we show that MP is not competitive with polynomial-time algorithms:
with less than an exponential number of iterations, there are trees where it will fail to find \emph{the} maximum independent set. We complement this hardness result by establishing that MP can return an arbitrarily good \emph{approximation} to $\alpha(G)$ in polynomial time. 
\begin{theorem} \label{thm:forest}
    The following hold:
       \begin{itemize}
          \item There exists a sequence of $n$-vertex trees $\{F_n\}_{n \ge 0}$ such that for any constant $\eta \in (0,\frac{1}{4})$, if $\{I_t\}_{t \geq 0}$ is the process of independent sets maintained by MP with any non-adaptive sequence of fugacities,
            $$\PP\left(\max\limits_{t \leq \exp(n^\eta)}|I_t| = \alpha(F_n)\right) \leq e^{-\Omega(\sqrt{n})}.$$
        \item For any constant $\eps \in (0,1)$ and any $n$-vertex forest $F_n$, there exists $\lambda = \lambda(\eps,n)$ such that, if $\{I_t\}_{t \geq 0}$ is the process of independent sets maintained by MP with fixed fugacity $\lambda$,
             $$\PP\left(\max\limits_{t \leq \mathrm{poly}(n)}|I_t| \geq (1-\eps)\alpha(F_n)\right) \geq 1 - o(1).$$
        \end{itemize}
\end{theorem}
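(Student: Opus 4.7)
For the first part (hardness), the plan is to construct $F_n$ as a tree assembled from roughly $\sqrt n$ mildly-interacting hardness gadgets of size $\sqrt n$ each, joined by a short backbone. Each gadget is a small tree possessing a unique maximum independent set and a distinguished ``central'' vertex whose presence in the IS blocks that maximum. For any fugacity $\lambda$, the gadget should exhibit one of two failure modes: for small $\lambda$, the hardcore stationary measure on the gadget places exponentially small weight on its max IS, so MP rarely visits it; for large $\lambda$, MP's greedy-like early behavior places the central vertex in the IS with constant probability, after which its removal requires time $\Omega(\lambda)$. To handle an arbitrary non-adaptive schedule $\{\lambda_t\}$, one argues that any strategy to remove the central vertex by lowering $\lambda_t$ also enters the regime where the max IS is too rare to ever be visited, so neither failure mode can be avoided by any schedule.

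To extend per-gadget failure to the whole tree, I would note that dynamics on well-separated sub-gadgets in a tree are approximately independent on the $\exp(n^\eta)$ time scale, since spatial correlations in the hardcore model decay with graph distance. A coupling with genuinely independent dynamics on the sub-gadgets incurs only a small error, and amplifying the constant per-gadget failure probability across $\sqrt n$ nearly-independent gadgets yields the claimed $e^{-\Omega(\sqrt n)}$ concentration.

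For the second part (positive result), the plan is to choose $\lambda = \lambda(\epsilon)$ a sufficiently large constant depending only on $\epsilon$, and combine two observations. First, for any forest $F_n$ and $\lambda$ large enough in terms of $\epsilon$, the hardcore stationary measure $\mu_\lambda$ concentrates on independent sets of size at least $(1-\epsilon)\alpha(F_n)$: the total weight of IS of size below $(1-\epsilon)\alpha$ is bounded by $\sum_{k \le (1-\epsilon)\alpha} \binom{n}{k} \lambda^k$, which is $o(1)$ times $\lambda^\alpha$ when $\lambda \ge 2^{C/\epsilon}$ for a suitable constant $C$, using the bound $\alpha \ge n/2$ that holds for any forest by bipartiteness. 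Second, the Glauber dynamics for the hardcore model on any forest mixes in $\poly(n)$ time at any fugacity, a consequence of the treewidth-$1$ structure of trees (e.g., via block dynamics on subtrees). Combining these gives that after $\poly(n)$ steps, $I_t$ is close in total variation to $\mu_\lambda$, so $|I_t| \ge (1-\epsilon)\alpha(F_n)$ with probability $1-o(1)$.

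The main obstacle lies in the hardness part. Designing a gadget that is simultaneously hard in both fugacity regimes, and robust under arbitrary non-adaptive interleaving of these regimes, requires care: the adversary must be unable to separately ``unlock'' the two failure modes by switching $\lambda_t$ between low and high values. A Lyapunov-type potential tracking distance from the gadget's max IS, combined with a two-phase coupling argument, should formalize the intuition above. A secondary difficulty is establishing the near-independence of gadgets within a single tree on the prescribed time horizon. For the positive result, the main subtlety is proving polynomial mixing uniformly over fugacities on forests of possibly unbounded maximum degree, which should be tractable via existing techniques for bounded-treewidth graphs.
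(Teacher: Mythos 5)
Your proof of the second item is essentially the paper's argument: pick $\lambda$ a large constant depending on $\eps$, bound the stationary weight of independent sets of size below $(1-\eps)\alpha(F_n)$ using $\alpha(F_n)\ge n/2$, and invoke polynomial mixing of the dynamics on treewidth-$1$ graphs. That part is fine.

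The first item, however, has a genuine gap, and it is exactly the step you yourself flag as ``the main obstacle.'' Your plan rests on a gadget that fails in two separate fugacity regimes (small $\lambda$: the maximum IS has small stationary weight; large $\lambda$: a blocking vertex gets stuck), together with an unspecified ``Lyapunov-type potential'' to rule out schedules that interleave the two regimes. This is not a proof sketch with a routine hole; it is the entire difficulty of the statement, and the proposed dichotomy does not obviously close. In the small-$\lambda$ regime a stationary-measure bound does not by itself control the probability that a non-stationary chain started from $\emptyset$ \emph{ever visits} the maximum IS within $\exp(n^\eta)$ steps, and a non-adaptive adversary can alternate $\lambda_t$ so that neither regime's argument applies for long stretches. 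The paper avoids this entirely with a single mechanism that is uniform in $\lambda_t$: the gadget is a star of $k$ length-$2$ paths $r - a_i - b_i$, whose unique maximum IS is $\{r\}\cup\{b_i\}$. After a short burn-in, $\Omega(k^{1/2-\eps})$ of the branches hold $a_i$; each branch then evolves (under an auxiliary process that never adds $r$) as an independent three-state chain on $\{\cA,\cB,\emptyset\}$ in which an inductive argument shows $\PP(\text{state}=\cA)\ge 1/4$ at every step, for \emph{every} fugacity sequence with $\lambda_t\ge 1$, because removal happens with probability at most $1/(2\lambda_t)\le 1/2$ and re-entry is symmetric between $a_i$ and $b_i$. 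Hence at any fixed time the root is blocked except with probability $(3/4)^{|I|}$, and a union bound over $\exp(k^{1/2-2\eps})$ steps finishes the per-gadget claim. No case split on $\lambda$ is needed.

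Your amplification step also leans on the wrong tool. Approximate independence of gadgets via ``decay of spatial correlations in the hardcore model'' concerns the stationary measure, not the trajectory of a non-stationary, arbitrarily-scheduled chain, and it is not established (nor needed) here. The paper instead takes $k$ \emph{disjoint} copies of the gadget: conditional on how the update times are allocated among copies, the dynamics on distinct copies are exactly independent, which amplifies the constant per-copy failure probability to $e^{-\Omega(k)}=e^{-\Omega(\sqrt n)}$. The forest is then made into a tree by attaching one new vertex to all the roots, and one checks the extra vertex can only further block roots from being added. If you want to salvage your outline, replace the two-regime gadget with one whose blocking event has probability bounded below uniformly in $\lambda_t$, and replace correlation decay with exact independence via disjointness.
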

The proof of \cref{thm:forest} can be found in \cref{sec:tree}.

\paragraph{Greedy algorithms vs.~MP.}
As a final remark, one may wonder if there are instances where MP (when restricted to polynomially many iterations) is superior to the well-studied greedy~\cite{halldorsson1994greed} and randomized greedy~\cite{gamarnik2010randomized} algorithms for the maximum independent set problem. If this was not the case, one could prove lower bounds for MP by constructing hard instances for greedy algorithms. While greedy algorithms were shown to achieve comparable results to MP for certain problems~\cite{jerrum1998metropolis,carson2001hill} we provide simple examples in \cref{sec:greedyvsmetro} where greedy algorithms achieve significantly worse approximation compared to MP in approximating the independent set problem.

\subsection{Proof approach} \label{sec:technique}

We first explain our proof approach for \cref{thm:sparse,thm:dense}  which establish the failure of the Metropolis process for finding large independent sets in graphs of given edge density characterized by the average degree. 
We prove this by carefully constructing a family of random graphs. 
Naively, we would hope to use Erd\"os-R\'enyi random bipartite graphs which are significantly unbalanced as bad instances. More specifically, suppose that the vertex set is partitioned into $V = L\cup R$ and every edge connects one vertex from $L$ and one from $R$. Ideally, we want to have $|L| \ll |R|$, and be able to show that the Metropolis process is more likely to pick up vertices in $L$, and thus reaches independent sets mostly contained in $L$. 
However, a moment of thought immediately shows this cannot be the case. Especially, if in each step the Metropolis process picks a vertex \emph{uniformly at random}, then vertices in $R$ are more likely to be chosen and MP will get independent sets of large overlap with $R$, which is nearly optimal. 

Instead of simply using an Erd\"os-R\'enyi random bipartite graph, we further augment it with a blowup construction. More specifically, we replace each vertex $u \in L$ from the smaller side with a clique of size $\ell$ and connect this clique to all neighbors of $u$. We pick $\ell$ sufficiently large so that $\ell |L| \gg |R|$. 
This immediately provides two advantages. First, in every step, MP is more likely to choose vertices from the new $L$,  now a disjoint union of $|L|$ cliques of size $\ell$, because $\ell|L| \gg |R|$. Second, independent sets of the blowup graph are in one-to-one correspondence to independent sets of the original graph since each clique can have at most one occupied vertex. Furthermore, it is much more difficult to remove an occupied vertex in order to make the corresponding clique unoccupied, since the MP has to pick the correct occupied vertex among all vertices in the clique. These properties are formalized in \cref{lem:generalMP}.

We can then argue that the MP will not be able to find a large independent set for these blowup graphs with polynomially many steps. Suppose that $|L|=n$ and $|R| = kn$ and that $\ell \gg k \gg 1$. 
Then, after a suitable burn-in phase, we will show that the MP will pick at most a tiny fraction of vertices in $R$, and at least a constant fraction of vertices in $L$. Thus, within the burn-in phase, MP only reaches independent sets mostly contained in $L$. In particular, there is a set $L_1\subseteq L$ of at least $n/10$ occupied vertices in $L$, and a set $R_0 \subseteq R$ of at least $(k-1)n$ unoccupied vertices in $R$, at the end of the burn-in phase (see \cref{lem:burn-in}). These vertices induce a smaller Erd\"os-R\'enyi random bipartite graph and the MP on it with the initialization $L_1$ will contain vertices mostly from $L_1$ and barely from $R_0$, via simple conductance (i.e., bottleneck) arguments. 
Thus, within polynomially many steps the MP cannot reach independent sets with too many vertices from $R_0$. 
In fact, the obtained independent set contains at most $n$ vertices from $R_0$ with high probability (see \cref{lem:after-burn-in}), and consequently has size at most $3n$ since $|L|=n$ and $|R \setminus R_0| \le n$. 
Meanwhile, $R$ is an independent set of size $kn \gg 3n$, exhibiting the failure of MP.
We remark that the whole argument works even for constant $k,\ell$ and edge density $O(1/n)$ so that the average degree is constant, though all ``$\gg$'' will be replaced with explicit inequalities.

Our construction of \emph{bipartite}, appearing in \cref{thm:bipartite}, is based on the $t$-blowup operation. In this blowup, every vertex is replaced by an independent set (``cloud") of size $t$ and two clouds that correspond to neighboring vertices (before the blowup) are connected by a complete bipartite graph. The main observation is that once a vertex from a cloud is chosen, MP is much more likely to keep adding vertices to the cloud as opposed to deleting vertices from it. By taking many (identical) duplicates of the $t$-blowup of an initial bipartite graph, we get that for a large fraction of the duplicates no cloud is deleted (assuming a vertex from it is chosen by the algorithm) in polynomial time, hence resulting in essentially the randomized greedy algorithm where deletions do not occur. To conclude the proof we need to provide a \emph{bipartite} graph for which randomized greedy does badly: This can serve as the ``base graph" on which we perform the blow-up and duplication. We prove that the random balanced bipartite graph of size $2n$ with edge probability $d/n$, for a large enough constant\footnote{Our lower bounds can be extended to $d=O(\log(n))$.} $d > 0$, is with high probability a hard instance for randomized greedy. Our proof follows a martingale argument and may be of independent interest. The limitation of greedy algorithms for coloring (and implicitly independent set) has been observed before~\cite{kuvcera1991greedy} for random multipartite graphs $m$-vertex graphs where the partition includes $m^{\Omega(1)}$ parts. We are not aware of a previous hardness result for the randomized greedy for approximating the size of the maximum independent set in bipartite graphs.  

For trees, the core ingredient of the hard instance is a ``star-shaped'' tree composed of a root $r$ connected to $k$ nodes 
$a_1, \ldots, a_k$, and each node $a_i$ has a single leaf neighbor $b_i$. The unique maximum independent set consists of $r$ together with all the $v_i$'s. In the first $m=k^{1/2-\eps}$ iterations, MP will add roughly $m/2$ neighbors of $r$. Let $I$ denote the set of indices $i$ for these chosen neighbors. The crux of the argument is to track the configuration of the branches $I$ and show that they behave roughly like an i.i.d.\ collection of random variables supported on three states (free, $a_i$ chosen, $b_i$ chosen) where the probability $a_i$ is chosen is at least $1/4$. It follows that with high probability, \emph{some} $a_i$ will be occupied for exponential time, blocking the root $r$ from ever being added. The argument is completed by duplicating many copies of the hard tree, ensuring the probability that the process runs for less than exponentially many iterations is exponentially small. Finally, the forest can be made into a tree by connecting all the roots of the trees to a single additional vertex. The upper bound, showing that MP can efficiently approximate the optimum solution in a tree within a factor of $1-\eps$ for arbitrary $\eps \in (0,1)$, is a simple consequence of the rapid mixing of MP for trees~\cite{chen2023combinatorial,eppstein2021rapid}, taking $\lambda$ to be a sufficiently large constant to ensure the partition function is concentrated on large independent sets. 
\section{Further Related work} \label{sec:related}

One of the earliest works studying lower bounds for the Metropolis process is due to Jerrum~\cite{jerrum1992large}. In his work, he considered the planted clique problem where one seeks to find a clique of size $n^{\beta}$ for $\beta \in (0,1)$ planted in the Erd\"os-R\'enyi random graph $G(n,\frac{1}{2})$. Jerrum proved using a conductance argument the existence of an initial state for which the Metropolis process for cliques fails to find a clique of size $(1+\varepsilon)\log n$ assuming $\beta<\frac{1}{2}$ so long as it is executed for less than $n^{\Omega(\log (n))}$ iterations. 

Several open questions were raised in~\cite{jerrum1992large} regarding whether one could prove the same lower bound for MP when initialized from the empty set, whether the same lower bound holds for arbitrary $\beta<1$ (as opposed to $\beta<\frac{1}{2}$) and whether similar lower bounds could be extended to SA as opposed to MP. The recent paper~\cite{chen2022almost} made the first substantial progress towards answering Jerrum's question; when the inverse temperature satisfies $\lambda \leq O(\log (n))$, the super-polynomial lower bounds hold with respect to MP initialized from the empty set even when $\beta<1$. Under the same assumptions, the result of~\cite{chen2022almost} also applies to SA (initialized from the empty set) for a certain temperature scheduling termed simulated tempering~\cite{marinari1992simulated}. The lower bounds in~\cite{chen2022almost} do not rule out that MP could solve the planted clique problem (even for $\beta <1/2$) when the inverse-temperature is set to $C \log n$ for a suitable constant $C$. Establishing a lower bound on the running time of MP for every possible temperature is mentioned as an open question in~\cite{chen2022almost}.

It is natural to compare the results in~\cite{chen2022almost} to our lower bounds, such as \cref{thm:sparse,thm:dense} which apply without any restriction on the temperature and establish exponential (as opposed to quasi-polynomial) time lower bounds. Moreover in the dense case, which is the analog of $G(n,\frac{1}{2})$, our bounds go beyond the restriction of simulated tempering and cover any sequence of temperatures. It should be noted however that~\cite{chen2022almost} focused on resolving Jerrum's questions, while our work is geared towards proving general lower bounds. Thus, for example, in the planted clique problem a quasi-polynomial lower bound is the best one can hope for, as MP can be shown to solve the problem with high probability in $n^{O(\log (n))}$ iterations.  In contrast, to prove our lower bounds we choose carefully crafted instances of random distributions on graphs, which allows for more flexibility in establishing lower bounds.

In~\cite{coja2015independent} exponential lower bounds were proven with respect to the \emph{mixing time} of MP for independent sets in sparse random graphs $G(n,\frac{d}{n})$ assuming $d$ is a large enough constant.
Observe however that lower bounds on the mixing time do not imply lower bounds for the time it takes MP to encounter a good approximation of $\alpha(G)$. It is still open whether MP (with empty set initialization) fails to find an independent set of size $(1+\varepsilon) \frac{\log(d)}{d}n$ in polynomial time in $G(n,\frac{d}{n})$. While the setting and proofs in~\cite{coja2015independent} are different from those in our paper, a common theme in both papers is that a barrier for MP is the need to delete many vertices from a locally optimal solution in order to reach a superior approximation to the optimum.

Few additional lower bounds are known for the time complexity of SA when used to approximately solve combinatorial optimization problems. Sasaki and Hajek~\cite{sasaki1988time} proved that, for certain instances, when searching for a maximum matching in a graph, the running time of SA can be exponential in the number of vertices (even when initialized from the empty set). Let us stress the fact that the lower bound in \cite{sasaki1988time} concerns finding an \emph{exact} solution, rather than approximating the solution. Moreover, their family of hard instances cannot be used to prove SA requires super-polynomial time to approximate maximum matching (and hence the independent set problem) within a factor of $\alpha$ for a fixed $\alpha \in (0,1)$, as they prove that for every fixed $\varepsilon \in (0,1)$ MP yields a $1-\varepsilon$ multiplicative approximation for the maximum matching problem in polynomial time. A different proof showing that MP can find a $1-\varepsilon$ approximation for maximum matching in polynomial time, was discovered later in ~\cite{jerrum1989approximating}. 

In~\cite{wegener2005simulated} it is shown, by providing a family of hard instances, that SA cannot find in polynomial time a $1+o(1)$ multiplicative approximation for the minimum spanning tree problem. However, this result cannot be extended in a substantial way to show that SA fails to find a $1+\eps$ approximation for \emph{fixed} $\eps>1$: For the MST problem (with non-negative weights), it was proven in~\cite{doerr2022simulated} that SA can find a $1+\varepsilon$ approximation for the optimal solution in polynomial time for any fixed $\varepsilon >0 $, extending an earlier result of~\cite{wegener2005simulated}.
In another direction,~\cite{bhatnagar2004torpid} proves exponential lower bounds on the mixing time of SA-based algorithms, which can be seen as a crucial hindrance for finding approximate solutions. 

In terms of approximation ratios that can be achieved in polynomial time by SA, \cite{johnson1989optimization,johnson1991optimization} provide extensive empirical simulations for SA when applied to NP-hard optimization problems such as min bisection and graph coloring. In terms of rigorous results regarding the approximation ratio achieved efficiently by SA in the worst case,~\cite{gharan2011submodular} provides an algorithm inspired\footnote{The authors in~\cite{gharan2011submodular} mention that ``We should remark that our algorithm is somewhat different from a direct interpretation of simulated
annealing." For more details, see~\cite{gharan2011submodular}.} by SA that achieves in polynomial time a $0.41$-approximation for unconstrained submodular maximization and a $0.325$-approximation for submodular maximization subject to a matroid independence constraint. In~\cite{sorkin1991efficient} it is proven that certain properties associated with the energy landscape of a solution space may lead SA to efficiently find the optimal solution. 

Compared to SA, somewhat more is known regarding lower bounds for MP. Sasaki~\cite{sasaki1991effect} introduced families of $n$-vertex instances of min-bisection and the traveling salesman problem (TSP) where MP requires exponential time to find the optimal minimum solution. Sasaki's proof is based on a ``density of states" argument. The main step is to show, via a conductance argument, that when the number of optimal solutions is smaller by an exponential multiplicative factor than the number of near-optimal solutions, there exists an initialization where the expected hitting time of the optimal solution is exponential in $n$. It is explicitly mentioned in~\cite{sasaki1991effect} that the proof methods do not imply lower bounds for SA. While our hard instance for trees has the property that the number of optimal solutions of value OPT is smaller by an exponential factor than the number of solutions of value OPT$-1$, our proof that SA requires exponential time to find the optimal solution differs from the proofs in~\cite{sasaki1991effect} (indeed our proof applies for SA whereas the proof in~\cite{sasaki1991effect} applies only for the fixed temperature case).  
The paper~\cite{meer2007simulated} contains constructions of instances of the traveling salesman problem (TSP) where MP takes exponential time to find the optimal solution but SA with an appropriate cooling schedule finds a tour of minimum cost in polynomial time. Both~\cite{sasaki1991effect,meer2007simulated} prove lower bounds for \emph{exact} computation of the optimum and do not prove lower bounds for algorithms approximating the optimum. 
An informative survey of these and additional results related to SA and MP can be found in~\cite{auger2011theory}.

There is an extensive literature on efficient approximation algorithms for the independent set problem. As mentioned, for general $n$-vertex graphs, it is NP-hard to approximate $\alpha(G)$ within a factor of $\frac{1}{n^{1-\varepsilon}}$ for any $\varepsilon \in (0,1)$~\cite{hastad1996clique,zuckerman2006linear}. Under a certain complexity-theoretic assumption it was shown in~\cite{khot2006better} 
that approximating the size of an independent set in $n$-vertex graphs within a factor larger than $2^{(\log n)^{3/4+\gamma}}/n$ (for arbitrary $\gamma>0$) is impossible. The current best efficient approximation algorithm for the independent set problem achieves a ratio of $\Omega(\frac{\log(n)^3}{n \log \log (n)^2})$~\cite{feige2004approximating}. For graphs with average degree $d$ it has long been known that a simple greedy algorithm achieves an approximation of $\Omega\left(\frac{1}{d}\right)$. This bound has been gradually improved~\cite{halldorsson1994improved,halperin2002improved}, and the state of the art~\cite{bansal2015lovasz} is an algorithm based on the Sherali-Adams hierarchy achieving an approximation ratio of $\Tilde{\Omega}\left(\frac{\log(d)^2}{d}\right)$
for graphs of maximum degree $d$ (the $\Tilde{\Omega}$ sign hides $\mathrm{poly}(\log \log d)$ multiplicative factors). The running time of this algorithm is polynomial in $n$ and exponential in $d$. This matches, up to $\mathrm{poly}(\log \log d)$ factors, the lower bound in \cite{bhangale2022ug} showing that obtaining an approximation ratio larger than $O\left(\frac{\log ^2(d)}{d}\right)$ is NP-hard in graphs of maximum degree $d$ (assuming $d$ is a constant independent of $n$). In contrast to these lower bounds our lower bounds for the time complexity needed to find approximate solutions apply to specific algorithms (MP and SA). On the other hand, our results are unconditional and also apply to instances (such as bipartite graphs) where polynomial-time algorithms are known to find the optimal solution. 

\section{The Metropolis Process} \label{sec:prelim}

In this section, we introduce the different variants of the Metropolis process for which we prove lower bounds. Our description differs slightly from the algorithm described in the introduction; we will work on a logarithmic scale for the temperature and parametrize the algorithm by the inverse of the temperature, also called \emph{fugacity}. Ultimately this is a choice for convenience and the two descriptions are equivalent.

All considered algorithms for finding (random) large independent sets of a given input graph fall under one very general stochastic process, called the \emph{Universal Metropolis Process (UMP)}. Among others, UMP incorporates the Randomized Greedy algorithm, the Metropolis process, and the Simulated Annealing process. 
Let $G=(V,E)$ be an input graph, and $T \in \N$ be the number of steps of the process.
For each $t \ge 1$, let $\lambda_t\in [1,\infty]$ be  the fugacity, or inverse-temperature, at time $t$. 
We refer to the collection $\{\lambda_t\}_{t=1}^\infty$ of all fugacities as the \emph{fugacity schedule} of the process. With these definitions, the Universal Metropolis Process is described by \cref{alg:MP,alg:MP-1step}.

\begin{algorithm} [t]
\caption{Universal Metropolis Process (UMP)}\label{alg:MP}
\begin{algorithmic}[1]
\Input $G=(V,E)$ a graph, $T$ the number of steps, 
$\{\lambda_t\}_{t=1}^T$ a fugacity schedule

\State $I_0 \gets \emptyset$;
\Comment{Empty IS initialization}

\For{$t \gets 1 \text{~to~} T$}
\State Sample a vertex $v_t \in V$ u.a.r.;
\State Sample a real $\zeta_t \in [0,1]$ u.a.r.; 
\State $I_t \gets \textsc{Update}(I_{t-1},v_t,\zeta_t, \lambda_t)$;
\Comment{\cref{alg:MP-1step}}
\EndFor

\Output $I_{t^*}$ where $t^* = \argmax\limits_{t \in \{1,\dots,T\}} |I_t|$
\end{algorithmic}
\end{algorithm}

\begin{algorithm} [t]
\caption{$\textsc{Update}(I,v,\zeta,\lambda)$}\label{alg:MP-1step}
\begin{algorithmic}[1]

\Input $I$ an IS, $v$ a vertex, $\zeta \in [0,1]$, $\lambda \in [1,\infty]$

\If{$v \notin I$}
    \State $
    I' \gets
    \begin{cases}
    I \cup v, & \text{if $I \cup v$ is an IS}; \\
    I, & \text{otherwise}.
    \end{cases}
    $
\Else ~(i.e., $v \in I$)
	\State $
	I' \gets 
	\begin{cases}
	I \setminus v, & \text{if $\zeta \le \frac{1}{\lambda}$}; \\
	I, & \text{otherwise}.
	\end{cases}
	$
\State 
\EndIf
\Output $I'$
\end{algorithmic}
\end{algorithm}

We observe that:
\begin{itemize}
    \item If $\lambda_t \equiv \infty$ for all $t>0$, then UMP corresponds to the Randomized Greedy algorithm.
    
    \item If $\lambda_t \equiv \lambda$ for all $t>0$ and for some $\lambda \in [1,\infty)$, independent of $t$, then UMP corresponds to the Metropolis process with fugacity $\lambda$, which is a Markov chain whose stationary distribution is the Gibbs distribution $\mu$ of the hardcore model on $G$ (i.e., $\mu(I) \propto \lambda^{|I|}$ for each independent set $I$ of $G$).
    
    \item If $\{\lambda_t\}$ forms a predetermined non-decreasing sequence, i.e. $\lambda_1 \le \lambda_2 \le \cdots$, then UMP corresponds to the Simulated Annealing algorithm with fugacity schedule $\{\lambda_t\}$.

\end{itemize}

Of course, UMP goes beyond these three well-known cases and allows for, say, non-monotone fugacities schedules. In general, if $\{\lambda_t\}$ is some arbitrary deterministic sequence or if it is random, but independent from the randomness of $\{I_t\}$, we shall call it a \emph{non-adaptive} schedule. On the other hand, if $\lambda_t$ can depend on $\{I_s\}_{s=1}^{t-1}$ (and so it is necessarily random), we call the schedule \emph{adaptive}.

\section{Failure of Metropolis Process for Finding Large Independent Sets} \label{sec:failure}

In this section, we prove \cref{thm:sparse,thm:dense}. Before delving into the proof, we first introduce a family of random graphs that shall serve as hard instances for the theorems. We then show that the process on independent sets have similar behavior during the early stages, which we term the burn-in phase. The proofs will diverge when considering sparse and dense graphs.

\subsection{Construction of the hard instance}
\label{subsec:construction}
We first define the family of random graphs.
\paragraph{Parameters.}
Our family of random graphs is parameterized by three positive integers $n,k,\ell \in \N^+$, and a real $p \in (0,0.1)$ where $p$, $\ell$, and $k$ are all functions of $n$, and we enforce the following relations: assume $n,k$ are sufficiently large,
\begin{equation} \label{eq:relations}
 \frac{50 \log k}{n} \leq p = o(1)
 \qquad\text{and}\qquad
 \ell \ge 10 kpn.
\end{equation}
As will be soon apparent our hard instances are constructed from random bipartite graphs. With this in mind, $n$ should be viewed as, roughly, the number of vertices, $k$ and $\ell$ then encode the imbalance between the two sides, and $p$ is the edge density.

\paragraph{Base random graph $\mathcal{B}(n, kn, p)$.}
Define the base graph $B=(V_B,E_B)$ to be a random bipartite graph $\mathcal{B}(n, kn, p)$, namely:
\begin{itemize}
\item $V_B = L\cup R$ where $|L| = n$ and $|R| = kn$;
\item For every $u \in L$ and $v \in R$, include the edge $\{u,v\}$ in $E_B$ with probability $p$ independently .
\end{itemize}

\paragraph{The random instance $\mathcal{G}(n,\ell,k,p)$.}
Define the hard instance $G=(V_G,E_G)$ as follows: 
\begin{itemize}
\item Sample a base graph $B=(V_B=L\cup R,E_B) \sim \mathcal{B}(n, kn, p)$;
\item Replace every vertex $u \in L$ with a clique $K_u$ of size $\ell$ 
(namely, $V_G$ consists of one independent set $R$ of size $kn$ and $n$ cliques each of size $\ell$);
\item For every edge $\{u,v\} \in E_B$ where $u \in L$ and $v \in R$, include the edges $\{u',v\}$ for all $u' \in K_u$ (namely, fully connect the clique $K_u$ with $v$).
\end{itemize}
The following facts, which characterize the basic properties of the random graph $\mathcal{G}(n,k)$, can be readily seen from the construction.
\begin{fact} \label{fct:fact}
Let $G=(V_G,E_G)$ be a random graph generated from $\mathcal{G}(n,\ell,k,p)$.
\begin{itemize}
\item The size of $G$ is $N := |V_G| = kn + \ell n = \Theta(\ell n)$.
\item With high probability, the average degree of $G$ satisfies $D := \frac{2|E_G|}{|V_G|} = \Theta(\ell)$. 
\item The independence number of $G$ is $\alpha(G)= \Omega(kn)$.
\end{itemize}
\end{fact}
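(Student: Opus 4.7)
The plan is to verify the three bullet points directly from the construction, since this is labeled as a \emph{fact} that should ``readily be seen.'' All three items are essentially bookkeeping, with only the average-degree estimate requiring a short concentration argument.

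First I would handle the vertex count. By construction, $V_G = R \cup \bigsqcup_{u \in L} K_u$ with the two pieces disjoint, so $|V_G| = kn + \ell n$ deterministically. To conclude $N = \Theta(\ell n)$, I use the parameter relations \eqref{eq:relations}: since $p \ge 50 \log k / n$ and $k$ is large, $pn \ge 50 \log k \gg 1$, and therefore $\ell \ge 10 k p n \gg k$. This gives $\ell n \gg k n$, whence $N = (\ell + k)n = \Theta(\ell n)$.

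Next I would compute $|E_G|$ and hence the average degree. The edge set of $G$ splits into two groups: (i) the clique edges, contributing exactly $n \binom{\ell}{2} = \Theta(n \ell^2)$; and (ii) the ``blown-up'' bipartite edges, contributing $\ell \cdot |E_B|$, since each edge $\{u,v\} \in E_B$ is replaced by $\ell$ edges from $K_u$ to $v$. Since $|E_B|$ is a Binomial$(kn^2, p)$ random variable with mean $kn^2 p \ge 50 k n \log k \to \infty$, a standard Chernoff bound gives $|E_B| = (1 \pm o(1)) k n^2 p$ with high probability, so the bipartite contribution is $\Theta(\ell k n^2 p)$ w.h.p. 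Now the parameter relation $\ell \ge 10 k p n$ yields
\[
n \ell^2 \;=\; (n \ell)\cdot \ell \;\ge\; (n \ell)\cdot 10 k p n \;=\; 10\, \ell k n^2 p,
\]
so the clique contribution dominates, and $|E_G| = \Theta(n \ell^2)$ w.h.p. Dividing by $|V_G| = \Theta(\ell n)$ gives $D = \Theta(\ell)$, as claimed.

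Finally I would argue the independence-number bound. By construction every edge of $G$ is either internal to some clique $K_u$ or has exactly one endpoint in $R$; in particular $R$ spans no edges. Thus $R$ itself is an independent set of $G$, and so $\alpha(G) \ge |R| = kn$, giving $\alpha(G) = \Omega(kn)$ deterministically. No step here is a real obstacle; the only item requiring more than set-theoretic manipulation is the concentration of $|E_B|$, and since its mean diverges polynomially in $n$, a single application of a multiplicative Chernoff bound for Binomials suffices.
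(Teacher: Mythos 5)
Your proposal is correct and follows essentially the same route as the paper: count vertices directly, split the edges into clique edges and blown-up bipartite edges, use $\ell \ge 10kpn$ to see the clique edges dominate so that $\E[|E_G|]=\Theta(\ell^2 n)$, and note that $R$ is an independent set of size $kn$. Your explicit Chernoff bound for $|E_B|$ just fills in the concentration step the paper states without detail.
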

\begin{proof}
The number of vertices is immediate.
For the average degree, we first note that the expected number of edges is given by
\begin{align*}
\E[|E_G|]
= p \cdot kn \cdot \ell n + \frac{1}{2} \ell(\ell-1) \cdot n
= \Theta(\ell^2 n)
\end{align*}
where we use $\ell \ge 10 kpn$.
Thus, the expected average degree is given by
\begin{align*}
\E[D] = \frac{2\E[|E_G|]}{|V_G|} 
=\Theta(\ell),
\end{align*}
and the average degree concentrates around it with high probability.
Finally, $\alpha(G) \ge kn$ since $R$ is an independent set of size $kn$.
\end{proof}

\begin{remark}\label{rmk:max-degree}
If $k,\ell = O(1)$ and $p = O(1/n)$, then the size of $G$ is $N=\Theta(n)$ and the average degree of $G$ is $\Theta=O(1)$ with high probability. Moreover, the maximum degree of $G$ is $\Theta(\frac{\log n}{\log\log n})$ with high probability, which follows from similar arguments as in e.g.\ \cite[Theorem 3.4]{frieze2016introduction}.
\end{remark}
 
The main idea that we shall utilize is that the number of vertices in $\cup_{u \in L} K_u$ is larger by a pre-determined factor ${\frac{\ell}{k}}$ than the size of the (nearly largest) independent set $R$. Thus one should expect UMP to reach independent sets mostly contained in $\cup_{u \in L} K_u$ which have size at most $n$. As we shall show these independent sets serve as a natural bottleneck for the UMP to reach $R$. 

\subsubsection{Continuous-time Universal Metropolis Process} 
For our analysis, it will be slightly more convenient to work with a continuous-time version of UMP defined on the base graph $\mathcal{B}(n, kn, p)$. 
As we will see in \cref{lem:generalMP} soon, these two versions are ultimately equivalent to each other. 
To introduce the continuous-time process, let $G=(V,E)$ be a graph, and define the following parameters:
\begin{itemize}
\item $\rho: V \to \R^+$: vertex update rate, for determining the rate of updates at each vertex;
\item $\lambda_t: V \to [1,\infty)$: vertex fugacity function at time $t \in \R_{\ge 0}$.
\end{itemize}
Recall that the discrete-time Universal Metropolis Process is given in \cref{alg:MP}.
The continuous-time version is described as follows.  
\begin{itemize}
    \item At each vertex $v \in V$, we place an independent Poisson clock (i.e., Poisson point process) with rate $\rho(v)$;
    \item Whenever a Poisson clock of some vertex $v_t$ rings at time $t \in \R_{\ge 0}$, we sample a real $\zeta_t \in [0,1]$ u.a.r.\ and update the current IS by $I_t \gets \textsc{Update}(I_t,v_t,\zeta_t,\lambda_t(v_t))$ using \cref{alg:MP-1step}.
\end{itemize}
We remark that we use the continuous-time UMP only as an auxiliary process for the purpose of analysis, and thus we do not consider implementation of it. 
If the rates $\rho(v)$ are uniform for all $v$, then this corresponds to the scenario that all vertices are equally likely to be updated, which is the standard case. 
If for each $t$ the vertex fugacity function $\lambda_t(v)$ is uniform for all $v$, then this corresponds to the standard setting for UMP. 
However, for our analysis of UMP we require a non-uniform update distribution on vertices and a non-uniform fugacity function of vertices.

We now make the formal connection between the discrete-time Universal Metropolis process on $\mathcal{G}(n,\ell,k,p)$ and the continuous-time process on $\mathcal{B}(n, kn, p)$.

\begin{lemma} \label{lem:generalMP}
For any pair of graphs $(B,G)$ where $B$ is a bipartite graph and $G$ is constructed from $B$ as described in \cref{subsec:construction}, 
and for any real $T \in \R_{\ge 0}$, 
the discrete-time UMP on $G$ and the continuous-time UMP on $B$ are equivalent to each other under the following setups:
\begin{itemize}
\item Run the discrete-time UMP on $G$ with fugacity schedule $\{\lambda'_{t'}\}_{t'=1}^\infty$ for $T'$ steps where $T' \in \N$ is an independent Poisson random variable with mean $(k+\ell)nT$. Let $I'$ denote the (random) output independent set at time $T'$. 
\item Run the continuous-time UMP on $B$ up to time $T$ with 
\[
\rho(v) = 
\begin{cases}
\ell, & v \in L\\
1, & v \in R
\end{cases}
\]
and for all $t \in \R_{\ge 0}$, 
\[
\lambda_t(v) = 
\begin{cases}
\ell\lambda_{t'}(v), & v \in L\\
\lambda_{t'}(v), & v \in R
\end{cases}
\]
where $t' \in \N$ is the total number of clock rings at all vertices in $[0,t]$.
Let $I$ denote the (random) output independent set at time $T$. 
\end{itemize}
Let $\varphi$ be a mapping which maps an independent set $I'$ of $G$ to an independent set $\varphi(I')$ of $B$ by replacing a vertex from $K_u$ in $I'$ with $u$ in $\varphi(I')$ for each $u \in L$. 
Then $\varphi(I')$ and $I$ have the same distribution.
\end{lemma}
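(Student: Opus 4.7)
The plan is to construct a pathwise coupling of the two processes under which $\varphi(I'_{t'(s)}) = I_s$ at every coupled time $s$; distributional equality at $s=T$ then follows. The argument proceeds in three steps: Poissonization of the discrete process, projection of the clocks via $\varphi$, and matching of the Update dynamics case by case.

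First I would Poissonize the discrete UMP on $G$. Since $T' \sim \mathrm{Poi}((k+\ell)nT)$ is independent of everything else and the discrete UMP selects vertices uniformly from $V_G$, the resulting random sequence of selections is equivalent in distribution to the arrival sequence of independent rate-$1$ Poisson clocks placed on each vertex of $G$ over the interval $[0,T]$. Next I would bundle clocks along $\varphi$: for each $u \in L$, the $\ell$ independent rate-$1$ clocks on the vertices of $K_u$ superimpose into one rate-$\ell$ clock at $u$ whose ringing vertex is uniformly distributed on $K_u$; for $v \in R$ the rate-$1$ clock is unchanged. These rates are exactly the $\rho(v)$ prescribed in the lemma for the continuous UMP on $B$, and the step counter $t'$ in the Poissonized process on $G$ equals the total ring count of the continuous process on $B$.

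Second, I would verify that the Update rule on $G$ pushes forward through $\varphi$ to the Update rule on $B$ with the prescribed fugacities. For a ring at $v \in R$, the $G$-neighbors of $v$ lying in $I'$ are exactly the vertices of cliques $K_u$ with $u \sim_B v$ and $K_u \cap I' \ne \emptyset$, which under $\varphi$ are precisely the $B$-neighbors of $v$ in $\varphi(I')$; hence additions succeed in $G$ iff they succeed in $B$, and when $v \in I'$ both processes remove $v$ with probability $1/\lambda'_{t'}$, matching $\lambda_t(v)=\lambda'_{t'}$. For a ring at $u$ in $B$ coupled to a uniformly chosen $w \in K_u$ in $G$: if $K_u \cap I' = \emptyset$, then $w \notin I'$ and the addition $I' \cup \{w\}$ is an IS in $G$ iff $\varphi(I') \cup \{u\}$ is an IS in $B$; if $K_u \cap I' = \{x\}$, then only the event $\{w=x\}$, of probability $1/\ell$, can alter the state, and conditional on it the removal coin succeeds with probability $1/\lambda'_{t'}$. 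Therefore the marginal probability of removing $u$ per ring of the merged clock at $u$ equals $1/(\ell \lambda'_{t'})$, which matches the continuous Update on $B$ using the rescaled fugacity $\lambda_t(u)=\ell \lambda'_{t'}$.

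Finally, an induction on the a.s.\ finite number of rings in $[0,T]$ shows that the coupled identity $\varphi(I'_{t'(s)}) = I_s$ propagates through every ring, and evaluating at $s=T$ yields the claimed distributional equality. The only real subtlety — and the point where I would focus the write-up — is the case $K_u \cap I' = \{x\}$: the $\ell-1$ rings in $K_u$ that target a vertex other than $x$ are \emph{null} because the attempted addition is blocked by the clique edge to $x$, and it is precisely these null rings that convert the rate-$\ell$ clock at $u$ into an effective rate-$1$ update on $\varphi(I')$, compensated exactly by the fugacity boost $\ell \lambda'_{t'}$. Everything else is routine bookkeeping once the Poissonization and clock-merging are set up.
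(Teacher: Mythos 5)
Your proposal is correct and follows essentially the same route as the paper's (very terse) proof: uniform vertex selection on $G$ becomes $\ell$-weighted selection on $L$ after collapsing each clique $K_u$, the blocked additions inside an occupied clique account for the fugacity rescaling $\lambda \mapsto \ell\lambda$ on $L$, and Poissonization with $T'\sim\mathrm{Poi}((k+\ell)nT)$ identifies the discrete step sequence with the continuous clock rings. Your case analysis of $K_u\cap I'=\emptyset$ versus $K_u\cap I'=\{x\}$, and in particular the observation that the $\ell-1$ null rings compensate exactly for the boost $\ell\lambda'_{t'}$, is precisely the verification the paper declares "straightforward."
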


\begin{proof}
It is straightforward to verify that the discrete-time UMP on $G$ is equivalent to the following generalized version of discrete-time UMP on $B$:
In each step $t$, we sample a vertex $v_t \in V_B$ from a distribution $p$ defined as $p(v) = \frac{\ell}{(k+\ell)n}$ for $v \in L$ and $p(v) = \frac{1}{(k+\ell)n}$ for $v \in R$ (instead of u.a.r.) and sample a real $\zeta_t \in [0,1]$ u.a.r.; 
Then we update $v_t$ using a non-uniform fugacity defined as $\lambda_t(v) = \ell \lambda_t$ for $v \in L$ and $\lambda_t(v) = \lambda_t$ for $v \in R$. 
The equivalence follows from that we replace each vertex $u \in L$ in $B$ with a clique $K_u$ of size $\ell$ in $G$ and fully connect $K_u$ with the neighborhood of $u$.
Then, the lemma follows from that such discrete-time UMP on $B$ is equivalent to the continuous-time UMP on $B$ as described in the lemma, which can be readily verified.
\end{proof}

In the rest of this section, we consider only the continuous-time UMP on $B$ and simply refer to it as MP.

\subsubsection{A simple lemma for random processes on random graphs}
The last component we shall need, before analyzing the dynamics of MP on our hard instances, is a simple technical lemma, which implies the existence of a hard instance from the hardness of a random distribution.
Let $\mathcal{Q} = \mathcal{Q}_T$ denote the randomness of MP on $B$ up to a fixed time $T$, which consists of the records of all rings of Poisson clocks and all uniformly random numbers in $[0,1]$ that are used in updates. 
Importantly, $\mathcal{Q}$ is independent of (the random edge set of) the random bipartite graph $B$ generated from $\mathcal{B} = \mathcal{B}(n, kn, p)$.
In the following lemma we consider MP on an arbitrary random graph with a fixed vertex set and a random edge set. 

\begin{lemma}\label{lemma:simple}
Let $\mathcal{E}$ be any event of MP on a random graph $\mathcal{B}$. If we have
\begin{align}\label{eq:PP_BR}
\PP_{\mathcal{B},\mathcal{Q}} \left( \mathcal{E} \right) \le \eta^2,
\end{align}
then with probability at least $1-\eta$, a random graph $B \sim \mathcal{B}$ satisfies 
\begin{align}\label{eq:PP_R-cond-B}
\PP_{\mathcal{Q}} \left( \mathcal{E} \mid B \right) \le \eta.
\end{align}
\end{lemma}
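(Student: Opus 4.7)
The plan is to treat the conditional probability $\PP_{\mathcal{Q}}(\mathcal{E} \mid B)$ as a $[0,1]$-valued random variable on the space $\mathcal{B}$, and then apply Markov's inequality. Concretely, I would define $X(B) := \PP_{\mathcal{Q}}(\mathcal{E} \mid B)$ for each realization $B$ of the random graph. Because the randomness $\mathcal{Q}$ driving MP (the Poisson clock rings and the uniform random numbers) is by construction independent of the random edge set of $B$, the joint probability factors cleanly: conditioning on $B$ fixes the graph but leaves the MP-randomness intact, so $X(B)$ is well-defined as a measurable function of $B$.

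Next I would apply the tower/law-of-total-probability identity
\[
\EE_{B \sim \mathcal{B}}[X(B)] \;=\; \EE_{B \sim \mathcal{B}}\bigl[\PP_{\mathcal{Q}}(\mathcal{E} \mid B)\bigr] \;=\; \PP_{\mathcal{B},\mathcal{Q}}(\mathcal{E}) \;\le\; \eta^2,
\]
where the last inequality is the hypothesis. Since $X(B) \ge 0$, Markov's inequality immediately yields
\[
\PP_{\mathcal{B}}\bigl( X(B) > \eta \bigr) \;\le\; \frac{\EE_{B}[X(B)]}{\eta} \;\le\; \frac{\eta^2}{\eta} \;=\; \eta.
\]
Taking complements gives that with probability at least $1-\eta$ over $B \sim \mathcal{B}$, we have $X(B) = \PP_{\mathcal{Q}}(\mathcal{E} \mid B) \le \eta$, which is exactly the desired conclusion.

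There is essentially no serious obstacle here: the only point to be careful about is the independence of $\mathcal{Q}$ from $\mathcal{B}$, which is what licenses writing $\PP_{\mathcal{B},\mathcal{Q}}(\mathcal{E}) = \EE_B[\PP_{\mathcal{Q}}(\mathcal{E} \mid B)]$ without any further measure-theoretic subtlety. After that, the argument is just a one-line application of Markov's inequality to the non-negative random variable $X(B)$.
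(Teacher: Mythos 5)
Your proposal is correct and follows exactly the same route as the paper: the tower property gives $\EE_{\mathcal{B}}[\PP_{\mathcal{Q}}(\mathcal{E}\mid B)] = \PP_{\mathcal{B},\mathcal{Q}}(\mathcal{E}) \le \eta^2$, and Markov's inequality applied to the nonnegative random variable $\PP_{\mathcal{Q}}(\mathcal{E}\mid B)$ finishes the argument. No gaps.
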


The benefits of \cref{lemma:simple} is that, 
to prove \cref{eq:PP_R-cond-B} which could be challenging to show directly, we only need to show \cref{eq:PP_BR} by constructing a revealing procedure for both $\mathcal{B}$ and $\mathcal{Q}$ simultaneously.

\begin{proof}[Proof of \cref{lemma:simple}]
Observe that
\[
\eta^2 \ge 
\PP_{\mathcal{B},\mathcal{Q}} \left( \mathcal{E} \right)
= \EE_{\mathcal{B},\mathcal{Q}} \left[ \mathbbm{1}_\mathcal{E} \right]
= \EE_{\mathcal{B}} \left[ \EE_{\mathcal{Q}} \left[ \mathbbm{1}_\mathcal{E} \mid B \right] \right]
= \EE_{\mathcal{B}} \left[ \PP_{\mathcal{Q}} \left( \mathcal{E} \mid B \right) \right].
\]
Therefore, Markov's inequality gives
\begin{align*}
\PP_{\mathcal{B}} \left( \PP_{\mathcal{Q}} \left( \mathcal{E} \mid B \right) \ge \eta \right)
\le \PP_{\mathcal{B}} \left( \PP_{\mathcal{Q}} \left( \mathcal{E} \mid B \right) \ge \frac{1}{\eta} \cdot \EE_{\mathcal{B}} \left[ \PP_{\mathcal{Q}} \left( \mathcal{E} \mid B \right) \right] \right) 
\le \eta,
\end{align*}
as claimed.
\end{proof}

\subsection{Analysis of the burn-in phase}

We now describe the first steps taken by the MP on our random construction. Crucially, our results do not depend at all on the fugacity and hence hold uniformly for any schedule, even adaptive ones. 
For the rest of this section, we fix the parameters $n,\ell,k,$ and $p$ satisfying $\eqref{eq:relations}$, and consider our random graph $\mathcal{G}(n,\ell,k,p)$, as well as the continuous-time process, as described by \cref{lem:generalMP} on the base graph $\mathcal{B}(n,kn,p)$. 

\begin{lemma}[Burn-in Phase]	
\label{lem:burn-in}
Consider the revealing procedure given in \cref{alg:burn-in} with 
\[
T = T_{\textsc{burn}} := \frac{1}{8kpn}.
\] 
Then, for every schedule $\{\lambda_t\}_{t \in [0,T]}$, with probability $1-e^{-\Omega\left(\frac{1}{p}\right)}$, 
all of the followings hold.
\begin{enumerate}
\item There exists $R_0 \subseteq R$ with $|R_0| \ge (k-1)n$ and $R_0 \cap I_T = \emptyset$.
\item There exists $L_1 \subseteq L$ with $|L_1| \ge \frac{n}{10}$ and $L_1 \subseteq I_T.$
\item The adjacency between $L_1$ and $R \setminus R_0$ is revealed and there is no edge between them. 
\item The adjacency between $L_1$ and $R_0$ is completely unrevealed and is independent of all revealed information. 
\end{enumerate}

\end{lemma}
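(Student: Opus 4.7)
My plan is to realize Algorithm \textsc{burn-in} as follows: first sample the Poisson clock rings of every vertex up to time $T$ together with all the uniform bits used in removal decisions (these are independent of the random edges of $B$), then define $A:=\{v\in R:v\text{ rings at least once in }[0,T]\}$, reveal all edges of $B$ in $L\times A$, and finally run the continuous-time MP to time $T$. Because an $R$-vertex can only enter $I$ at its own ring, $I_t\cap R\subseteq A$ for every $t$, so every conflict check performed by the process uses only edges in $L\times A$, which are already revealed. Setting $R_0:=R\setminus A$ then makes parts 3 and 4 of the lemma immediate: $R_0\cap I_T=\emptyset$, and the edges in $L\times R_0$ are unrevealed and hence, by the independence of edges in $\mathcal{B}(n,kn,p)$, independent of everything revealed. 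For part 1, $|A|\sim\mathrm{Binomial}(kn,1-e^{-T})$ has mean at most $knT = 1/(8p)$, so a Chernoff bound yields $\Pr(|A|>n)\le e^{-\Omega(1/p)}$, which gives $|R_0|\ge(k-1)n$ with the desired probability.

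For part 2, define $L':=\{u\in L:N(u)\cap A=\emptyset\}$ and take $L_1:=I_T\cap L'$, which automatically satisfies part 3 since $L_1\subseteq L'$ has no edge to $R\setminus R_0=A$. To lower-bound $|L_1|$, I will construct an explicit subset $L'_{\mathrm{safe}}\subseteq L_1$. For each $u\in L$ let $M_u$ denote its number of rings in $[0,T]$, a Poisson variable of mean $\ell T\ge 5/4$ (thanks to $\ell\ge 10kpn$), and let $\xi_{u,1},\xi_{u,2},\ldots\in[0,1]$ be the i.i.d.\ uniform bits used at these rings. Define
\begin{align*}
L'_{\mathrm{safe}}:=\Bigl\{u\in L' : M_u\ge 1 \text{ and } \xi_{u,k}>\tfrac{1}{\ell} \text{ for all } 2\le k\le M_u\Bigr\}.
\end{align*}
A vertex $u\in L'$ never encounters a conflict in the process, so it is added at its first ring, and at each subsequent ring the process removes it only when $\xi_{u,k}\le 1/(\ell\lambda_t)\le 1/\ell$, \emph{irrespective} of how $\lambda_t$ is chosen. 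Consequently $u\in L'_{\mathrm{safe}}$ forces $u\in I_T$, i.e., $L'_{\mathrm{safe}}\subseteq L_1$ deterministically.

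Conditional on $A$, the event $\{u\in L'_{\mathrm{safe}}\}$ depends only on $u$-local randomness---the edges from $u$ to $A$, the clock of $u$, and the bits $\xi_{u,\cdot}$---so these events are mutually independent across $u\in L$. A short computation using $M_u\sim\mathrm{Poisson}(\ell T)$ shows $\Pr(u\in L'_{\mathrm{safe}}\mid A)\ge c(1-p)^{|A|}$ for a universal constant $c>0$, and on the high-probability event $|A|\le 1/(4p)$ one has $(1-p)^{|A|}\ge(1-p)^{1/(4p)}$, which is bounded below by a universal constant for $p\le 0.1$. Hence $\Pr(u\in L'_{\mathrm{safe}}\mid A)$ is uniformly at least some $c'>0$ on this event, and a Chernoff bound on the conditionally independent Bernoulli sum $|L'_{\mathrm{safe}}|$ yields $|L'_{\mathrm{safe}}|\ge n/10$ with conditional probability $1-e^{-\Omega(n)}$. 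Combining with the earlier tail bound on $|A|$ and using $n=\Omega(1/p)$ from \eqref{eq:relations} gives the overall failure probability $e^{-\Omega(1/p)}$. The main obstacle is concentration under an \emph{adaptive} schedule, because a naively defined indicator $\{u\in I_T\}$ depends on $\lambda_t$ and is therefore correlated across vertices through the common history; the safety-bit trick bypasses this by dominating the adaptive removal threshold $1/(\ell\lambda_t)$ by the deterministic threshold $1/\ell$, decoupling vertices at only an $O(1/\ell)$ cost that is absorbed into the constants.
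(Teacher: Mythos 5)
Your proposal is correct and follows essentially the same route as the paper: reveal the $R$-clocks to identify the rung vertices ($A = R_{\mathrm{bad}}$), reveal only the edges into $A$, restrict to $L$-vertices with no neighbor in $A$, and certify membership in $I_T$ by a schedule-independent condition on the update bits (the paper uses ``last bit $>1/3$'' where you use ``all bits after the first exceed $1/\ell$''; both dominate the adaptive removal threshold $1/(\ell\lambda_t)$). The quantitative steps (Chernoff on $|A|$ with mean $1/(8p)$, the $(1-p)^{1/(4p)}$ bound, and conditional independence across $u\in L$ given $A$) match the paper's.
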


We now follow the revealing procedure up to time $T=T_{\textsc{burn}}$ and prove each item of \cref{lem:burn-in} separately.

\begin{algorithm}[t]
	\caption{Revealing Procedure in the Burn-in Phase}\label{alg:burn-in}
	\begin{algorithmic}[1]
		\Input $n,k,T$
		\Output Part of $B \sim \mathcal{B}(n,kn,p)$ a random graph and part of $Q \sim \mathcal{Q}_T$ randomness of MP up to time $T$ 
		\Comment{Note: $B$ and $Q$ are independent}
		
		\State Reveal the records of rings of Poisson clocks at all vertices in $R$ up to time $T$. 
		
		\noindent
		Let $R_{\mathrm{bad}} \subseteq R$ be the subset of those whose clock rings at least once before time $T$, and let $R_0 = R \setminus R_{\mathrm{bad}}$. 
		\Comment{Observe that $v \notin I_T$ for all $v \in R_0$}
		
		\State Reveal the adjacency lists of all vertices in $R_{\mathrm{bad}}$.
		
		\noindent
		Let $L_{\mathrm{bad}} \subseteq L$ be the subset of vertices adjacent to $R_{\mathrm{bad}}$, and let $L_{\mathrm{good}} = L \setminus L_{\mathrm{bad}}$.
		
		\State Reveal the records of Poisson clocks, and associated $[0,1]$-uniform random variables, at all vertices in $L_{\mathrm{good}}$ up to time $T$. 
		
		\noindent
		Let $L_1 \subseteq L_{\mathrm{good}}$ be a specific subset of vertices (see \cref{lem:L1} for definition) that are occupied at time $T$. \qquad
		\Comment{Observe that the neighbors of each $u \in L_{\mathrm{good}}$ are contained in $R_0$ and are unoccupied throughout the burn-in phase.}
		
		
	\end{algorithmic}
\end{algorithm}

\begin{lemma}
Let $R_{\mathrm{bad}}$ be the set of all vertices in $R$ that are updated at least once before time $T=T_{\textsc{burn}}$. 
Define the event
\[
\mathcal{A}_1 := \left\{ |R_{\mathrm{bad}}| \le \frac{1}{4p} \right\}.
\]
Then we have $1-\PP(\mathcal{A}_1) =e^{-\Omega\left(\frac{1}{p}\right)}$.
\end{lemma}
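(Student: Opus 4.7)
The plan is to observe that $|R_{\mathrm{bad}}|$ is a sum of independent indicator random variables, one per vertex of $R$, whose distribution is explicit, and then apply a standard Chernoff bound.

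First I would note that by the construction of the continuous-time UMP on $B$, each vertex $v \in R$ carries an independent Poisson clock with rate $\rho(v)=1$. Thus the event $\{v \in R_{\mathrm{bad}}\}$ is simply the event that this Poisson clock rings at least once in the time interval $[0,T]$, which has probability $q := 1 - e^{-T}$. These events are mutually independent across $v \in R$, so $|R_{\mathrm{bad}}|$ is distributed as $\mathrm{Bin}(kn, q)$.

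Next I would bound the mean. Since $T = T_{\textsc{burn}} = \frac{1}{8kpn}$, using $1-e^{-x} \le x$ gives
\[
q \le T = \frac{1}{8kpn}, \qquad \EE[|R_{\mathrm{bad}}|] = kn \cdot q \le \frac{1}{8p}.
\]
In particular, the threshold $\frac{1}{4p}$ in $\mathcal{A}_1$ is at least twice the mean. The main (and only) technical step is a standard multiplicative Chernoff bound for binomials: for $X \sim \mathrm{Bin}(N,q)$ with mean $\mu = Nq$ and any $\delta \ge 1$, $\PP(X \ge (1+\delta)\mu) \le e^{-\delta \mu/3}$. Applying this with $\mu \le \frac{1}{8p}$ and $(1+\delta)\mu = \frac{1}{4p}$ (so $\delta \ge 1$) yields
\[
\PP\!\left( |R_{\mathrm{bad}}| > \frac{1}{4p} \right) \le \exp\!\left( - \Omega\!\left( \frac{1}{p} \right) \right),
\]
which is exactly the desired bound on $1 - \PP(\mathcal{A}_1)$.

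There is essentially no serious obstacle here: the argument is purely a tail bound on a binomial random variable that depends only on the randomness of the Poisson clocks in $R$ (and is independent of the edge set of $B$ and of the fugacity schedule, which is why the statement is schedule-free). The only minor point to keep in mind is to verify that applying the Chernoff bound with the upper estimate $\EE[|R_{\mathrm{bad}}|] \le 1/(8p)$ is legitimate, which it is because stochastically dominating a binomial by one with a larger success probability only increases the tail.
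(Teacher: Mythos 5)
Your proof is correct and follows essentially the same route as the paper: each $v \in R$ independently lands in $R_{\mathrm{bad}}$ with probability $1-e^{-T} \le T$, the mean of $|R_{\mathrm{bad}}|$ is at most $knT = \frac{1}{8p}$, and a multiplicative Chernoff bound at the threshold $\frac{1}{4p}$ (twice the mean) gives the $e^{-\Omega(1/p)}$ tail. The paper phrases this as a Chernoff bound on the sum $\sum_{v\in R}\mathbbm{1}_{\{N_T(v)\ge 1\}}$ rather than naming the binomial distribution explicitly, and it additionally remarks that $knT = 1/(8p) = \omega(1)$ (guaranteed by $p = o(1)$) so that the exponent is genuinely diverging, but these are cosmetic differences.
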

\begin{proof}
For each vertex $v \in R$, let $N_T(v)$ be the number of times that $v$ is updated before time $T$, i.e.~the number of rings of the Poisson clock at $v$ during the time interval $[0,T]$. Thus, $N_T(v)$ is a Poisson random variable with mean $T$ and we have
\[
\Pr\left( N_T(v) \ge 1 \right) = 1 - e^{-T} \le T.
\] 
Since all $N_T(v)$'s are jointly independent for $v \in R$, we deduce from the Chernoff bound that
$$\PP\left(|R_{\mathrm{bad}}| \ge \frac{1}{4p}\right)
=\PP\left( \sum_{v \in R} \mathbbm{1}_{\{N_T(v) \ge 1\}} \ge \frac{1}{4p} \right)
    = e^{-\Omega\left(\frac{1}{p}\right)},$$
where we notice $kn \cdot T = 1/(8p) = \omega(1)$.
The statement now follows.
\end{proof}

Let $R_0 = R \setminus R_{\mathrm{bad}}$ and we have $I_t \cap R_0 = \emptyset$ for all $t \le T_{\textsc{burn}}$.

\begin{lemma}
Let $L_{\mathrm{bad}}\subseteq L$ be the set of all vertices adjacent to $R_{\mathrm{bad}}$, i.e., 
\[
L_{\mathrm{bad}} = \{u \in L: \text{$u$ is adjacent to $R_{\mathrm{bad}}$}\}.
\]
Define the event
\[
\mathcal{A}_2 := \left\{ |L_{\mathrm{bad}}| \le \frac{n}{2} \right\}.
\]
Then we have $1-\PP(\mathcal{A}_2 \mid \mathcal{A}_1) = e^{-\Omega(n)}$.
\end{lemma}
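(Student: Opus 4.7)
The plan is to exploit the fact that the Poisson clock randomness $\mathcal{Q}$ and the random edge set of $B \sim \mathcal{B}(n, kn, p)$ are independent by construction. Since the event $\mathcal{A}_1$ and the set $R_{\mathrm{bad}}$ are measurable with respect to $\mathcal{Q}$ alone, conditioning on $\mathcal{A}_1$ (and even on the entire realization of $R_{\mathrm{bad}}$) leaves the bipartite edges of $B$ distributed as mutually independent $\mathrm{Bernoulli}(p)$ random variables. This reduces the problem to a standard concentration statement for independent Bernoulli trials in the $L \times R_{\mathrm{bad}}$ block.

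Next, I would fix any realization of $R_{\mathrm{bad}}$ with $r := |R_{\mathrm{bad}}| \le 1/(4p)$, which is guaranteed under $\mathcal{A}_1$. For each $u \in L$, the event $\{u \in L_{\mathrm{bad}}\}$ depends only on the $r$ independent Bernoulli trials encoding the presence of edges between $u$ and the vertices of $R_{\mathrm{bad}}$, so
\[
\PP(u \in L_{\mathrm{bad}} \mid R_{\mathrm{bad}}) \;=\; 1 - (1-p)^r \;\le\; rp \;\le\; \frac{1}{4},
\]
using the elementary inequality $(1-p)^r \ge 1 - rp$ valid for $rp \le 1$. Moreover, distinct $u, u' \in L$ involve disjoint collections of Bernoulli trials, so the indicators $\{u \in L_{\mathrm{bad}}\}_{u \in L}$ are mutually independent conditional on $R_{\mathrm{bad}}$.

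Consequently, $|L_{\mathrm{bad}}|$ is stochastically dominated by $\mathrm{Bin}(n, 1/4)$, uniformly over any realization $R_{\mathrm{bad}}$ consistent with $\mathcal{A}_1$. A standard multiplicative Chernoff bound then yields $\PP(|L_{\mathrm{bad}}| > n/2 \mid R_{\mathrm{bad}}) \le e^{-\Omega(n)}$, and averaging over $R_{\mathrm{bad}}$ conditional on $\mathcal{A}_1$ gives $1 - \PP(\mathcal{A}_2 \mid \mathcal{A}_1) \le e^{-\Omega(n)}$, as claimed.

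The only delicate point is the independence accounting: one must carefully track that $\mathcal{A}_1$ is a statement purely about the clock process and therefore does not bias the bipartite edge distribution between $L$ and $R_{\mathrm{bad}}$. Beyond this bookkeeping, the argument is a one-line Chernoff estimate, so I do not anticipate any substantive obstacle.
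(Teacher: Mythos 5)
Your proposal is correct and follows essentially the same route as the paper: condition on $R_{\mathrm{bad}}$ (which is determined by the clock randomness, independent of the edges), bound the per-vertex probability by $1-(1-p)^{|R_{\mathrm{bad}}|}\le 1/4$ via Bernoulli's inequality, and apply a Chernoff bound to the conditionally independent indicators. Your explicit bookkeeping of the independence between $\mathcal{Q}$ and the edge set is exactly what the paper's revealing procedure is designed to justify.
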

\begin{proof}
	For each $u \in L$, let $D(u)$ be the number of vertices in $R_{\mathrm{bad}}$ that are adjacent to $u$. Thus,
 conditional on $\mathcal{A}_1 = \{ |R_{\mathrm{bad}}| \le \frac{1}{4p} \}$ we have
 \[
 \Pr\left( D(u) \ge 1 \mid \mathcal{A}_1 \right) 
 \le 1 - (1-p)^{\frac{1}{4p}} 
 \le 1 - \left( 1-p \cdot \frac{1}{4p} \right) = \frac{1}{4},
 \]
 where the second inequality holds for all $p < 0.1$.
 Since the random variables $D(u)$ are jointly independent for a given $R_{\mathrm{bad}}$,
	by Chernoff's inequality, we have

\[
\PP\left( |L_{\mathrm{bad}}| \ge \frac{n}{2} \;\Big\vert\; \mathcal{A}_1 \right) 
= \PP\left( \sum_{u\in L} \mathbbm{1}_{\{D(u) \ge 1\}} \ge \frac{n}{2}  \;\Bigg\vert\; \mathcal{A}_1 \right)
= e^{-\Omega(n)},
\]
as claimed.
\end{proof}

Let $L_{\mathrm{good}} = L \setminus L_{\mathrm{bad}}$, so $|L_{\mathrm{good}}| \ge n/2$ conditional on $\mathcal{A}_1 \cap \mathcal{A}_2$. Reveal the MP restricted on $L_{\mathrm{good}}$ for $t \le T_{\textsc{burn}}$. This is exactly MP on an isolated independent set since all neighbors of $L_{\mathrm{good}}$ are contained in $R_0$, which are always unoccupied before time $T_{\textsc{burn}}$. 

\begin{lemma}\label{lem:L1}
Let $L_1$ be the set of all vertices in $L_{\mathrm{good}}$ which are updated at least once before time $T$ and that, in the last update, the associated $[0,1]$-uniform random variable $\zeta$ satisfies $\zeta > \frac{1}{3}$.
Then we have
\[
L_1 \subseteq I_T.
\]
Define the event
\[
\mathcal{A}_3 := \left\{ |L_1| \ge \frac{n}{10} \right\}.
\]
Then we have $1 - \PP(\mathcal{A}_3 \mid \mathcal{A}_1 \cap \mathcal{A}_2) = e^{-\Omega(n)}$.
\end{lemma}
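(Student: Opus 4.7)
The plan is to split the proof into two independent pieces: the deterministic containment $L_1 \subseteq I_T$, and the concentration bound $|L_1| \ge n/10$. The key observation underlying both is that on the interval $[0,T]$, no vertex of $R_0 = R \setminus R_{\mathrm{bad}}$ is ever updated, so every vertex of $R_0$ remains unoccupied throughout the burn-in, and hence every $u \in L_{\mathrm{good}}$ (whose $B$-neighborhood lies in $R_0$ by construction) has no occupied neighbor at any time $t \le T$.

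For the containment, I would examine an arbitrary $u \in L_1$ and look at its last update time $t_u \le T$. If $u \notin I_{t_u^-}$, then $I_{t_u^-} \cup \{u\}$ is automatically an independent set, so \cref{alg:MP-1step} adds $u$. If instead $u \in I_{t_u^-}$, then $u$ is removed only if $\zeta_{t_u} \le 1/(\ell \lambda_{t_u}(u))$; but the effective fugacity at $u \in L$ is $\ell \lambda_{t_u} \ge \ell \ge 10 k p n \ge 500 k\log k \ge 3$ by \eqref{eq:relations}, while the defining condition of $L_1$ is $\zeta_{t_u} > 1/3 \ge 1/(\ell \lambda_{t_u})$, so $u$ remains. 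Either way $u \in I_{t_u}$, and because no further update at $u$ occurs in $(t_u,T]$, we get $u \in I_T$.

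For the cardinality bound, I would condition on $\mathcal{A}_1 \cap \mathcal{A}_2$ and on the revealed information from Steps~1--2 of \cref{alg:burn-in}, under which $|L_{\mathrm{good}}| \ge n/2$. For each $u \in L_{\mathrm{good}}$, the Poisson clock at $u$ (rate $\ell$) and the associated $\mathrm{Uniform}[0,1]$ marks are independent of everything revealed so far, and are mutually independent across $u$. The number of updates to $u$ in $[0,T]$ is $\mathrm{Poi}(\ell T)$ with $\ell T \ge 10kpn \cdot \tfrac{1}{8kpn} = \tfrac{5}{4}$, so
\[
\Pr(u \text{ updated at least once}) \;\ge\; 1 - e^{-5/4} \;>\; 0.7.
\]
Conditional on at least one update, the mark at the \emph{last} update is uniform on $[0,1]$ independent of the clock, so $\Pr(\zeta > 1/3 \mid \text{updated}) = 2/3$. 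Hence $\Pr(u \in L_1 \mid \mathcal{A}_1 \cap \mathcal{A}_2, \text{revealed info}) \ge 0.7 \cdot 2/3 > 0.45$, and the events $\{u \in L_1\}_{u \in L_{\mathrm{good}}}$ are mutually independent.

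I would then apply a Chernoff bound to the independent Bernoulli sum $|L_1| = \sum_{u \in L_{\mathrm{good}}} \mathbbm{1}_{\{u \in L_1\}}$, whose conditional mean is at least $0.45 \cdot (n/2) > n/5$, to conclude $\Pr(|L_1| < n/10 \mid \mathcal{A}_1 \cap \mathcal{A}_2) = e^{-\Omega(n)}$. I do not anticipate a serious obstacle here: the only point requiring care is to verify the independence structure, namely that revealing $R_{\mathrm{bad}}$, the edges between $L$ and $R_{\mathrm{bad}}$, and hence $L_{\mathrm{good}}$ uses only the Poisson clocks on $R$ and the $\mathcal{B}$-randomness on edges incident to $R_{\mathrm{bad}}$, leaving the Poisson clocks at $L_{\mathrm{good}}$ and their uniform marks entirely free. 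This is precisely what \cref{alg:burn-in} is designed to guarantee.
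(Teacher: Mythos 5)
Your proposal is correct and follows essentially the same route as the paper's proof: both rest on the observation that every neighbor of a vertex in $L_{\mathrm{good}}$ lies in $R_0$ and is hence unoccupied throughout $[0,T]$, then combine the Poisson tail bound for at least one clock ring with the $2/3$ probability that the last uniform mark exceeds $1/3$ (using $\ell\lambda_t \ge 3$ from \eqref{eq:relations} to rule out removal), and finish with independence across $L_{\mathrm{good}}$ and a Chernoff bound. The only differences are cosmetic constants (you use $\ell T \ge 5/4$ where the paper uses $\ell T \ge 1$).
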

\begin{proof}
Consider an arbitrary vertex $u \in L_{\mathrm{good}}$. 
Note that by the definition of $L_{\mathrm{good}}$, all neighbors of $u$ are contained in $R_0$ and thus are always unoccupied throughout the time interval $[0,T]$.

First, we observe that the probability that $u$ is updated at least once before time $T$ is 
\[
1 - e^{-\ell T} = 1 - e^{-\frac{\ell}{8kpn}} \ge 1 - e^{-1} \ge \frac{1}{2},
\]
where we use $\ell \ge 8kpn$ from \cref{eq:relations}.

Second, conditioned on $u$ being updated at least once before time $T$, 
the probability of $\zeta > \frac{1}{3}$ is $2/3$ where $\zeta$ is the associated $[0,1]$-uniform random variable in the last time when $u$ is updated (i.e.~when the Poisson clock at $u$ rings).
Note that when this happens it must hold $u \in I_T$.
To see this, observe that if $u$ is unoccupied right before the last update then $u \in I_T$, and if $u$ occupied before the last update then $\frac{1}{\ell \lambda_t} \le 1/3 < \zeta$ implies that $u$ will not be removed.
This justifies $L_1 \subseteq I_T$.

Combining the arguments above, we conclude that
\[
\Pr\left( u \in L_1 \mid u \in L_{\mathrm{good}} \right) \ge \frac{1}{3}.
\]
Note that given $L_{\mathrm{good}}$, all $\mathbbm{1}_{\{u \in L_1\}}$'s are jointly independent for $u \in L_{\mathrm{good}}$.
Therefore, an application of Chernoff bounds yields 
\[
\PP\left( |L_1| \le \frac{n}{10} \;\Big\vert\; \mathcal{A}_1 \cap \mathcal{A}_2 \right) 
=\PP\left( \sum_{u \in L_{\mathrm{good}}} \mathbbm{1}_{\{u \in L_1\}} \le \frac{n}{10} \;\Bigg\vert\; \mathcal{A}_1 \cap \mathcal{A}_2 \right)
= e^{-\Omega(n)}.
\]
This establishes the lemma.
\end{proof}

After the burn-in phase, we reveal the subgraph $B'$ induced on $W := L_1 \cup R_0$, and consider MP restricted on $B'$ with left-occupied and right-unoccupied initialization, i.e. with the independent set $L_1$ as the initialization.

\subsection[Lower bounds for the Metropolis process]{Lower bounds for the Metropolis process: Proof of \cref{thm:sparse}}


In this subsection, we prove \cref{thm:sparse}. 
We consider the case of standard Metropolis process on $G$ with uniform fugacity, i.e., $\lambda_t = \lambda$ for some fixed $\lambda \ge 1$.
Note that, however, as in \cref{lem:generalMP} the corresponding process on $B$ has different fugacities for different vertices.
We show that MP fails to find a large independent set of $G$ under any fixed fugacity $\lambda \ge 1$. 
In particular, \cref{thm:MP-all} below recovers \cref{thm:sparse} from the introduction.

\begin{theorem}[Main result for MP] 
\label{thm:MP-all}
For every increasing sequence $D=D_N$ such that $D = \Omega(1)$ and $D = o(\frac{N}{\log^2 N})$, there exists a sequence of graphs $\{G_N\}$ satisfying:
\begin{itemize}
\item $G_N$ has $\Theta(N)$ vertices, average degree $\Theta(D)$, and independence number $\alpha(G_N) = \Omega(\frac{N}{\log D})$;
\item For any fugacity $\lambda \ge 1$, the largest independent set that the Metropolis process can find within $\exp(O(\frac{N}{D}))$ steps has size $\Theta(\frac{N}{D})$, with probability at least $1-\exp(-\Omega(\frac{N}{D\log D}))$.
\end{itemize}
\end{theorem}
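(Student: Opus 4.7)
The plan is to specialize the random graph construction $\mathcal{G}(n,\ell,k,p)$ from \cref{subsec:construction} to parameters adapted to $(N,D)$ and then combine the equivalence in \cref{lem:generalMP}, the burn-in lemma \cref{lem:burn-in}, and a post-burn-in bottleneck estimate for the restricted chain. I would set $n=\Theta(N/D)$, $k=\Theta(D/\log D)$, $\ell=\Theta(D)$, and $p=\Theta((\log k)/n)$, which satisfy the relations in \eqref{eq:relations} under the stipulated $D=\Omega(1)$ and $D=o(N/\log^2 N)$. Then \cref{fct:fact} immediately yields $|V_G|=\Theta(N)$, average degree $\Theta(D)$, and $\alpha(G)\ge kn=\Omega(N/\log D)$, delivering the structural half of the theorem. \cref{lem:generalMP} converts the discrete-time MP on $G$ at fugacity $\lambda$ into a continuous-time MP on $B\sim\mathcal{B}(n,kn,p)$ with $L$-rate $\ell$ and fugacity $\ell\lambda$, $R$-rate $1$ and fugacity $\lambda$. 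Under this correspondence a continuous horizon of $T=\exp(\Omega(n))$ on $B$ is matched by a Poisson-distributed discrete horizon on $G$ with mean $\Theta(NT)=\exp(\Omega(N/D))$, handling the time conversion at the end.

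Next I would apply \cref{lem:burn-in} at time $T_{\textsc{burn}}=1/(8kpn)$, which produces with failure probability $e^{-\Omega(1/p)}=e^{-\Omega(N/(D\log D))}$ sets $L_1\subseteq L\cap I_{T_{\textsc{burn}}}$ of size at least $n/10$, $R_0\subseteq R\setminus I_{T_{\textsc{burn}}}$ of size at least $(k-1)n$, and the crucial property that the adjacency in $B$ between $L_1$ and $R_0$ is still fresh i.i.d.\ Bernoulli$(p)$ and independent of everything revealed so far. Since trivially $|I_t\cap L|\le n$ and $|I_t\cap (R\setminus R_0)|\le|R_{\mathrm{bad}}|=O(1/p)=O(n/\log k)$ for every $t$, the theorem reduces to establishing the following post-burn-in bottleneck statement: with probability $1-e^{-\Omega(N/(D\log D))}$, the continuous-time MP on $B[L_1\cup R_0]$ started from $I_0=L_1$ satisfies
\[
\max_{T_{\textsc{burn}}\le t\le T_{\textsc{burn}}+T}|I_t\cap R_0| < n,
\]
which then forces $|I_t|\le 3n=O(N/D)$ throughout the window.

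The heart of the proof, and the main obstacle, is this bottleneck statement, which must hold uniformly over all fugacities $\lambda\ge 1$. The plan is to first reveal the still-independent $L_1\times R_0$ edges and use Chernoff plus a union bound to obtain, with probability $1-e^{-\Omega(n/\log k)}$, the expansion property that every $v\in R_0$ has $L_1$-degree $d_v\in[c\log k,\,C\log k]$ for universal constants $0<c<C$. I would then introduce the stopping time $\tau=\inf\{t>T_{\textsc{burn}}:|I_t\cap L_1|<|L_1|/2\text{ or }|I_t\cap R_0|\ge n\}$ and prove $\PP(\tau\le T_{\textsc{burn}}+T)\le e^{-\Omega(n/\log k)}$. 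While $t<\tau$, both conditions hold, so the $L_1$-marginal has strong drift back to full occupation (birth rate $\ell$ versus death rate $1/\lambda$ per $L_1$ vertex), mimicking a product chain with stationary $L_1$-occupation $\ell\lambda/(1+\ell\lambda)\ge 1/2$. Under this product-measure comparison, each $v\in R_0$ is addable only when all $\ge c\log k$ of its $L_1$-neighbors are simultaneously unoccupied, an event of instantaneous probability $(1+\ell\lambda)^{-d_v}\le k^{-\Omega(1)}$; summing over $v$ gives an aggregate $R_0$-addition rate of at most $|R_0|\cdot k^{-\Omega(1)}$. A standard birth-death tail estimate on $|I_t\cap R_0|$ with this birth rate and death rate $|I_t\cap R_0|/\lambda$ then yields the desired bound for any $T=\exp(\Omega(n))$, while a symmetric large-deviation bound on the effective birth-death chain for $|I_t\cap L_1|$ controls the other component of $\tau$.

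The main subtlety I expect to confront in the last step is the feedback loop between $L_1$ and $R_0$: once $R_0$-vertices enter $I_t$, they block some $L_1$-vertices from rejoining, which can in turn allow more $R_0$-additions. I would handle this by a coupling/stopping-time argument that restarts the product-chain comparison at each change in $I_t\cap R_0$, using the preceding birth-death bound to show that the total number of such changes stays $O(n)$ on the event $\{\tau>T\}$ and then closing the loop with a union bound over these restarts. Finally, translating the continuous-time horizon $T=\exp(\Omega(n))$ back into a discrete horizon $\exp(\Omega(N/D))$ on $G$ is routine via Poisson concentration of the number of clock rings, and the overall failure probability collects to $e^{-\Omega(N/(D\log D))}$ as required.
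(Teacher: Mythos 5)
Your setup — the parameter choices, the use of \cref{fct:fact}, \cref{lem:generalMP}, and \cref{lem:burn-in}, and the Poisson time conversion — matches the paper. The gap is in the post-burn-in bottleneck, where you replace the paper's argument with a per-vertex degree/drift scheme, and that scheme breaks in exactly the regime the theorem is about. Your expansion claim that every $v\in R_0$ has $L_1$-degree in $[c\log k, C\log k]$ is false outside of very dense graphs: with $p=\Theta(\log k/n)$ and $|L_1|=\Theta(n)$, each $v\in R_0$ has $\PP(d_v=0)=(1-p)^{|L_1|}=k^{-\Theta(1)}$, and since $|R_0|=\Theta(kn)$ the expected number of $v\in R_0$ with \emph{no} $L_1$-neighbor is $\Theta(n\,k^{-O(1)})$ — a positive fraction of $n$ when $D$ (hence $k$) is constant, and still polynomially large unless $D\gtrsim N^{c}$. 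Likewise there are many vertices with $d_v=1$ or $2$. These low-degree vertices are addable a constant (or $\ell^{-O(1)}$) fraction of the time regardless of the state of $L_1$, so over an $\exp(\Omega(n))$ horizon they all enter $I_t$; the ones with $d_v\ge 1$ then block $L_1$-vertices, which is precisely the feedback cascade you flag but do not control. Your proposed aggregate addition rate $|R_0|\cdot k^{-\Omega(1)}$ and the ensuing birth--death bound therefore do not hold, and the "restart" device cannot repair a per-vertex minimum-degree hypothesis that is simply violated.

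The paper avoids per-vertex reasoning entirely. It proves a \emph{set-level} first-moment bound: with probability $1-e^{-\Omega(n)}$ the revealed bipartite graph $B'$ contains no independent set with $|I\cap L_1|=n/20$ and $|I\cap R_0|=n/20$ simultaneously (this tolerates isolated and low-degree vertices in $R_0$, since it only constrains sets of linear size). It then couples the true chain, via the monotone coupling of \cref{lem:monotone}, to an idealized chain on $B'$ started from the stationary distribution conditioned on $\{|J\cap R_0|\le n/20\}$, and applies a standard conductance bound (\cref{lem:bottleneck}): the stationary weight of the boundary layer $\{|J\cap R_0|=n/20\}$ is $e^{-\Omega(n)}$ relative to the conditioned set because, by the structural property, any such $J$ has $|J\cap L_1|<n/20$ and hence weight at most $(\ell\lambda)^{n/20}\lambda^{n/20}\ll(\ell\lambda)^{n/10}$. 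The stationary initialization is essential there, since conductance controls escape probabilities only from stationarity; the monotone coupling is what transfers the bound to the actual worst-case post-burn-in state $L_1$. To complete your proof you would need to adopt something like this set-level bottleneck (or find another argument that does not require $R_0$-vertices to individually have $\Omega(\log k)$ neighbors in $L_1$).
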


To establish \cref{thm:MP-all} we need \cref{lem:burn-in} for the burn-in phase and also the following lemma for steps after burn-in.
Recall that $B'$ is the subgraph induced on $W = L_1 \cup R_0$ which has not been revealed yet in the burn-in phase. 
After burn-in, we reveal the subgraph $B'$ (so that the whole graph $B$ is now fully revealed) and consider MP on $B$ with a special type of initialization (i.e., $I_0 \cap W = L_1$), as described by the events in \cref{lem:burn-in}. 
We show that, conditioned on \cref{lem:burn-in}, such MP cannot find a large independent set with high probability over the randomness of the process and over the choices of the random subgraph $B'$.
\begin{lemma}[After Burn-in]
\label{lem:after-burn-in}
Suppose the events in \cref{lem:burn-in} all hold.
The random subgraph $B'$ satisfies the following property with probability $1-e^{-\Omega(n)}$.

Consider the Metropolis process on $B$ with an initialization $I_0$ satisfying $I_0 \cap W = L_1$.
Let $T=e^{O(n)}$ and define the event 
\[
\mathcal{E} := \left\{ \forall t \in [0,T]:\, |I_t \cap R_0| \le \frac{n}{20} \right\}.
\]
Then it holds $1-\PP(\mathcal{E}) = e^{-\Omega(n)}$.
\end{lemma}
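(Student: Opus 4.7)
The plan is to reduce to the restricted Metropolis chain on the random bipartite subgraph $B'=B[L_1\cup R_0]$ (which, by \cref{lem:burn-in}, is a fresh random bipartite graph independent of everything revealed during the burn-in) and then apply a bottleneck (conductance) argument to show the chain cannot escape $S=\{I:|I\cap R_0|\le n/20\}$ within time $e^{\eta n}$. The reduction is a monotone coupling that treats $L\setminus L_1$ and $R\setminus R_0$ as permanently unoccupied; using property~3 of \cref{lem:burn-in} (no edges between $L_1$ and $R\setminus R_0$), a routine induction shows $I^B_t\cap R_0\subseteq I^{B'}_t\cap R_0$ throughout. Hence it suffices to bound the restricted quantity, and by \cref{lemma:simple} it is enough to show $\PP_{B',\mathcal{Q}}(\mathcal{E}^c)\le e^{-\Omega(n)}$.

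Standard Chernoff bounds give, with $B'$-probability $1-e^{-\Omega(n)}$: every $v\in R_0$ has $L_1$-degree at least $p|L_1|/2\ge 2.5\log k$, and for every subset $J\subseteq R_0$ with $|J|\ge n/20$, the neighborhood $|N(J)\cap L_1|$ covers all but at most $k^{-2}|L_1|$ vertices (using $p\cdot n/20\ge 2.5\log k$ from \eqref{eq:relations}). The restricted chain is reversible with stationary measure $\pi(I)\propto(\ell\lambda)^{|I\cap L_1|}\lambda^{|I\cap R_0|}$. Grouping configurations by $m=|I\cap R_0|$ and using the expansion property,
\[
\frac{\pi(|I\cap R_0|=m)}{\pi(I\cap R_0=\emptyset)} \;\le\; \binom{|R_0|}{m}\lambda^m(1+\ell\lambda)^{-(1-k^{-2})|L_1|}
\]
for $m\ge n/20$. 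Taking logarithms with $|L_1|\ge n/10$, $|R_0|\le kn$, and $\ell\ge 500k\log k$ (from \eqref{eq:relations}), the coefficient of $\log\lambda$ in the bracketed exponent is $1-2(1-k^{-2})\le 0$ for $k\ge 2$, so the $\lambda$-dependence only helps as $\lambda$ grows; evaluating at $\lambda=1$ gives a bound of $\exp\{(n/20)[\log(20ek)-(3/2)\log\ell]\}\le e^{-\Omega(n)}$ uniformly in $\lambda\ge 1$. Summing over $m\ge n/20$ and invoking reversibility yields the bottleneck flow $Q(S,S^c)\le e^{-\Omega(n)}\pi(S)$.

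To upgrade this stationary-type conductance bound into a hitting-time bound from the deterministic state $I_0=L_1\in S$, I would argue that the chain induced on $S$ mixes rapidly in a polynomial number of steps: every single-vertex update either stays in $S$ or reaches the one-dimensional boundary $\{m=\lceil n/20\rceil\}$, and there is no internal bottleneck inside the $L_1$-occupied basin since the restricted hardcore measure has an inverse-polynomial spectral gap. A standard metastability/Poisson-domination argument (as in \cite[Chapter~7]{levin2017markov}) then gives $\PP(\tau_{S^c}\le e^{\eta n})\le e^{-\Omega(n)}$ for $\eta>0$ smaller than the free-energy exponent, which is exactly the event $\mathcal{E}$.

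The main obstacle is making the free-energy estimate tight uniformly in $\lambda\in[1,\infty)$: large $\lambda$ simultaneously inflates the entropic factor $\lambda^m$ and the energetic factor $(1+\ell\lambda)^{|L_1|}$, and one must verify that the latter dominates at the boundary $m=n/20$. The check above (for $k\ge 2$ the net coefficient of $\log\lambda$ is non-positive) makes this precise and is where the construction's defining relation $\ell\ge 10kpn\ge 500 k\log k$ is used. A secondary subtlety is the polynomial-time within-$S$ mixing, which is standard for local dynamics on the hardcore model whose restriction has no barrier, together with the observation that $\pi(L_1)$ is only sub-exponentially smaller than $\pi(S)$ so that the deterministic warm-up from $I_0=L_1$ is absorbed into the exponential exit time.
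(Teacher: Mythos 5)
Your reduction to the restricted chain on $B'$ and your free-energy/bottleneck computation follow the same route as the paper (the paper's \cref{lem:monotone} is your monotone coupling, and its \cref{lem:bottleneck} is your conductance estimate, with the same exact cancellation of the $\lambda$-powers at the boundary $m=n/20$). The genuine gap is in how you handle the deterministic initialization $I_0\cap W=L_1$. You propose to start the restricted chain at $L_1$ and then upgrade the conductance bound via "rapid mixing within $S$", asserting that "the restricted hardcore measure has an inverse-polynomial spectral gap". That claim is neither proved nor plausible as stated: the hardcore model on a bipartite graph at large fugacity is the canonical example of torpid mixing, and restricting to $S=\{|I\cap R_0|\le n/20\}$ does not obviously remove internal bottlenecks for all $\lambda\in[1,\infty)$ and all realizations of $B'$. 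No such mixing statement is needed. The paper instead initializes the idealized chain $\{J_t\}$ from the \emph{conditional stationary distribution} $\mu_{B',\lambda}(\cdot\mid |J_0\cap R_0|\le n/20)$, and uses that $L_1$ is the maximum element of the partial order $\trianglerighteq$, so the monotone coupling dominates this stationary start regardless of what $J_0$ is; then Eq.~(7.10) of \cite{levin2017markov} applies directly with no mixing input. (Alternatively, for a deterministic start one can use reversibility pointwise, $P^t(L_1,y)\le \pi(y)/\pi(L_1)$, which again needs no mixing; your remark that $\pi(L_1)$ is "only sub-exponentially smaller than $\pi(S)$" gestures at this but is itself an unverified claim requiring neighborhood lower bounds for \emph{all} subset sizes of $R_0$, not just those of size $\ge n/20$.)

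A secondary, quantitative problem: your stated expansion property --- every $J\subseteq R_0$ with $|J|\ge n/20$ has $|N(J)\cap L_1|\ge(1-k^{-2})|L_1|$ with probability $1-e^{-\Omega(n)}$ --- does not follow from a first-moment/Chernoff computation with these parameters. The union bound costs $\binom{kn}{n/20}\binom{|L_1|}{k^{-2}|L_1|}=e^{\Theta(n\log k)}$, while the probability that a fixed $J$ misses a fixed set of $k^{-2}|L_1|$ vertices is only $e^{-\Theta(n\log k/k^2)}$, so the bound fails for large $k$. The same is true of the per-vertex degree bound over all $kn$ vertices of $R_0$ at the $1-e^{-\Omega(n)}$ level. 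The correct (and sufficient) statement is the paper's property $\mathcal{S}$: no independent set of $B'$ has both $|I\cap L_1|=n/20$ and $|I\cap R_0|=n/20$, i.e., the uncovered portion of $L_1$ is at most $|L_1|/2$ rather than $k^{-2}|L_1|$. With that weaker expansion your exponent of $\lambda$ at $m=n/20$ becomes exactly zero (rather than strictly negative), the $\lambda$-dependence cancels identically, and the bound $(240k/\ell)^{n/20}=e^{-\Omega(n)}$ follows from $\ell\ge 300k$ as in the paper.
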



Combining \cref{lem:burn-in,lem:after-burn-in}, we are able to show \cref{thm:MP-all}.
\begin{proof}[Proof of \cref{thm:MP-all}]
Pick $n = \frac{N}{D}$, $\ell = D$, $k = \frac{D}{500\log D}$, and $p = \frac{50D\log D}{N}$ which satisfies \cref{eq:relations}. Then our construction of the random graph $\mathcal{G}(n,\ell,k,p)$ satisfies the first requirement of the theorem by \cref{fct:fact} with high probability. 
Furthermore, by \cref{lem:burn-in,lem:after-burn-in} the MP on a random bipartite graph $B \sim \mathcal{B}(n, kn, p)$ satisfies all the events in these lemmas with high probability. In particular, since $|I_t \cap L| \le |L| = n$ trivially, $|I_t \cap R_{\mathrm{bad}}| \le |R_{\mathrm{bad}}| \le n$ by \cref{lem:burn-in}, and $|I_t \cap R_0| \le \frac{n}{20}$ by \cref{lem:after-burn-in}, we have $|I_t| \le 3n = O(\frac{N}{D})$ for all $t \in [0,T]$ for the MP on $B$ with high probability.
Meanwhile, after the burn-in phase we already find an independent set of size $|I_{T_0}| \ge |L_1| \ge \frac{n}{10} = \Omega(\frac{N}{D})$ with high probability by \cref{lem:burn-in}, where $T_0 = T_{\textsc{burn}}$. 
Hence, we deduce from \cref{lemma:simple,lem:generalMP} that the random graph $\mathcal{G}(n,\ell,k,p)$ resulted from $B \sim \mathcal{B}(n, kn, p)$ satisfies the second requirement of the theorem regarding MP with high probability.
\end{proof}

The rest of this subsection is devoted to the proof of \cref{lem:after-burn-in}.

We first show that
it suffices to consider an idealized MP solely on $B'$ (i.e., neglecting all vertices and edges outside of $B'$),
and initialized from the stationary distribution conditioned on $|I \cap R_0| \le n/20$. 
Namely, we consider the MP $\{J_t\}$ on the subgraph $B' = (W,E(W))$ where $W = L_1 \cup R_0$, with the initial independent set $J_0$ generated from $\mu_{B',\lambda}(\cdot \mid |J_0 \cap R_0| \le n/20)$.

Define a partial ordering $\trianglerighteq$ such that for two independent sets $I',J'$ in $B'$, 
\begin{align*}
    I' \trianglerighteq J' 
    \qquad\text{iff}\qquad 
    I' \cap L_1 \supseteq J' \cap L_1
    ~\text{and}~
    I' \cap R_0 \subseteq J' \cap R_0. 
\end{align*}
Thus, $L_1$ is the unique maximum independent set under the ordering $\trianglerighteq$.

\begin{lemma}\label{lem:monotone}
There is a coupling between the Metropolis process $\{I_t\}$ on $B$ with $I_0 \cap W = L_1$ and the idealized Metropolis process $\{J_t\}$ on $B'$ with initialization $J_0 \sim \mu_{B',\lambda}(\cdot \mid |J_0 \cap R_0| \le n/20)$ such that 
\[
I_t \cap W \trianglerighteq J_t
\]
for all $t \in [0,T]$ almost surely.
\end{lemma}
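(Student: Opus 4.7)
The plan is to construct a grand coupling via a \emph{split-clock} representation of the continuous-time Metropolis process. Each vertex $v$ is equipped with two independent Poisson clocks: an \emph{add clock} of rate $\rho(v)$, which, when it rings, inserts $v$ into the current independent set iff $v$ is absent and has no neighbor currently present; and a \emph{remove clock} of rate $\rho(v)/\lambda(v)$, where $\lambda(v)$ is the (time-independent) fugacity at $v$, which, when it rings, removes $v$ iff $v$ is present. A direct comparison of transition rates confirms that this process agrees in distribution with the dynamics of \cref{alg:MP-1step}.

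The coupling is then as follows: for every $v \in W = L_1 \cup R_0$ the add and remove clocks are \emph{shared} between $\{I_t\}$ and $\{J_t\}$, while for every $v \in V_B \setminus W$ the clocks are private to $\{I_t\}$. This is consistent because both processes assign the same rates and fugacities on $W$ (rate $\ell$ and fugacity $\ell\lambda$ on $L_1$; rate $1$ and fugacity $\lambda$ on $R_0$). Since $L_1$ is the unique $\trianglerighteq$-maximum among independent sets of $B'$, and the initial law of $J_0$ is supported on independent sets of $B'$, the relation $I_0 \cap W = L_1 \trianglerighteq J_0$ holds automatically at $t = 0$.

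The main body of the argument is a case analysis showing that each clock ring preserves $\trianglerighteq$ inductively. Rings at $v \in V_B \setminus W$ alter neither $I_t \cap W$ nor $J_t$, so they are vacuous. For $v \in W$, the crucial structural input from \cref{lem:burn-in} is that $L_1 \subseteq L_{\mathrm{good}}$, so $N_B(v) \subseteq R_0$ for every $v \in L_1$, and in particular $N_{B'}(v) = N_B(v)$. An add-clock ring at $v \in L_1$ then uses $I_t \cap R_0 \subseteq J_t \cap R_0$ to conclude $N_B(v) \cap I_t \subseteq N_{B'}(v) \cap J_t$, so whenever $J$ accepts the addition, $I$ either already contains $v$ or also accepts, preserving $I \cap L_1 \supseteq J \cap L_1$. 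An add-clock ring at $v \in R_0$ uses $I_t \cap L_1 \supseteq J_t \cap L_1$ together with $N_{B'}(v) = N_B(v) \cap L_1$ to conclude $N_B(v) \cap I_t \supseteq N_{B'}(v) \cap J_t$, so $I$ accepts only if $J$ does, preserving $I \cap R_0 \subseteq J \cap R_0$. Remove-clock rings only shrink one of the sets at a single vertex, and the residual cases follow quickly from the fact that under the inductive hypothesis any discrepancy $v \in I_t \setminus J_t$ lies in $L_1$ and any $v \in J_t \setminus I_t$ lies in $R_0$.

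The principal subtlety to flag is handling the edges of $B$ incident to $V_B \setminus W$---in particular, edges from $L \setminus L_1$ to $R_0$ that are present in $B$ but absent from $B'$. These contribute only additional blockers to $I$'s attempts to add $R_0$-vertices, which strengthens rather than contradicts $I \cap R_0 \subseteq J \cap R_0$; additions to $L_1$ cannot be blocked from outside $W$ at all, because $L_1 \subseteq L_{\mathrm{good}}$ has no neighbors in $R \setminus R_0$. Once this is observed, the induction yields $I_t \cap W \trianglerighteq J_t$ almost surely for all $t \in [0,T]$. I do not anticipate any substantial obstacle beyond the careful enumeration of these coupling cases.
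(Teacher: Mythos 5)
Your proposal is correct and follows the same high-level strategy as the paper: a monotone coupling of the two processes under the partial order $\trianglerighteq$, verified by induction over clock rings, using exactly the structural facts from \cref{lem:burn-in} (namely $N_B(u)\subseteq R_0$ for $u\in L_1$, so $N_{B'}(u)=N_B(u)$, and $N_{B'}(v)=N_B(v)\cap L_1$ for $v\in R_0$). The one genuine difference is in how the coupling is built. The paper works with a single lazy clock per vertex (rate $2\rho(v)$, discard with probability $1/2$) and must use an anti-symmetric rule in the discrepancy cases --- when $u\in I_{t}$ but $u\notin J_{t}$, it lets exactly one of the two processes perform the update --- because under the single-clock semantics the same ring means ``attempt removal'' for one process and ``attempt addition'' for the other, and the naive identity coupling could flip the order. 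Your split-clock (graphical) representation, with an add clock of rate $\rho(v)$ and a remove clock of rate $\rho(v)/\lambda(v)$, decouples the two update types so that the pure identity coupling on $W$ handles the discrepancy cases automatically; this is a legitimate equivalent representation since the jump rates agree and the fugacities here are time-independent, and it makes the case analysis somewhat cleaner. Both arguments use the partial order in the same way and reach the same conclusion; your version trades the paper's laziness/anti-coupling device for the standard graphical construction of monotone particle systems.
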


\begin{proof}
We prove the lemma by constructing a coupling $\{(I_t, J_t)\}$ of the two processes such that almost surely $I_t \cap W \trianglerighteq J_t$ for all $t \in [0,T]$.
It is helpful to modify the description of the continuous-time Metropolis process in the following equivalent way: We place at each vertex $v \in V$ a Poisson clock of rate $2\rho(v)$, and whenever the clock rings, with probability $1/2$ we do nothing (discard the ring) and with the remaining probability $1/2$ we update the vertex in the usual way (pick a real in $[0,1]$ u.a.r.\ and then use \cref{alg:MP-1step}). This modified version can be understood as the lazy version of the Metropolis process.

We couple all the Poisson clocks in $L_1 \cup R_0$ identically for both $\{I_t\}$ and $\{J_t\}$ and then couple the updates inside $L_1 \cup R_0$ in a careful way which we shall specify next. 
For updates for $\{I_t\}$ outside of $L_1 \cup R_0$ we do not care and they can be arbitrary.
Almost surely the total number of clock rings at all vertices in $L_1 \cup R_0$ is finite and all these rings happen at distinct time in $(0,T)$. 
In the rest of the proof we assume this holds and denote the times of all clock rings by $0< t_1<\dots < t_m < T$ where $m \in \N$ is the total number of rings. 
Further define $t_0=0$.
We construct a coupling by induction, conditional on the sequence of update times.
Initially, 
since $I_0 \cap W = L_1$ is maximum, we have $I_0 \cap W \trianglerighteq J_0$.

Assume $I_{t_{i-1}} \cap W \trianglerighteq J_{t_{i-1}}$ for some $1\le i \le m$ and at time $t_i$ the Poisson clock rings at some vertex in $L_1 \cup R_0$.

\medskip
\textit{Case 1: The ring happens at some $u \in L_1$.}
\begin{itemize}
    \item[(1a)] If $u \in I_{t_{i-1}}$ and $u \in J_{t_{i-1}}$, then we can couple the updates at $u$ identically in both processes so that $u \in I_{t_i}$ iff $u \in J_{t_i}$. 
    
    \item[(1b)] If $u \in I_{t_{i-1}}$ and $u \notin J_{t_{i-1}}$, then we couple as follows: When $I_{t_i}$ makes an update at $u$ (calling \cref{alg:MP-1step}), $J_{t_i}$ discards the ring (no update); and when $I_{t_i}$ discards the update, $J_{t_i}$ makes the update at $v$. 
Hence, $I_{t_i} \cap W \trianglerighteq J_{t_i}$ always.

\item[(1c)] If $u \notin I_{t_{i-1}}$ and $u \notin J_{t_{i-1}}$, then we couple as follows: With probability $1/2$ we discard the rings in both processes (no updates), and with the remaining probability $1/2$ we update $u$ in both processes by including $u$ into the current independent set if possible. 
We claim that $I_{t_i} \cap W \trianglerighteq J_{t_i}$ after the update.
Note that the only possible way of violating this is that $u \notin I_{t_i}$ and $u \in J_{t_i}$. 
This means that there exists a neighbor $v$ of $u$ satisfying $v \in I_{t_{i-1}}$ so that $u$ is blocked by $v$ for this update. 
Since $u \in L_1$, it holds $v \in R_0$ by \cref{lem:burn-in}. 
Therefore, combining $I_{t_{i-1}} \cap W \trianglerighteq J_{t_{i-1}}$, $v \in I_{t_{i-1}}$, and $v \in R_0$, we must have $v \in J_{t_{i-1}}$ as well. 
This implies that $u$ should also be blocked and cannot be included in $J_{t_i}$, contradicting to $u \in J_{t_i}$. 
\end{itemize}

\medskip
\textit{Case 2: The ring happens at some $v \in R_0$.}
\begin{itemize}
    \item[(2a)] If $v \in I_{t_{i-1}}$ and $v \in J_{t_{i-1}}$, then we can couple the updates at $v$ identically in both processes so that $v \in I_{t_i}$ iff $v \in J_{t_i}$. 
    \item[(2b)] If $v \notin I_{t_{i-1}}$ and $v \in J_{t_{i-1}}$, then we couple in the same way as (1b): When $I_{t_i}$ makes the update, $J_{t_i}$ discards it; and when $I_{t_i}$ discards the update, $J_{t_i}$ makes it. 
Hence, $I_{t_i} \cap W \trianglerighteq J_{t_i}$ always.

\item[(2c)] If $v \notin I_{t_{i-1}}$ and $v \notin J_{t_{i-1}}$, then similarly as (1c) we couple as follows: With probability $1/2$ we discard the rings in both processes, and with the remaining probability $1/2$ we update $v$ in both processes by including $v$ if possible. 
We claim that $I_{t_i} \cap W \trianglerighteq J_{t_i}$.
The only possible way of violating this is that $v \in I_{t_i}$ and $v \notin J_{t_i}$. 
This means that some neighbor $u$ of $v$ satisfies $u \in J_{t_{i-1}}$ so that $v$ is blocked in this update by $u$. 
And we must have $u \in L_1$ since the process $\{J_t\}$ is defined only on $B'$.
Therefore, combining $I_{t_{i-1}} \cap W \trianglerighteq J_{t_{i-1}}$, $u \in J_{t_{i-1}}$, and $u \in L_1$, we must have $u \in I_{t_{i-1}}$ as well, 
contradicting to $v \in I_{t_i}$. 
\end{itemize}

Hence, we have $I_{t_i} \cap W \trianglerighteq J_{t_i}$ in all cases. 

By induction, we conclude that $I_{t_i} \cap W \trianglerighteq J_{t_i}$ for all $i=0,1,\dots,m$ conditional on the update sequence $0=t_0 < t_1 < \dots < t_m < T$. 
Therefore, almost surely we have $I_t \cap W \trianglerighteq J_t$ for all $t \in [0,T]$.
%
%
\end{proof}

By \cref{lem:monotone}, we only need to consider the idealized MP $\{J_t\}$ on $B'$.

\begin{lemma}\label{lem:bottleneck}
The random graph $B'$ satisfies the following property with probability $1-e^{-\Omega(n)}$.

Consider the idealized MP $\{J_t\}$ on $B'$ with an initialization $J_0$ generated from $\mu_{B',\lambda}(\cdot \mid |J_0 \cap R_0| \le n/20)$. 
Let $T=e^{O(n)}$ and define the event 
\[
\mathcal{E} := \left\{ \forall t \in [0,T]:\, |J_t \cap R_0| \le \frac{n}{20} \right\}.
\]
Then it holds $1-\PP(\mathcal{E}) = e^{-\Omega(n)}$.
\end{lemma}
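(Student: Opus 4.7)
The plan is to prove \cref{lem:bottleneck} via a standard bottleneck (conductance) argument on the reversible idealized chain $\{J_t\}$, combined with a partition-function estimate on the random graph $B'$. First, by \cref{lem:generalMP} the stationary distribution of the continuous-time idealized MP on $B'$ is $\pi(I):=\mu_{B',\lambda}(I)\propto(\ell\lambda)^{|I\cap L_1|}\lambda^{|I\cap R_0|}$. For a reversible continuous-time chain started at $\pi(\cdot\mid A)$, where $A=\{I:|I\cap R_0|\le n/20\}$, the standard bottleneck bound yields
\[
\PP_{\pi|A}\bigl(\tau_{A^c}\le T\bigr)\;\le\;T\cdot\frac{Q(A, A^c)}{\pi(A)},
\]
with $Q(A, A^c)$ the ergodic flow across the boundary $\partial A := \{I:|I\cap R_0|=n/20\}$. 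The only moves that cross this boundary add a single $R_0$-vertex, so $Q(A, A^c)\le |R_0|\cdot\pi(\partial A)$ (each $v\in R_0$ has rate $\rho(v)=1$), and the whole task reduces to proving the partition-function estimate $\pi(\partial A)/\pi(A)\le e^{-\Omega(n)}$ with a constant large enough to absorb the prefactor $T\cdot|R_0|=e^{O(n)}$.

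To establish this estimate I would decompose any independent set as $I=T\sqcup S$ with $T:=I\cap R_0$ and $S\subseteq L_1\setminus N(T)$, obtaining
\[
Z(\partial A)=\sum_{T\subseteq R_0,\,|T|=n/20}\lambda^{n/20}(1+\ell\lambda)^{|L_1\setminus N(T)|},\qquad Z(A)\ge(1+\ell\lambda)^{|L_1|}.
\]
Since the denominator is deterministic, it suffices to take the expectation of $Z(\partial A)$ over the random edges of $B'$ and then apply Markov's inequality. For fixed $T$ the indicators $\{\mathbbm{1}\{u\not\sim T\}\}_{u\in L_1}$ are i.i.d.\ Bernoulli with mean $(1-p)^{n/20}\le e^{-pn/20}\le k^{-5/2}=:q$ (using $p\ge 50\log k/n$), which gives $\EE_{B'}[(1+\ell\lambda)^{|L_1\setminus N(T)|}]=[1+q\ell\lambda]^{|L_1|}$. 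Combining with $\binom{|R_0|}{n/20}\le(20ek)^{n/20}$ and the elementary inequality $(1+qy)/(1+y)\le q+1/y$ valid for $y=\ell\lambda\ge 1$, one arrives at
\[
\EE_{B'}\!\left[\frac{Z(\partial A)}{(1+\ell\lambda)^{|L_1|}}\right]\;\le\;(20ek\lambda)^{n/20}\bigl(q+1/(\ell\lambda)\bigr)^{|L_1|}.
\]
Bounding this by $e^{-\Omega(n)}$ using $\ell\ge 10kpn$ and $|L_1|\ge n/10$, a final application of Markov then yields $\pi(\partial A)/\pi(A)\le e^{-\Omega(n)}$ with probability $1-e^{-\Omega(n)}$ over $B'$, from which the lemma follows via the bottleneck step above.

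The main obstacle is that the partition-function bound must hold uniformly over all $\lambda\ge 1$, so the prefactor $\lambda^{n/20}$ must be absorbed by $(q+1/(\ell\lambda))^{|L_1|}$, requiring a case split. When $\ell\lambda\lesssim 1/q$ the term $1/(\ell\lambda)$ dominates and the factor $\ell^{-|L_1|}\le(kpn)^{-n/10}$ together with $\lambda^{n/20-|L_1|}\le 1$ (since $\lambda\ge 1$ and $|L_1|\ge n/10>n/20$) closes the bound; when $\ell\lambda\gtrsim 1/q$ the term $q$ dominates and $\lambda^{n/20}$ must be balanced against $q^{|L_1|}\le k^{-n/4}$, which is where the slack built into $p\ge 50\log k/n$ becomes essential---increasing the absolute constant $50$ makes $q$ arbitrarily small, delivering any desired rate $e^{-cn}$. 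In both regimes the combinatorial factor $\binom{|R_0|}{n/20}\le(20ek)^{n/20}$ contributes only a sub-leading correction.
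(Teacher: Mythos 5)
Your overall strategy matches the paper's: a bottleneck/conductance bound for the chain started from $\mu_{B',\lambda}(\cdot\mid A)$, reduced to showing $\mu_{B',\lambda}(\partial A)/\mu_{B',\lambda}(A)\le e^{-\Omega(n)}$, with the randomness of $B'$ handled by a first-moment computation. However, there is a genuine gap in how you carry out the partition-function estimate: your bound is not uniform in $\lambda$, and the lemma must hold for arbitrary $\lambda\ge 1$ (in \cref{thm:MP-all} the fugacity may be any function of $n$, possibly enormous). Concretely, your final bound is $(20ek\lambda)^{n/20}\bigl(q+1/(\ell\lambda)\bigr)^{|L_1|}$ with $q=k^{-5/2}$ independent of $\lambda$. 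In the regime $\ell\lambda\gg 1/q$ this is of order $(20ek)^{n/20}\,\lambda^{n/20}\,q^{|L_1|}$, and since $q^{|L_1|}$ does not depend on $\lambda$, the factor $\lambda^{n/20}$ diverges (take e.g.\ $\lambda=2^{2^n}$); no choice of the absolute constant in $p\ge 50\log k/n$ repairs this, since enlarging it only shrinks $q$ as a function of $k$. The root cause is the order of operations: by averaging $Z(\partial A)=\sum_T\lambda^{n/20}(1+\ell\lambda)^{|L_1\setminus N(T)|}$ over $B'$ \emph{before} comparing to $(1+\ell\lambda)^{|L_1|}$, you let the expectation be dominated by the rare event that some $T\subseteq R_0$ of size $n/20$ has few neighbors in $L_1$ --- and on that event the ratio genuinely is as large as $\lambda^{n/20}$, so Markov's inequality cannot save you.

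The paper avoids this by first establishing a deterministic, $\lambda$-free property of $B'$: with probability $1-e^{-\Omega(n)}$ there is \emph{no} independent set with $|I\cap L_1|=n/20$ and $|I\cap R_0|=n/20$ (a first-moment bound $\EE[Z]\le\binom{n/10}{n/20}\binom{kn}{n/20}(1-p)^{(n/20)^2}=e^{-\Omega(n)}$ using $p\ge 50\log k/n$, essentially the same computation you are implicitly performing). On this event, every $I\in\partial A$ satisfies $|I\cap L_1|<n/20$, so its weight is at most $(\ell\lambda)^{n/20}\lambda^{n/20}$; comparing with the single configuration $L_1$ of weight $(\ell\lambda)^{n/10}$ gives a ratio at most $\binom{n/10}{\le n/20}\binom{kn}{n/20}\,\lambda^{n/20}(\ell\lambda)^{-n/20}=\binom{n/10}{\le n/20}\binom{kn}{n/20}\,\ell^{-n/20}$: the fugacities cancel exactly, and $\ell\ge 300k$ from \cref{eq:relations} finishes. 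Your argument becomes correct if you replace the Markov-on-$\EE[Z(\partial A)]$ step by this two-stage structure: prove the combinatorial property of $B'$ first, then perform the weight comparison deterministically on that event.
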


\begin{proof}
Define the following graph property $\mathcal{S}$:
there is no independent set $I$ such that $|I \cap L_1| = n/20$ and $|I \cap R_0| = n/20$.

We first prove that the random bipartite graph $B'$ satisfies $\mathcal{S}$ with probability $1-e^{-\Omega(n)}$.
Let $Z$ be the number of such independent sets; namely those with $|I \cap L_1| = n/20$ and $|I \cap R_0| = n/20$.
For a random $B'$, we have
\begin{align*}
\E[Z] \le \binom{n/10}{n/20} \cdot \binom{kn}{n/20} \cdot \left( 1-p \right)^{n/20 \cdot n/20}. 
\end{align*}
The product of the first two terms is upper bounded by
\begin{align*}
\binom{n/10}{n/20} \cdot \binom{kn}{n/20}
\le 2^{n/10} \cdot \left( 20e k \right)^{n/20} 
\le \left( 240 k \right)^{n/20} \le k^{n/10}. 
\end{align*}
Therefore, by \cref{eq:relations} we have
\begin{align*}
\E[Z] 
\le \exp\left( - \left( \frac{pn}{400} - \frac{\log k}{10} \right) n \right)
= e^{-\Omega(n)}.
\end{align*}
Thus, $\PP(Z\ge 1) \le \E[Z] = e^{-\Omega(n)}$ by Markov's inequality.

Suppose $\mathcal{S}$ holds.
We can then apply a simple conductance argument.
Let $\mathcal{I}_{\le n/20}$ be the collection of all independent sets $I$ such that $|I \cap R_0| \le n/20$.  
Let $\mathcal{I}_{= n/20}$ be the collection of all independent sets $I$ such that $|I \cap R_0| = n/20$.  
Assuming $\ell \ge 300 k$ from \cref{eq:relations}, we deduce that
\begin{align*}
\frac{\mu_{B',\lambda}(\mathcal{I}_{=n/20})}{\mu_{B',\lambda}(\mathcal{I}_{\le n/20})}
\le \frac{\binom{n/10}{\le n/20} \cdot \binom{kn}{n/20} \cdot (\ell \lambda)^{n/20} \cdot \lambda^{n/20}}{(\ell \lambda)^{n/10}} 
\le \frac{\left( 240 k \right)^{n/20}}{\ell^{n/20}}
= e^{-\Omega(n)}.
\end{align*}
The lemma then follows from, e.g., Theorem 7.4 of \cite{levin2017markov} (more specifically, Eq.~(7.10) in its proof) or \cite{dyer2002counting,mossel2009hardness}.
\end{proof}

We are now ready to prove \cref{lem:after-burn-in}.
\begin{proof}[Proof of \cref{lem:after-burn-in}]
Consider the idealized MP $\{J_t\}$ on $B'$ as defined before \cref{lem:monotone}. Since by \cref{lem:monotone} we have $I_t \cap W \trianglerighteq J_t$ for all $t \in [0,T]$ almost surely where $T = e^{O(n)}$, we know that if $J_t$ satisfies $\mathcal{E}$ then it must hold $I_t$ satisfies $\mathcal{E}$. 
The probability of the former is bounded by \cref{lem:bottleneck}, which gives
\begin{align*}
\PP_{\{I_t\}}(\mathcal{E}) 
\ge \PP_{\{J_t\}}(\mathcal{E}) = 1-e^{-\Omega(n)},
\end{align*}
with high probability over the choices of the random graph $B'$.
The lemma then follows.
\end{proof}

\subsection[Simulated Annealing on dense graphs]{Simulated annealing on dense graphs: Proof of \cref{thm:dense}}
We now fix the parameters in the definition of $\mathcal{G}(n,\ell,k,p)$ as follows. It will be convenient to slightly re-parameterize the random graph. For two small constants $\eps$ and $\delta$ satisfying $0 < \frac{\eps}{4}< \delta < \eps< \frac{1}{3}$, let $m$ be such that $m^{\eps} = n$ (for simplicity we shall treat $m$ as an integer) and set 
$$p = m^{-\delta},k = m^{1-3\eps}, \ell = m^{1-\eps}.$$
 By noting $n^{\frac{1}{\eps}} = m$ it is readily seen that the conditions in \eqref{eq:relations} are met. Observe that with these parameters, a graph sampled from $\mathcal{G}(m^\eps,m^{1-\eps},m^{1-3\eps},m^{-\delta})$ will have an independent set of size $m^{1-\eps}$, consisting of the one side of the bi-bipartite base graph, yet any independent set which is mostly supported on the other side can only have size at most $m^\eps$.

 We shall now show that under the events defined by the burn-in phase, no new vertices from $R$ can be added to the independent set. This fact shall constitute the main hurdle for MP to find a large independent set.
 \begin{lemma} \label{lem:postdense}
 	Let the notation of \cref{lem:burn-in} prevail and define the stopping time,
 	$$\tau :=\min\{t \geq T_{\textsc{burn}}|I_t \cap R_0 \neq \emptyset\}.$$
 	Then, for \emph{any} fugacity schedule $\{\lambda_t\}_{t \geq 0}$, which can be adaptive,
 	$$\PP\left(\tau > e^{\Omega(m^{\frac{\eps}{4}})}\right) \leq e^{-\Omega(m^{\frac{\eps}{4}})}.$$
 \end{lemma}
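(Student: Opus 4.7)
The plan is to show that, conditional on the high-probability events of \cref{lem:burn-in}, with probability at least $1 - e^{-\Omega(m^{\eps/4})}$ no vertex of $R_0$ is added to the independent set during the window $[T_{\textsc{burn}}, T_{\textsc{burn}} + e^{\Omega(m^{\eps/4})}]$, which is the desired control on $\tau$. The crucial feature of \cref{lem:burn-in} is that at time $T_{\textsc{burn}}$ the $L_1$--$R_0$ adjacency is still unrevealed: each such edge is independently present with probability $p = m^{-\delta}$. Revealing these edges now and applying a Chernoff bound together with a union bound over $|R_0| \le kn = m^{1-2\eps}$ produces an event $\mathcal{B}$ of probability $1 - e^{-\Omega(m^{\eps-\delta})}$ on which every $v \in R_0$ has at least $pn/20 = m^{\eps-\delta}/20$ neighbors in $L_1$.

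The central step is to bound the rate at which a fixed $v \in R_0$ can be added. A necessary condition is that $v$'s Poisson clock (rate $1$) rings at a moment when all $\geq m^{\eps-\delta}/20$ of its $L_1$-neighbors are simultaneously unoccupied. Since $L_1 \subseteq L_{\mathrm{good}}$ has all its neighbors inside $R_0$ (by \cref{lem:burn-in}), and we are analyzing the process only up to the first $R_0$ addition, before time $\tau$ the $L_1$-vertices interact with nothing outside themselves. A monotone coupling in the spirit of \cref{lem:monotone} therefore stochastically dominates the true $L_1$-occupation pattern by a product of independent two-state chains, one per $u \in L_1$, each with on-rate $\ell$ and off-rate $1/\lambda_t \le 1$.

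In the product chain each vertex is unoccupied for a fraction at most $1/\ell$ of the time after an $O(\log m)$ transient, and independence across the $L_1$-neighbors of $v$ yields
\[
\PP\bigl(\text{all } L_1\text{-neighbors of } v \text{ are unoccupied at time } t\bigr) \le \ell^{-m^{\eps-\delta}/20} = m^{-\Omega(m^{\eps-\delta})}.
\]
Integrating $v$'s rate-$1$ clock against this probability over a window of length $T = e^{\Omega(m^{\eps/4})}$ bounds the expected number of successful additions of $v$ by $T \cdot m^{-\Omega(m^{\eps-\delta})} = \exp\bigl(O(m^{\eps/4}) - \Omega(m^{\eps-\delta}\log m)\bigr)$, which is at most $e^{-\Omega(m^{\eps/4})}$ provided $m^{\eps-\delta}\log m \gg m^{\eps/4}$. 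This constraint can be arranged within the standing parameter regime $\eps/4 < \delta < \eps$ by additionally requiring $\delta < 3\eps/4$. A union bound over $v \in R_0$ absorbs only a polynomial factor in $m$ and yields the stated control on $\tau$: the first time after $T_{\textsc{burn}}$ that an $R_0$ vertex joins the IS exceeds $e^{\Omega(m^{\eps/4})}$ except on a failure event of probability $e^{-\Omega(m^{\eps/4})}$, which is the bound asserted by the lemma.

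The main obstacle is the monotone-coupling step, which must be uniform in the adaptive schedule $\{\lambda_t\}$. The point is that for any $\lambda_t \ge 1$ the on-rate $\ell$ dominates the off-rate $1/\lambda_t \le 1$ by at least a factor $\ell = m^{1-\eps}$, so the site-wise bound $1/\ell$ on the unoccupied fraction holds uniformly in the schedule. Combined with the stochastic dominance argument — valid precisely up to the stopping time $\tau$, and hence self-consistent with the target event — this closes the loop and establishes the lemma.
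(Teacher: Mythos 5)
Your argument is sound and reaches the right conclusion, but it takes a genuinely different route from the paper. The paper fixes $v_i \in R_0$ and tracks $\Gamma^i_t$, the \emph{number} of occupied $L_1$-neighbours of $v_i$, as a birth--death chain: before $\tau$ the up-probability exceeds the down-probability by a factor $m^{\Omega(\eps)}$ uniformly in the (adaptive) schedule, so a gambler's-ruin estimate (Feller) shows $\Gamma^i_t$ never hits $0$ within $m^{\Omega(m^{\eps/4})}$ steps; a union bound over $i$ and over time finishes. You instead bound, at each fixed time, the probability that \emph{all} $L_1$-neighbours of $v$ are simultaneously unoccupied by a product bound $\ell^{-d_v}$ and integrate against $v$'s rate-$1$ clock. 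Both proofs rest on the same structural facts from \cref{lem:burn-in} (each $v\in R_0$ has $\Theta(pn)$ neighbours in $L_1$, and before $\tau$ each $L_1$-vertex evolves freely with on-rate $\ell$ dominating off-rate $1/\lambda_t\le 1$), and both correctly handle the self-referential "valid up to $\tau$" issue by analysing the process stopped at $\tau$. Your per-time product bound is in some ways cleaner than the random-walk argument and avoids the paper's somewhat delicate drift computations.

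Two caveats. First, the step you yourself flag as the main obstacle is where the rigor gap sits: under an \emph{adaptive} schedule the single-site chains are \emph{not} independent (the common $\lambda_t$ depends on the whole configuration), and a per-site monotone coupling against the $\lambda\equiv 1$ process does not directly hold (if the dominated copy is unoccupied and the dominating copy is occupied, a ring can flip the order). The product bound is nonetheless correct, but the right justification is sequential: $u$ is unoccupied at time $t$ only if its last clock ring in $(T_{\textsc{burn}},t]$ exists and its fresh uniform satisfies $\zeta\le \tfrac{1}{\ell\lambda}\le\tfrac1\ell$; conditioning on all ring times and applying the tower property over the neighbours (whose last-ring uniforms are distinct and independent) gives $\ell^{-d_v}$ uniformly over adaptive schedules. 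You should replace the coupling/independence language with this argument. Second, your proof needs $\delta<3\eps/4$ so that $\ell^{-\Omega(m^{\eps-\delta})}$ beats the time horizon $e^{O(m^{\eps/4})}$ and the Chernoff failure probability $e^{-\Omega(m^{\eps-\delta})}$ is small enough, which narrows the stated regime $\eps/4<\delta<\eps$; this is harmless for deducing \cref{thm:dense} (one may take $\delta=\eps/2$), and the paper's own proof implicitly imposes a comparable restriction through its claim $\Gamma^i_0\ge\tfrac1{100}m^{\eps/2}$, but it should be stated. (Note also that both your proof and the paper's actually establish $\PP(\tau\le e^{\Omega(m^{\eps/4})})\le e^{-\Omega(m^{\eps/4})}$, which is the form needed in the proof of \cref{thm:dense}; the inequality in the lemma as printed is reversed.)
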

\begin{proof}
	We perform all calculations conditional on the event defined by \cref{lem:burn-in}. Similar to the proof of \cref{lem:monotone} we enumerate the times $\{t_i\}_{i=1}^T$ where $T_{\textsc{burn}} \leq t_1 <\dots < t_T \leq  e^{\Omega(m^{\frac{\eps}{4}})}$. By \cref{lem:generalMP}, the number of rings, $T$, is polynomially related to the time horizon,  so it will be enough to consider the discrete-time MP restricted to $L_1$, starting from $T_{\textsc{burn}}$. That is, from now on $I_t$ will stand for $I_{t_t + T_{\textsc{burn}}}\cap L_1$. We enumerate $R_0 = \{v_i\}_{i=1}^M$ where $M \geq m^{1-\eps}-m^{\delta/2}$ and for every $i = 1,\dots,M$ define the random process 
	$$\Gamma^i_t = \#\{u \in I_t| u \sim v_i\}.$$
	We define the auxiliary stopping time $\tau' := \min\{t| \exists i = 1,\dots,M \text{ such that } \Gamma_i = 0\}$ and observe that for any $t \geq 0$,
	\begin{equation} \label{eq:stopdom}
		\PP_{B',\mathcal{Q}'}( \tau \le t ) \leq \PP_{B',\mathcal{Q}'}(\tau' \leq t-1).
	\end{equation}
	Thus, let us analyze the process $\Gamma_t^i$. Recall that, by definition, $I_0 = L_1$. So, since $|L_1| \geq \frac{1}{10}m^\eps$, by applying Chernoff's inequality, we can show,
	$$\PP\left(\forall i=1,\dots M, \ \Gamma_0^i \geq \frac{1}{100}m^{\frac{\eps}{2}} \right) = 1 - e^{\Omega(m^{\eps})}.$$
	Now, let us fix some $i =1,\dots,M$, and consider the case that for some $t \geq 0$, $\Gamma_t^i \leq m^{\frac{\eps}{4}}$. In this case, we have
	$$\PP\left(\Gamma_{t+1}^i = \Gamma_t^i +1 \right)\geq \frac{(\Gamma_0^i - \Gamma_t^i)}{m + m^{1-2\eps}}\geq \frac{1}{1000m^{1-\frac{3\eps}{2}}} \geq \frac{1}{m^{1-\eps}},$$
	and 
	$$\PP\left(\Gamma_{t+1}^i = \Gamma_t^i -1 \right)\leq \frac{1}{\lambda_t}\frac{\Gamma_t^i}{m + m^{1-2\eps}}\leq \frac{1}{m^{1-\frac{3\eps}{4}}}.$$
	Both bounds follow from counting the number of vertices which can affect a change in $\Gamma_t^i.$ For the upper bound we also factor the deletion probabilities and use that $\lambda_t \geq 1$. 
	
	Consider the biased random walk with independent increments $Y_j$ satisfying,
	$$\PP\left(Y_j = 1\right) = \frac{m^{-(1-\eps)}}{m^{-(1-\eps)} + m^{-(1-\frac{3\eps}{4})}}, \text{ and } \PP\left(Y_j = -1\right) = \frac{m^{-(1-\frac{3\eps}{4})}}{m^{-(1-\eps)} + m^{-(1-\frac{3\eps}{4})}},$$
	and denote $S_r = \sum\limits_{j=1}^r Y_j$.
	Since $Y_j$ is a biased random walk with positive drift, we can bound the probability that it ever reaches some negative value. More specifically, the results in \cite[(2.8), Chapter XIV.2]{feller1968probability} show that
	\begin{align} \label{eq:minbiasedRW}
		\PP\left(\min\limits_{r \geq 0} S_r \leq -m^{\frac{\eps}{4}}\right) \leq \left(\frac{m^{1-\eps}}{m^{1-\frac{3\eps}{4}}}\right)^{m^{\frac{\eps}{4}}}\leq m^{-\frac{\eps}{4}m^{\frac{\eps}{4}}}\leq m^{-m^{\frac{\eps}{8}}}.
	\end{align}
	Now, for any $t_0 \geq 0$ such that $\Gamma_{t_0}^i = m^{\frac{\eps}{4}}$, we define a new stopping time 
	$$\tau'(t_0) = \min \{t > t_0| \Gamma_t^i = m^{\frac{\eps}{4}} \text{ or } \Gamma_t^i = 0\}.$$
	But, for any $t \in [t_0, \tau'(t_0)]$, we have shown that $\Gamma^i_{t_0 + t}$ stochastically dominates (a lazy version of) $S_t + m^{\frac{\eps}{4}}$. So, \eqref{eq:minbiasedRW} shows,
	$$\PP\left(\Gamma^i_{\tau'(t_0)} = 0\right) \leq m^{-m^{\frac{\eps}{4}}}.$$
	Finally, since $\Gamma_0^i > m^{\frac{\eps}{4}}$, we have that $\tau' = \tau'(t')$, for some random $t'$, and in particular $\tau' > t'$. By a union bound,
	\begin{align*}
		\PP\left(\tau' \leq m^{\frac{1}{2}m^{\frac{\eps}{4}}}\right) &\leq \PP\left(\tau' = \tau(t'), \text{ for some } t' \leq m^{\frac{1}{2}n^{\frac{\eps}{4}}}\right)\\
		& = \PP\left(\Gamma^i_{\tau'(t')} = 0, \text{ for some } t' \leq m^{\frac{1}{2}m^{\frac{\eps}{4}}}\right) \leq m^{-m^{\frac{\eps}{4}}}m^{\frac{1}{2}m^{\frac{\eps}{4}}} \leq m^{-\frac{1}{2}m^{\frac{\eps}{4}}}.
	\end{align*}
	The proof concludes by taking a union bound over all $n$ and invoking \eqref{eq:stopdom}.
\end{proof}
We can now prove \cref{thm:dense}
\begin{proof}[Proof of \cref{thm:dense}]
	Since $\frac{1}{p} = m^\delta \geq m^{\frac{\eps}{4}}$, applying \cref{lem:burn-in} and \cref{lem:postdense} we see that with probability $1-e^{-\Omega(m^{\frac{\eps}{4}})}$
	$$I_t \subset L \cup (R \setminus R_0),$$ for every $t \leq e^{\Omega(m^{\frac{\eps}{4}})}$.
	Thus, by construction, and recalling $m^\eps = n$, for every such $t$, $|I_t| \leq |L| + |R \setminus R_0| \leq n + n = 2n = 2m^\eps.$
	On the other hand, by \cref{fct:fact}, a graph sampled from $\mathcal{G}(m^\eps,m^{1-\eps},m^{1-3\eps},m^{-\delta})$ has $\Theta(m)$ vertices and an independent set of size $\Theta(m^{1-2\eps})$. Switching roles between $n$ and $m$, and applying \cref{lemma:simple} finishes the proof.
\end{proof}

\section{Hardness Results for Bipartite Graphs} \label{sec:bipartite}
In this section, we focus on bipartite graphs and prove \cref{thm:bipartite}. The main technical novelty of the proof is a reduction between the Metropolis algorithm and the randomized greedy algorithm on bipartite graphs. The idea is to blow up a hard instance for randomized greedy in a way that makes the added randomness of Metropolis inefficient. 

Recall that the randomized greedy algorithm is equivalent to the Metropolis process at $0$ temperature, or alternatively $\lambda_t \equiv \infty$. In other words, the algorithm chooses random vertices uniformly at random, adds them to a growing independent set whenever possible, and never deletes them. Thus, the algorithm always terminates after each vertex is chosen at least once, which happens in finite time almost surely. For the remainder of this section, for a graph $G$, we shall denote by $I^G$, the (random) independent set obtained at the termination of the randomized greedy algorithm on $G$.

Having introduced the randomized greedy algorithm \cref{thm:bipartite} is now an immediate consequence of the following two results.
\begin{proposition} \label{prop:reduction}
    Suppose that there exists a family of bipartite graphs $\{G_n\}_{n\geq 0}$ on $\Theta(n)$ vertices, and a function $r:\N\to (0,1)$, such that,
    $$\PP\left(|I^{G_n}| > r(n)\alpha(G_n)\right) = e^{-n^\eta},$$
    for some $\eta > 0$.
    Then, for $\eps > 0$ small enough, there exists a family of bipartite graphs $\{\tilde G_n\}_{n\geq 0}$ on $\mathrm{poly}(n)$ vertices, such if $I_t$ stands for the metropolis process on $\tilde{G}_n$ with any temperature,
    $$\PP\left(\max\limits_{t < e^{n^{\eta'}}} |I_t| \geq \left(r(n) + \frac{2}{n^{1-\eps}}\right)\alpha(G_n)\right) = e^{-n^{\eta'}},$$
    for some $\eta' \leq \eta.$
\end{proposition}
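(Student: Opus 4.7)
Let $\ell := \lceil n^{\eps} \rceil$ and form the $\ell$-blowup $\hat G_n$: replace each vertex $v \in V(G_n)$ by an independent ``cloud'' $C_v$ of $\ell$ fresh vertices, and replace each edge $\{u,v\} \in E(G_n)$ by the complete bipartite graph between $C_u$ and $C_v$; this preserves bipartiteness. Let $\tilde G_n$ be the disjoint union of $N := n^D$ identical copies of $\hat G_n$ for a sufficiently large constant $D$, so $|V(\tilde G_n)| = Nn\ell = \mathrm{poly}(n)$. For an independent set $I$ of $\tilde G_n$, define the \emph{pattern} in copy $j$ by $P^{(j)}(I) := \{v \in V(G_n) : |I \cap C_v^{(j)}| \geq \ell/2\}$; each $P^{(j)}(I)$ is an independent set of $G_n$ because distinct occupied clouds cannot be adjacent. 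The plan is to show that with overwhelming probability the pattern process $\{P^{(j)}(I_t)\}_t$ is coupled to the output of the randomized greedy algorithm on $G_n$, and thus inherits the assumed size bound.

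\textbf{Filling and stickiness.} The coupling rests on two subclaims, both uniform in the (possibly adaptive) fugacity schedule $\{\lambda_t\} \subseteq [1,\infty]$. \emph{(Filling.)} Once a vertex of $C_v^{(j)}$ is added while all neighboring clouds are empty, the occupancy $k := |I_s \cap C_v^{(j)}|$ increases at rate $(\ell-k)/|V(\tilde G_n)|$ and decreases at rate at most $k/|V(\tilde G_n)|$ (since $\lambda_s \geq 1$), so $k$ dominates an Ehrenfest chain drifting to $\ell/2$ and reaches $\ell/2$ within $\mathrm{poly}(n)$ steps with probability $1 - e^{-\Omega(\ell)}$. \emph{(Stickiness.)} A Cheeger/bottleneck argument on the conditional hardcore measure $\mu(k) \propto \binom{\ell}{k}\lambda^k$ restricted to an isolated cloud shows the conductance of the cut at $k=0$ is $e^{-\Omega(\ell)}$ uniformly in $\lambda \geq 1$, so a cloud at occupancy $\geq \ell/2$ remains nonempty throughout any time window of length $e^{\Omega(\ell)} \geq e^{n^{\eta'}}$ except with probability $e^{-n^{\eta'}}$ (provided $\eta' < \eps$).

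\textbf{Reduction to RG and conclusion.} Because picks are uniform over $V(\tilde G_n)$, the order in which the clouds of a single copy first become occupied is a uniformly random permutation of $V(G_n)$, and, conditional on filling and stickiness holding, a cloud $C_v^{(j)}$ enters the pattern if and only if $v$ precedes all its neighbors in this permutation, which is precisely the rule defining the output of RG on $G_n$. Couple the pattern process in each copy $j$ to an independent execution of RG on $G_n$; union bounding the filling/stickiness failures over all $nN$ clouds against the assumed RG failure probability $e^{-n^\eta}$ and choosing $\eta' = \min(\eta,\eps/2)$ gives $|P^{(j)}(I_t)| \leq r(n)\alpha(G_n)$ for every copy $j$ and every $t \leq e^{n^{\eta'}}$ except with probability $e^{-n^{\eta'}}$. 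Each pattern cloud contributes at most $\ell$ vertices to $|I_t|$, and a separate concentration bound on transient occupants of non-pattern clouds contributes at most the extra $\tfrac{2}{n^{1-\eps}}\alpha(G_n)$ slack, yielding the claimed bound. The \textbf{main obstacle} is controlling these transient occupants of non-pattern clouds uniformly over the exponentially long time window and over all adaptive temperature schedules; quantifying their aggregate contribution is what produces the additive slack in the theorem and requires the most care, since the adversary can in principle time-share picks between filling a designated cloud and probing many of its blocked neighbors.
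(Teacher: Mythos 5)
Your high-level strategy (blow up each vertex into a cloud, take many disjoint copies, and couple the cloud-occupation pattern with randomized greedy) is the same as the paper's, but there is a genuine quantitative gap in the ``filling'' step that, with your parameters, breaks the whole argument. Starting from occupancy $k=1$, the cloud's load goes up with rate proportional to $\ell-1$ and down with rate proportional to $1$ (when $\lambda_t=1$), so the probability of falling back to $0$ before reaching $\ell/2$ is $\Theta(1/\ell)$, \emph{not} $e^{-\Omega(\ell)}$ as you claim. (This matches the paper's \cref{lem:nodeloading}, which only gives a de-loading probability of order $K^{-1/3}$ per cloud.) With your choice $\ell=\lceil n^{\eps}\rceil$ and $N=n^{D}$ copies, the expected number of clouds that become occupied and later empty is on the order of $n^{D+1-\eps}$, i.e.\ about $n^{1-\eps}$ de-loaded clouds \emph{per copy}; the randomized-greedy coupling therefore fails on essentially every copy and no union bound can rescue it. The paper avoids this by taking the blow-up size polynomially large (e.g.\ $K=n^{6}$ with $M=n$ copies, so that $nM/K^{1/3}=1$), which makes the total number of de-loaded clouds across \emph{all} copies at most $n^{\eps}$ with probability $1-e^{-\Omega(n^{\eps})}$; the at most $n^{\eps}$ copies containing a de-loaded cloud are then simply discarded, and it is exactly these discarded copies --- not ``transient occupants of non-pattern clouds'' --- that produce the additive $\frac{2}{n^{1-\eps}}\alpha(G_n)$ slack.

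Two further points. First, the issue you flag as the ``main obstacle'' (partially filled, sub-threshold clouds) is an artifact of defining the pattern by occupancy $\ge \ell/2$; the paper defines a cloud as occupied once its load is $\ge 1$, so in every non-discarded copy \emph{every} occupied cloud corresponds to a vertex accepted by RG, each contributes at most $K$ vertices, and there is nothing transient left to control. You leave this step unresolved, and it is not needed under the paper's bookkeeping. Second, your characterization of RG (``$v$ is added iff it precedes all its neighbors in the random permutation'') is not the rule of randomized greedy --- RG adds $v$ iff no \emph{previously added} neighbor precedes it --- though the coupling you intend (a cloud becomes occupied at its first pick iff no neighboring cloud is currently occupied, which under no-de-loading reproduces RG) is the correct one and is what the paper uses.
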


\begin{proposition} \label{prop:hardgreedy}
    Let $d_n < \frac{\log(n)}{100}$ be a sequence of numbers. There exists a sequence of bipartite graphs $G_n$ on $\Theta(n)$ vertices and average degree $\Theta(d_n)$, such that 
    $$\PP\left(|I^{G_n}| > (4+o(1))\frac{\log(d_n)}{d_n}\alpha(G_n)\right) = e^{-n^\eta},$$
    for some $\eta > 0$.
\end{proposition}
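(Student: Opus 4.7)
The plan is to take $G_n$ to be a typical realization of the Erd\H{o}s--R\'enyi random balanced bipartite graph $\mathcal{B}(n, n, d_n/n)$, with sides $L, R$ each of size $n$ and every potential edge included independently with probability $d_n/n$. Chernoff bounds give $|V(G_n)| = 2n$ and average degree $\Theta(d_n)$ with probability $1 - e^{-\Omega(n)}$, while $\alpha(G_n) \ge n$ since each side is itself an independent set. The entire statement will be proved jointly over the randomness of $G_n$ and of the greedy process, with the existence of a deterministic sequence $G_n$ then following via \cref{lemma:simple}.

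Using a principle--of--deferred--decisions coupling, I would reveal the random processing order $v_1, \ldots, v_{2n}$ first and then, at each step $t$, reveal only the edges from $v_t$ to the current independent set on the opposite side. Conditional on the filtration $\mathcal{F}_{t-1}$, the as-yet unrevealed edges from $v_t$ are still i.i.d.\ $\mathrm{Bernoulli}(d_n/n)$, so the indicator $X_t$ of adding $v_t$ satisfies
$$\PP\bigl[X_t = 1 \,\big|\, \mathcal{F}_{t-1}\bigr] = \left(1 - \frac{d_n}{n}\right)^{S^{(t-1)}_{\mathrm{opp}(v_t)}},$$
where $S_L^{(t)}, S_R^{(t)}$ are the running IS sizes on $L, R$ and $\mathrm{opp}(v_t)$ denotes the side opposite to $v_t$. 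The centered partial sum $M_t = \sum_{s \le t}(X_s - \EE[X_s \mid \mathcal{F}_{s-1}])$ is a martingale with $|M_t - M_{t-1}| \le 1$, so Azuma's inequality yields $\bigl|\,|I^{G_n}| - A_{2n}\bigr| \le n^{1/2+\eta/2}$ with probability $1 - e^{-\Omega(n^\eta)}$, where $A_{2n} := \sum_{s \le 2n}\EE[X_s \mid \mathcal{F}_{s-1}]$ is the predictable compensator.

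The remaining step is to show $A_{2n} \le (4-o(1))\, n\log d_n/d_n$ with probability $1 - e^{-n^\eta}$. The key observation is that once both $S_L^{(t)}, S_R^{(t)}$ exceed the threshold $n\log d_n/d_n$, every subsequent conditional addition probability is at most $(1-d_n/n)^{n\log d_n/d_n} \le 1/d_n$, so the post--threshold contribution to $A_{2n}$ is at most $2n/d_n = o(n\log d_n/d_n)$. To control the pre--threshold portion I would exploit the $L \leftrightarrow R$ symmetry and establish a \emph{balance lemma}: $|S_L^{(t)} - S_R^{(t)}| = O(\sqrt{n}\,\log n)$ uniformly in $t \le 2n$, so that $S_L$ and $S_R$ cross the threshold essentially together at a total IS size of $(2+o(1))\, n\log d_n/d_n$. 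The constant-factor slack in the statement is what turns the ``$2$'' into a ``$4$''.

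The main technical obstacle is this balance lemma. The underlying coupled fluid ODE $\phi_L'(t) = e^{-t}e^{-d\phi_R(t)}$, $\phi_R'(t) = e^{-t}e^{-d\phi_L(t)}$ admits an \emph{unstable} asymmetric mode whose linearization amplifies initial perturbations by a factor $1 + d_n$, so stochastic fluctuations do not simply average out. The plan is to apply Azuma to a Doob martingale on $S_L^{(t)} - S_R^{(t)}$ whose bounded differences are of order $O(d_n)$ because of this amplification, yielding concentration at scale $O(d_n\sqrt{n}) = O(\sqrt{n}\,\log n)$ in the regime $d_n \le \log n/100$. Cleanly tracking this amplification factor through the martingale differences, and doing so uniformly over $t$ (via a maximal inequality), is the most delicate part of the proof.
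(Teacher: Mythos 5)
Your high-level strategy coincides with the paper's: run randomized greedy on a random (essentially) balanced bipartite graph with edge probability $d_n/n$, model it as the two-sided counting process $(L_t,R_t)$ with conditional addition probabilities $(1-d_n/n)^{L_t}$ and $(1-d_n/n)^{R_t}$, prove a \emph{balance lemma} saying the two sides stay close, and deduce the $4\log d_n/d_n$ bound. Your endgame differs from the paper's in a legitimate way: the paper combines the balance lemma with a first-moment bound showing that whp no independent set has both $|I\cap L|$ and $|I\cap R|$ of size $(2+\Omega(1))n\log d_n/d_n$ (\cref{lem:bipartiteindep}), whereas you argue directly that once both sides pass $n\log d_n/d_n$ every subsequent addition has conditional probability at most $1/d_n$, so the compensator past that stopping time contributes only $O(n/d_n)$. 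Your route avoids the counting lemma, at the cost of a slightly fiddlier case analysis (you must also handle the run where only one side ever crosses the threshold, using balance to cap the other; this works and still lands below $4n\log d_n/d_n$ since $\log d_n\geq 2$).

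The genuine gap is the balance lemma itself, which is the technical heart of the result and which you only sketch. Your plan — a Doob martingale for $S_L^{(t)}-S_R^{(t)}$ with bounded differences $O(d_n)$ — requires proving that perturbing one step changes the conditional expectation of the final discrepancy by at most the cumulative amplification of the unstable mode. That amplification is $\exp\bigl(\tfrac{d_n}{n}\sum_t q_n^{R_t}\bigr)$ up to constants; bounding it by $\mathrm{poly}(d_n)$ (as your claimed scale $O(\sqrt{n}\log n)$ requires) needs $\sum_t q_n^{R_t}=O(n\log d_n/d_n)$, which is essentially the quantity you are trying to bound — so as written the argument is circular, and the crude a priori bound $q_n^{R_t}\le 1$ only gives amplification $e^{d_n}$, not $1+d_n$. (The resulting weaker discrepancy bound $\sqrt{n}\,e^{d_n}\le n^{0.51}$ would in fact still suffice for your endgame, but you would need to say this, and you would still need to actually establish the bounded-difference/contraction property for a process with positive feedback, plus the maximal version over $t$ — none of which is routine.) The paper resolves exactly this feedback problem differently: it writes the exact Doob decomposition $M_t=L_t-R_t+\tfrac12\sum_{s<t}(q_n^{R_s}-q_n^{L_s})$, applies Azuma to $M_t$, and then controls the drift term by an induction over a carefully chosen partition $\{t_i\}$ of $[0,2n]$ into $\Theta(2^{d_n})\le n^{1/100}$ intervals, each contributing at most $\tfrac12 n^{0.8}$ to the discrepancy, settling for the weaker but sufficient bound $|L_{2n}-R_{2n}|\le n^{0.9}$. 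To complete your proof you would either need to carry out the contraction argument rigorously (with a bootstrap to break the circularity) or fall back on an interval-induction of this type.
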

\cref{prop:reduction} essentially says that hard instances for randomized greedy can be used to construct hard instances for Metropolis, with any cooling schedule, and \cref{prop:hardgreedy} asserts that hard instances for randomized greedy do exist. \cref{thm:bipartite} immediately follows by coupling these two facts and appropriately choosing $d_n$. Thus, the rest of this section is devoted to the proofs of the two propositions. In \cref{sec:reduction} below we prove \cref{prop:reduction} and in \cref{sec:greedy} we prove \cref{prop:hardgreedy}.

\subsection{From randomized greedy to Metropolis} \label{sec:reduction}
We begin by explaining how to blow-up hard instances for randomized greedy into hard instances for Metropolis. Let $G$ be a graph on $n$ vertices $\{v_i\}_{i=1}^n$, and denote by $I^G$ the (random) independent set obtained by running randomized greedy on $G$. Assume that, for some $r(n) > 0$, and $\eta > 0$,
\begin{equation} \label{eq:baseass}
    \PP\left(|I^G| > r(n)\alpha(G)\right) \leq e^{-n^\eta}.
\end{equation}
Our goal is to show that there exists a graph for which similar estimates hold when we replace randomized greedy with the metropolis process with polynomially many steps.

Let us describe the hard instance. For $K,M > 10$, define the the graph $G^{K,M}$ in the following way:
\begin{enumerate}
    \item First, let $G^M$ be the disjoint union of $M$ copies of $G$.
    \item $G^{K,M}$  is obtained as the $K$ blow-up of $G^M$. That is, every node is replaced by an independent set of size $K$, and a complete bipartite graph replaces every edge.
\end{enumerate}
As usual, we shall write $I_t$ for the set maintained by MP on $G$ and enumerate the vertices of $G^{K,M}$ as $v_{i,m,k}$, the $k^{th}$ element in the blow-up of $v_i$ in the $ m^{th}$ copy of $G$. 
We shall refer to the set of  vertices $v_{i,m} = \{v_{i,m,k}\}_{k=1}^K$ corresponding to a vertex $v_i$ and in the $m^{th}$ copy,
as a \emph{cloud}. Clearly, every cloud has exactly $K$ vertices.
The following quantity shall play a central role,
$$v_{i,m}^t := |I_t \cap \{v_{i,m,1},\dots,v_{i,m,K}\}|,$$
the load in the cloud $v_{i,m}$, at time $t$. The idea is that once a cloud becomes occupied, MP is more likely to add more vertices from the cloud than to remove existing ones. Thus it is very unlikely that a cloud will become empty once occupied. This should be seen as an analogy to randomized greed; as long as no cloud has emptied one can simulate randomized greedy on a given component of the graph. Having many different copies ensures that on most copies no cloud will empty.

In light of this, we first show that once the load becomes positive, and so a cloud becomes occupied, it is very unlikely to drop to $0$ again.
\begin{lemma} \label{lem:nodeloading}
For any $t_0 \geq 0$, and $t_0<t <K^{\frac{K^{1/3}}{2}}$ it holds that,
$$P\left(v_{i,m}^{t+t_0} = 0 |v^{t_0}_{i,m} = 1\right) \leq \frac{C}{K^{1/3}},$$
for some absolute constant $C>0.$
\end{lemma}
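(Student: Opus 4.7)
The proof rests on a structural observation: whenever the load $X_t := v_{i,m}^t$ is at least $1$, every vertex in every cloud adjacent to $v_{i,m}$ must be unoccupied, since adjacent clouds are joined to $v_{i,m}$ by complete bipartite subgraphs. Consequently, while $X_t = j \ge 1$, the dynamics of $X_t$ reduce to an internal walk inside the cloud: selecting any of the $K-j$ unoccupied cloud-vertices always succeeds (no external conflicts), while selecting any of the $j$ occupied ones triggers a deletion with probability only $1/\lambda_t \le 1$. Thus, uniformly over any (possibly adaptive) schedule $\{\lambda_t\}$ with $\lambda_t \ge 1$,
\[
\Pr(X_{t+1}=j+1 \mid X_t=j) \;=\; \frac{K-j}{N}, \qquad \Pr(X_{t+1}=j-1 \mid X_t=j) \;\le\; \frac{j}{N},
\]
where $N := |V(G^{K,M})|$. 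The load walk from any $j \ge 1$ is therefore a strongly upward-biased birth-death chain with transitions independent of everything outside the cloud.

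I would then bound $\Pr(\tau_0 \le t_0 + t \mid X_{t_0}=1)$ by splitting into two phases at the threshold $L := \lceil K^{1/3}\rceil$. For Phase~1 (reaching $L$ from $1$), set $f(j) := \alpha^{-j}$ with $\alpha := K^{2/3}/2$. Since $p_j/q_j \ge (K-L)/L \ge \alpha$ for every $1 \le j \le L$, a direct computation shows that $f(X_{t \wedge \sigma})$ is a supermartingale for the stopping time $\sigma := \tau_0 \wedge \tau_L$, and optional stopping yields
\[
\Pr(\tau_0 < \tau_L \mid X_0 = 1) \;\le\; f(1) \;=\; \frac{2}{K^{2/3}}.
\]
For Phase~2 (avoiding $0$ after reaching $L$), my plan is to use a monotone coupling: share the deletion coins between the $\lambda_t$-driven load chain and the chain run with $\lambda \equiv 1$; since $1/\lambda_t \le 1$, the $\lambda \equiv 1$ chain deletes whenever the $\lambda_t$-chain does, so the coupled $\lambda \equiv 1$ load never exceeds the $\lambda_t$-load and $\{X_s = 0\} \subseteq \{X_s^{(1)} = 0\}$. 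The $\lambda \equiv 1$ load chain is reversible with stationary $\pi(j) \propto \binom{K}{j}$, so $\pi(0) = 2^{-K}$, and standard reversible hitting-time estimates (e.g.\ commute-time / Aldous bounds) give $E_L[\tau_0] \ge e^{\Omega(K)}$. Markov's inequality then yields $\Pr_L(\tau_0 \le K^{K^{1/3}/2}) \le K^{K^{1/3}/2} e^{-\Omega(K)} = o(1/K)$. Combining the two phases,
\[
\Pr(X_{t+t_0} = 0 \mid X_{t_0}=1) \;\le\; \Pr(\tau_0 < \tau_L) + \Pr_L(\tau_0 \le K^{K^{1/3}/2}) \;\le\; \frac{2}{K^{2/3}} + o(1/K) \;\le\; \frac{C}{K^{1/3}}.
\]

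The main difficulty I anticipate is the Phase~2 estimate, since the adaptive chain is neither reversible nor has a stationary distribution of its own; classical spectral or Dirichlet-form arguments do not apply off-the-shelf. Resolving this via the monotone coupling above is the critical technical step, and care is needed because when the two coupled chains occupy different cloud-configurations the sets of vertices available for deletion differ, so the coin-sharing has to be defined vertex-by-vertex inside the cloud. A fully self-contained alternative, avoiding reversibility entirely, would be to push a stronger Lyapunov argument through the whole time interval by taking a piecewise potential that is exponential on $\{0,\ldots,L\}$ and saturated above; this would give a time-uniform supermartingale bound on the probability of descending from any level above $L$ down to $0$, yielding the same conclusion directly from optional stopping and the horizon $t < K^{K^{1/3}/2}$.
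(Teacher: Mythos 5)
Your structural observation (an occupied cloud forces all neighboring clouds to be empty, so the load $X_t=v_{i,m}^t$ evolves as an autonomous birth--death chain with $p_j=(K-j)/N$ and $q_j\le j/N$ uniformly over any schedule with $\lambda_t\ge 1$) is correct and is exactly what drives the paper's proof. Your Phase~1 is also sound and in fact sharper than the paper's: the paper gets the dominant $O(K^{-1/3})$ term from a balls-and-bins collision bound on the first $K^{1/3}$ in-cloud selections, whereas your gambler's-ruin bound $\PP(\tau_0<\tau_L\mid X_0=1)\le 2/K^{2/3}$ via the supermartingale $\alpha^{-X}$ is valid (the ratio $p_j/q_j\ge (K-j)/j\ge K^{2/3}/2$ holds throughout $1\le j\le K^{1/3}$) and gives a better constant.

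The genuine gap is in Phase~2. A lower bound $\EE_L[\tau_0]\ge e^{\Omega(K)}$ does \emph{not} yield an upper bound on $\PP_L(\tau_0\le T)$ via Markov's inequality --- Markov bounds the probability that $\tau_0$ is \emph{large}, not small (e.g.\ $\tau_0$ could equal $1$ with probability $1/2$ and be astronomically large otherwise). Your fallback, a potential that is exponential on $\{0,\dots,L\}$ and saturated above $L$, also fails: at the saturation level $j=L$ the downward move strictly increases the potential while the upward move leaves it unchanged, so the supermartingale inequality is violated there; and no globally decreasing superharmonic function can exist since the chain is positive recurrent and hits $0$ eventually, so any correct argument must use the time horizon. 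Two standard repairs are available. The paper's route: from level $2K^{1/3}$, each excursion down to $K^{1/3}$ reaches $0$ before returning to $2K^{1/3}$ with probability at most $\bigl(2K^{1/3}/(K-2K^{1/3})\bigr)^{K^{1/3}}\le 4K^{-2K^{1/3}/3}$ by Feller's ruin estimate (valid in the band $[0,2K^{1/3}]$ where $p_j/q_j$ is still large), and a union bound over the at most $T<K^{K^{1/3}/2}$ excursions gives an $o(K^{-1/3})$ contribution. Alternatively, for the $\lambda\equiv 1$ comparison chain you introduce, the correct reversibility bound is $\PP_L(\tau_0\le T)\le\sum_{s\le T}P^s(L,0)\le T\,\pi(0)/\pi(L)=T/\binom{K}{L}\le T\,K^{-2K^{1/3}/3}$ (the paper uses precisely this estimate in its appendix for a different result); your monotone coupling would then transfer this to the adaptive chain, though as you note the coupling must be set up at the level of the load chains rather than configurations. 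Either repair closes the gap and recovers the lemma.
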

\begin{proof}
    Let $t_0' > t_0 + K^{\frac{1}{3}}$ be the first time that $K^{\frac{1}{3}}$ where chosen from the cloud $v_{i,m}$,
    and observe  
    \begin{equation} \label{eq:bins}
        \PP\left(\exists t \in [t_0, t_0']: v_{i,m}^{t} = 0|v_{i,m}^{t_0}=1\right)\geq \PP\left(v_{i,m}^{t_0'} = K^{\frac{1}{3}}|v_{i,m}^{t_0}=1\right)\geq \frac{1}{K^{\frac{1}{3}}}.
    \end{equation}
    Indeed, this estimate follows from a standard Balls and Bins argument. The probability that after the first $K^{\frac{1}{3}}$
    choices of vertices in $v_{i,m}$ the algorithm deletes a vertex from $I_t$ is at most 
    $$\sum_{i=1}^{K^{\frac{1}{3}}}\frac{i}{K} \geq \frac{K^{\frac{2}{3}}}{K} \geq \frac{1}{K^{\frac{1}{3}}},$$ since by the union bound this upper bounds the probability $K^{\frac{1}{3}}$ balls will have at least one collision when distributed randomly across $K$ bins. This estimate gives the right inequality in \eqref{eq:bins}. The left inequality in \eqref{eq:bins} then follows from the fact that to reach a load $K^{\frac{1}{3}}$ every chosen vertex, out of the $K^{\frac{1}{3}}$ was added to the cloud, and so no vertex was removed.

    Conditional on the event $v_{i,m}^{t_0'} = K^{\frac{1}{3}}$ denote now $Y_t = v_{i,m}^{t+t'_0}$ and observe that as long as $Y_t< 2K^{\frac{1}{3}}$ we have that, at any $t$ in which the value $Y_t$ changes,
    $$P\left(Y_t  - Y_{t-1} = 1\right) \geq \frac{K - 2K^{\frac{1}{3}}}{K},$$
    $$P\left(Y_t  - Y_{t-1} = -1\right) \leq \frac{2K^{\frac{1}{3}}}{K}.$$
    Define the stopping time $\tau = \min\{t| Y_t = 2K^{\frac{1}{3}} \text{ or } Y_t = 0\}$. Similar to the proof of \cref{thm:dense}, since $Y_t$ is stochastically dominated by a biased random walk with the above increments and since $Y_0 = K^{\frac{1}{3}}$, the results in \cite[(2.8), Chapter XIV.2]{feller1968probability} imply 
    $$\PP\left(Y_\tau = 0\right) \leq \left(\frac{2K^{\frac{1}{3}}}{K-2K^{\frac{1}{3}}}\right)^{K^{\frac{1}{3}}}\leq \left(4\frac{K^{\frac{1}{3}}}{K}\right)^{K^\frac{1}{3}} \leq  4K^{-\frac{2K^{\frac{1}{3}}}{3}}.$$
    The proof concludes by iterating this argument for a polynomial number of steps and applying a union bound.
    \end{proof}

For fixed time $t > 0$ we now define the number of deloaded clouds as 
    $$\mathrm{deload}(t):= \#\{v_{i,m}| \exists t'<t''\leq t \text{ such that } v_{i,m}^{t'} = 1 \text{ and } v_{i,m}^{t''}=0\}.$$   
In words, $\mathrm{deload}(t)$ measures the number of clouds that were at some point occupied (had at least one vertex chosen by the algorithm) and later become unoccupied (all vertices in the cloud were deleted at a later point). The main upshot of the previous result is that the number of deloaded vertices remains very small after polynomially many iterations.
\begin{lemma} \label{lem:deloadingnumber}
  Suppose that $C\frac{nM}{K^{\frac{1}{3}}} \geq 1$. Then, for any $t <K^{\frac{K^{1/3}}{2}}$, and $\eps > 0$
    $$\PP\left(\mathrm{deload}(t) \geq n^\eps \frac{nM}{K^{\frac{1}{3}}}\right) = e^{-\Omega(n^\eps)}.$$
\end{lemma}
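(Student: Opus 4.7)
\textbf{Proof plan for \cref{lem:deloadingnumber}.} The plan is to write $\mathrm{deload}(t)=\sum_{i,m} D_{i,m}$, where $D_{i,m}$ is the indicator that cloud $v_{i,m}$ is first loaded to $1$ and later becomes empty by time $t$, and show that this sum is stochastically dominated by a $\mathrm{Bin}(nM,p)$ variable with $p:=C/K^{1/3}$. A Chernoff upper-tail bound then closes the argument.

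Fix an arbitrary enumeration of the $nM$ clouds, $v_{i_1,m_1},\ldots,v_{i_{nM},m_{nM}}$, and let $\mathcal{F}_j$ be the $\sigma$-algebra generated by the full MP trajectory together with the values $D_{i_1,m_1},\ldots,D_{i_{j-1},m_{j-1}}$. For cloud $v_{i_j,m_j}$, let $\tau_j$ be the first time its load reaches $1$ (set $\tau_j=\infty$ if this does not occur in $[0,t]$). If $\tau_j=\infty$ then $D_{i_j,m_j}=0$; otherwise \cref{lem:nodeloading}, applied from time $\tau_j$, bounds the probability that the load returns to $0$ before time $t<K^{K^{1/3}/2}$ by $C/K^{1/3}$. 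The drift estimates used to prove \cref{lem:nodeloading} depend only on the internal dynamics of the cloud and are worst-case with respect to the configuration of neighboring clouds, so the bound persists under conditioning on any external history. Summing over the (random) value of $\tau_j$, we conclude $\PP(D_{i_j,m_j}=1\mid\mathcal{F}_j)\le p$ almost surely.

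Now I invoke the standard coupling (e.g., \cite{levin2017markov}) turning a sequence of Bernoullis with conditional success probability bounded by $p$ into a stochastic domination by $X\sim\mathrm{Bin}(nM,p)$, with mean $\mu=C\cdot nM/K^{1/3}\ge1$. Setting $a=n^\eps\cdot nM/K^{1/3}$ we have $a/\mu=n^\eps/C$, and the Chernoff bound $\PP(X\ge a)\le(e\mu/a)^a$ (valid once $n$ is large enough that $n^\eps\ge eC$) yields
\[
\PP\bigl(\mathrm{deload}(t)\ge a\bigr)\le\left(\frac{eC}{n^\eps}\right)^{a}=\exp\!\bigl(-a\,(\eps\log n-\log(eC))\bigr).
\]
Using $a\ge\mu\cdot(n^\eps/C)\ge n^\eps/C$ by the hypothesis $C\cdot nM/K^{1/3}\ge1$, the right-hand side is $\exp(-\Omega(n^\eps\log n))=e^{-\Omega(n^\eps)}$, as required.

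The main obstacle is confirming the conditional bound $\PP(D_{i_j,m_j}=1\mid\mathcal{F}_j)\le p$, since \cref{lem:nodeloading} is stated as an unconditional bound on a single cloud. The resolution is that its proof controls the internal load process by a biased random walk whose drift estimates only use that the cloud has at least $K-2K^{1/3}$ unoccupied vertices and that the within-cloud removal rate is bounded; both hold regardless of what is happening elsewhere in the graph, so the bound transfers under the $\mathcal{F}_j$-conditioning. If one worried about measurability subtleties, an alternative is a Freedman-style martingale bound on $\mathrm{deload}(t)-\EE[\mathrm{deload}(t)\mid\mathcal{F}_{j-1}]$, but I do not expect this refinement to be necessary.
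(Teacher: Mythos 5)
Your proposal is correct and follows essentially the same route as the paper's proof, which simply asserts that $\mathrm{deload}(t)$ is stochastically dominated by $\mathrm{Bin}(nM, C/K^{1/3})$ via \cref{lem:nodeloading} and applies Chernoff. Your write-up merely fills in the justification for that domination (the conditional per-cloud bound given the external history), which the paper leaves implicit.
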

\begin{proof}
    Observe that $G^{K,M}$ contains $nM$ clouds. Thus, for $t <K^{\frac{K^{1/3}}{2}}$, by \cref{lem:nodeloading}, $\mathrm{deload}(t)$ is stochastically dominated by $B:=\mathrm{binomial}\left(nM, \frac{C}{K^{\frac{1}{3}}}\right)$. Since $\EE[B] = C\frac{nM}{K^{\frac{1}{3}}} \geq 1$, by Chernoff's inequality,
    $$\PP\left(B \geq n^\eps \frac{nM}{K^{\frac{1}{3}}}\right) \leq e^{-\Omega(n^\eps)}.$$
    The proof is complete.
\end{proof}
We can now prove \cref{prop:reduction}.


\begin{proof}[Proof of \cref{prop:reduction}]
Set $M = n$ and $K = n^6$, so that $\frac{nM}{K^{\frac{1}{3}}} = 1$. 
By \cref{lem:deloadingnumber}, with probability $e^{-\Omega(n^\eps)}$ at most $n^\eps$ clouds were de-loaded by time $t \leq K^{\frac{K^{1/3}}{2}}$. With no loss of generality, they belong to the first $n^\eps$ copies in $G^M$, $\{G_m\}_{m=1}^{n^\eps}$. On the other $M - n^\eps$ copies, $\{G_m\}_{m=n^\eps}^M$, since no deloading happened we can couple Metropolis on $G_m$ with (a lazy version of) randomized greedy on the base graph $G$ in the following way:

If Metropolis chooses vertex $v_{i,m,k}$, from cloud $v_{i,m}$ we let randomized greedy choose $v_{i}$ from $G$. Since all clouds have the same size, the probability of choosing a vertex from cloud $v_{i,m}$ is equal to the probability of choosing the vertex $v_i$. Moreover, since edges exist only between clouds, whenever $v_{i,m,k}$ can be added to $I_t$ then either $v_i$ can also be added, if $v_{i,m}^t = 0$, or $v_i$ was already added, if $v_{i,m}^t > 0$. If metropolis removes $v_{i,m,k}$ then randomized greedy does nothing and maintains its chosen $v_i$ in the independent set. This coupling remains valid until the first deloading happens in $G_m$.

Thus, by the assumption \eqref{eq:baseass} on the base graph $G$, with probability $1-e^{-\Omega(n^\eps)}$ for every $m \geq n^\eps$, $\PP\left(|I_t \cap G_m| > r(n)K\alpha(G)\right) \leq e^{-n^\alpha}$, where we allow metropolis to fill in all $K$ vertices from every occupied cloud. So, suppose that $\eps < \alpha$, then 
$$\PP\left( \forall m \geq n^\eps,\, |I_t \cap G_m| < r(n)K\alpha(G)\right) \geq 1 - 2Me^{-n^\alpha}.$$
On the other hand, clearly for $m \leq n^\eps$ we have $|I_t \cap G_i| \leq Kn$. 
It follows that 
$$\PP\left(|I_t| \leq Kn\cdot n^\eps + r(n)KM\alpha(G)\right) \geq 1 - 2Me^{-n^\alpha}.$$
Finally, by construction $G^{K,M}$ has a maximal independent set of size $\alpha(G)Mk$, so the approximation ratio is $\frac{Kn\cdot n^\eps+ r(n)KM\alpha(G)}{\alpha(G)MK}$. Let us choose now $M = n$ and $K = n^2$ 
to get an approximation ratio of $\frac{n^\eps}{\alpha(G)} + r(n) \leq \frac{2}{n^{1-\eps}} + r(n)$. Here we have used the fact that if $G$ is a bipartite graph on $n$ vertices then $\alpha(G) \geq \frac{n}{2}$. 
\end{proof}

\subsection{Randomized Greedy on bipartite graphs} \label{sec:greedy}
To prove \cref{prop:hardgreedy} we will show that random bipartite graphs, with sufficiently small average degree, can serve as hard instances for randomized greedy.
Let $G = \mathcal{B}(2n,p_n)$ be a random bipartite graph on $2n$ vertices where each vertex is assigned a side independently and uniformly at random. Moreover, for some $d:=d_n$ satisfying $d_n \leq \frac{1}{100}\log(n)$ each edge appears independently with probability $p_n = \frac{d}{n}$. 
The randomized greedy algorithm on $G$ is equivalent to the following pair of stochastic processes $(L_t,R_t)$, defined as follows:
\begin{enumerate}
    \item Initialize $L_0 = R_0 = 0$ and let $V \in \{-1,1\}^{2n}$ be a drawn uniformly at random.
    \item At time $t>0$, if $V(t) = 1$ then $R_{t} = R_{t-1}$ and 
    $$\PP\left(L_{t} = L_{t-1} + 1\right) = (1-p_n)^{R_t}, \PP\left(L_{t} = L_{t-1}\right) = 1- (1-p_n)^{R_t}.$$ 
    Otherwise, $V(t) = -1$ and $L_{t} = L_{t-1}$ with
    $$\PP\left(R_{t} = R_{t-1} + 1\right) = (1-p_n)^{L_t}, \PP\left(R_{t} = R_{t-1}\right) = 1- (1-p_n)^{L_t}.$$ 
\end{enumerate}
To unpack the definition, $V$ is the random order in which we reveal the vertices of $V$, because the greedy algorithm cannot backtrack we can disregard vertices that were chosen twice, and, due to symmetry, we only care if a vertex belongs to the left or right side ($1$ or $-1$ at random). $L_t$ represents the independent set on the left side, after $t$ vertices were revealed, and $R_t$ represents the right side. When a new vertex is revealed on the right side, the left side's size remains unchanged, while the right side grows if and only if the newly revealed vertex is not connected to any vertex in the independent set on the left side. Since each edge appears with probability $p_n$ and at time $t$ the independent set on the left side is of size $L_t$, the right side grows with probability $(1-p_n)^{L_t}$. Altogether, the size of the set randomized greedy finds is of size $L_{2n} + R_{2n}.$ Moreover observe that with overwhelming probability the size of the maximal independent set is $(1+o(1))n$.

We first claim that the discrepancy between $L_{2n}$ and $R_{2n}$ must necessarily be small.
\begin{lemma} \label{lem:discrep}
It holds that
    $$\PP\left(|L_{2n}-R_{2n}| \geq n^{0.9}\right) \leq e^{-\Omega(n^{0.1})}.$$
\end{lemma}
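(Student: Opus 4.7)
The plan is to show concentration of $L_{2n} - R_{2n}$ via the Doob martingale obtained by revealing the per-step randomness sequentially. Write $\xi_s = (V(s), U(s))$ where $U(s) \sim \mathrm{Uniform}[0,1]$ is the coin driving the insertion attempt at step $s$, and set $M_t := \EE[L_{2n} - R_{2n} \mid \xi_1, \ldots, \xi_t]$. Because the joint distribution of $(L_{2n}, R_{2n})$ is invariant under the involution $V \leftrightarrow -V$ (which simply swaps the roles of the two sides), we have $M_0 = 0$, while $M_{2n} = L_{2n} - R_{2n}$. It then suffices to establish a uniform bound on $|M_t - M_{t-1}|$ and apply Azuma--Hoeffding.

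The core of the argument is a monotone coupling. Equip the state space with the partial order $(L, R) \preceq (L', R') \iff L \leq L'$ and $R \geq R'$: a state dominates another if it has at least as many left-vertices and at most as many right-vertices. Any two of the three possible one-step outcomes $\{(L_{t-1},R_{t-1}),\, (L_{t-1}{+}1, R_{t-1}),\, (L_{t-1}, R_{t-1}{+}1)\}$ are $\preceq$-comparable, so replacing $\xi_t$ by an independent copy $\xi_t'$ places the two processes in a $\preceq$-comparable pair at $\ell_1$-distance at most $2$. I claim further that if the two coupled processes share all subsequent randomness $(V(s), U(s))_{s > t}$, then $\preceq$ is preserved. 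The verification is a short case check on $V(s+1) \in \{\pm 1\}$: the dominant state has the smaller right-side, hence the larger left-insertion probability $(1-p_n)^{R}$, so using a common $U(s+1)$ automatically guarantees that whenever the dominated state inserts on the left, the dominant state does too; the $V = -1$ case is symmetric.

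With monotonicity in hand, let $D_s := (L^2_s - L^1_s) + (R^1_s - R^2_s) \ge 0$ be the coupled gap and apply the elementary inequality $(1-p)^a - (1-p)^b \leq p(b-a)$ for $a \leq b$ in each of the two choices of $V(s+1)$ to obtain
\[
    \EE[D_{s+1} \mid \mathcal{F}_s] \;\le\; D_s + \tfrac{p_n}{2}(R^1_s - R^2_s) + \tfrac{p_n}{2}(L^2_s - L^1_s) \;=\; \bigl(1 + \tfrac{p_n}{2}\bigr) D_s.
\]
Iterating from $D_t \le 2$ up to $s = 2n$ gives $\EE[D_{2n}] \le 2\bigl(1 + \tfrac{p_n}{2}\bigr)^{2n} \le 2 e^{p_n n} = 2 e^{d_n}$, and the hypothesis $d_n \leq \log(n)/100$ makes this at most $2 n^{1/100}$. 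Since $\preceq$ implies $\bigl|(L^1_{2n} - R^1_{2n}) - (L^2_{2n} - R^2_{2n})\bigr| = D_{2n}$, averaging over the shared future randomness and then over the replacement copy $\xi_t'$ yields the uniform martingale-difference bound $|M_t - M_{t-1}| \leq 2 e^{d_n} \leq 2 n^{1/100}$.

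Finally, Azuma--Hoeffding applied to $\{M_t\}_{t=0}^{2n}$ with per-step bound $c = 2 n^{1/100}$ gives
\[
    \PP\bigl(|L_{2n} - R_{2n}| \geq n^{0.9}\bigr) \;\le\; 2 \exp\!\left(-\frac{n^{1.8}}{16\, n\, n^{1/50}}\right) \;=\; e^{-\Omega(n^{0.78})},
\]
which is comfortably stronger than the claimed $e^{-\Omega(n^{0.1})}$. The main technical hurdle I anticipate is the careful verification of the monotonicity claim through all sub-cases of $V(s+1) \in \{\pm 1\}$; the crucial structural point is that a larger right-side decreases the chance of a left-insertion, which is exactly what makes common-uniform coupling respect $\preceq$. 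Once monotonicity is established, the exponential-drift computation for $D_s$ and the Azuma step are both routine.
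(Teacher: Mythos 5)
Your proof is correct, but it takes a genuinely different route from the paper's. The paper compensates the drift explicitly, working with the martingale $L_t - R_t + \tfrac{1}{2}\sum_{s<t}\left(q_n^{R_s}-q_n^{L_s}\right)$; Azuma then controls this compensated quantity, but the compensator itself must be controlled separately, which the paper does via a multi-scale induction over a carefully chosen sequence of $\Theta(2^{d})$ time intervals, each argued to contribute at most $O(n^{0.8})$ to the discrepancy. That interval bookkeeping is where the factor $2^{d}\le n^{1/100}$ and the final exponent $n^{0.1}$ come from. You instead bound the increments of the Doob martingale of $L_{2n}-R_{2n}$ directly, via a monotone coupling under the order $(L,R)\preceq(L',R')\iff L\le L',\,R\ge R'$: perturbing one step's randomness moves the state by $\ell_1$-distance at most $2$ within a comparable pair, monotonicity is preserved by sharing future coins (a larger right side strictly decreases the left-insertion probability, which is exactly the right direction), and the coupled gap has multiplicative drift at most $1+p_n/2$ per step, so it inflates by at most $e^{d_n}\le n^{1/100}$ over the horizon. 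I checked the case analysis for $\preceq$-preservation and the contraction estimate $(1-p)^a-(1-p)^b\le p(b-a)$; both are sound, as is the symmetry argument giving $M_0=0$. Your method avoids the interval decomposition entirely and yields the stronger tail $e^{-\Omega(n^{0.78})}$, which more than implies the stated $e^{-\Omega(n^{0.1})}$; the paper's approach has the minor advantage of controlling $\max_{t}|L_t-R_t|$ along the whole trajectory, but only the endpoint is needed downstream in the proof of \cref{prop:hardgreedy}, so nothing is lost.
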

We shall momentarily prove the lemma, but first, let us show that it implies that randomized greedy cannot find a large independent set and prove \cref{prop:hardgreedy}
\begin{proof}[Proof of \cref{prop:hardgreedy}]
 Let $G$ be a random graph as above and recall that $I^G$ stands for the independent set obtained by randomized greedy. Fix $\eps > 0$ and define the event $A:= \{|L_{2n} - R_{2n}| < n^{0.7}\}.$ We will show that 
 $$\PP\left(R_{2n} + L_{2n} \geq \frac{(4+\eps)\log(d)}{d}n|A\right) \leq e^{-\Omega(n^{0.1})}.$$
 Since $|I^G| = L_{2n} + R_{2n}$, the result will follow from the law of total probability. Indeed,
 \begin{align*}
     \PP\left(|I^G| \geq \frac{(4+\eps)\log(d)}{d}n\right)&=\PP(A)\PP\left(R_{2n} + L_{2n} \geq\frac{(4+\eps)\log(d)}{d}n|A\right) \\
     &\ \ \ \ \ \ + \PP(A^c)\PP\left(R_{2n} + L_{2n} \geq \frac{(4+\eps)\log(d)}{d}n|A^c\right)\\
     &\leq \PP\left(R_{2n} + L_{2n} \geq \frac{(4+\eps)\log(d)}{d}n|A\right) + \PP(A^c) \leq 2e^{-\Omega(n^{0.1})},
 \end{align*}
 where the inequality before last is due to \cref{lem:discrep}.
 So, to finish the proof, suppose that 
 $$|I^G| \geq \frac{(4+\eps)\log(d)}{d}n.$$
 Hence, we may assume, with no loss of generality, that $L_{2n} \geq \frac{|I^G|}{2} \geq \frac{(2+\frac{\eps}{2})\log(d)}{d}n$, and under $A$, if $n$ is large enough, we will get,
 $$R_{2n} \geq L_{2n} - n^{0.9} > \frac{(2+ \frac{\eps}{4})\log(d)}{d}n.$$
 By standard results about independent sets in random bipartite graphs (see \cref{lem:bipartiteindep} below), it cannot hold that both $R_{2n}$ and $L_{2n}$ are larger than $$\frac{(2+ \frac{\eps}{4})\log(d)}{d}n.$$ 
\end{proof}
while proved before ~\cite{clementi1999improved,perkins2023hardness}, we outline the argument to bound from above the size of balanced bipartite independent sets in random bipartite graphs next.
\begin{lemma} \label{lem:bipartiteindep}
Let $G$ be a bipartite graph with sides $L,R$ both of cardinally $(1+o(1))n$ and where every edge between $L,R$ occurs independently with probability $d/n$ with $e^2<d$. Then with high probability there is no independent set $I$ in $G$ with both $|I \cap L| \geq \frac{(2+o(1))\log(d)}{d}n$
and $|I \cap R| \geq  \frac{(2+o(1))\log(d)}{d}n$.
\end{lemma}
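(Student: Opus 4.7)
\medskip

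\noindent\textbf{Proof proposal.} The plan is a standard first-moment / union bound argument. Write $k = k_n := \lceil (2 + \eta) \frac{\log d}{d} n \rceil$ where $\eta = \eta_n = o(1)$ is a small positive sequence chosen slowly enough that the $(2 + o(1))$ threshold in the statement is absorbed. It suffices to show that with high probability there is no pair $(S, T)$ with $S \subseteq L$, $T \subseteq R$, $|S| = |T| = k$, and no edges of $G$ between $S$ and $T$, since any independent set with both intersections of size at least $\frac{(2+o(1))\log d}{d} n$ contains such a pair. Let $N$ denote the number of such pairs.

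I would compute the expectation directly:
\[
\EE[N] \;=\; \binom{|L|}{k}\binom{|R|}{k}\left(1 - \frac{d}{n}\right)^{k^2}
\;\leq\; \left(\frac{e(1+o(1))n}{k}\right)^{2k} \exp\!\left(-\frac{d k^2}{n}\right),
\]
using $\binom{m}{k} \leq (em/k)^k$, $|L|,|R| = (1+o(1))n$, and $1-x \leq e^{-x}$. Taking logarithms and substituting $n/k = \frac{d}{(2+\eta)\log d}\,(1+o(1))$ and $dk/n = (2+\eta)\log d$, one obtains
\[
\log \EE[N] \;\leq\; 2k\Bigl[\,1 + o(1) + \log d - \log\bigl((2+\eta)\log d\bigr) - (2+\eta)\log d\,\Bigr] + 2k\log d \cdot 0,
\]
which simplifies to
\[
\log \EE[N] \;\leq\; 2k\Bigl[\,1 + o(1) - \log\bigl(2\log d\bigr) - \eta\log d\,\Bigr].
\]
For $d > e^2$, we have $\log(2 \log d) > \log 4 > 1$, so the bracketed quantity is bounded above by a strictly negative constant (once $\eta$ and the $o(1)$ term are small enough), and hence $\EE[N] \to 0$. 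Markov's inequality then gives $\PP(N \geq 1) \to 0$, which is exactly what we need.

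The main steps, in order, are: (i) reduce the existence of a bad independent set to existence of a pair of subsets of exact size $k$; (ii) bound $\EE[N]$ using the two standard inequalities above; (iii) verify the arithmetic showing the exponent is negative under $d > e^2$; (iv) apply Markov. The only mild subtlety, and the step I expect to require the most care, is tracking the $(1+o(1))$ factors coming from $|L|, |R| = (1+o(1))n$ and from $\eta_n = o(1)$, and making sure that the choice of $\eta_n$ still lets the bracketed quantity stay bounded away from zero so that $\log \EE[N] \to -\infty$ (rather than merely $\EE[N]$ staying bounded). Given the strict inequality $d > e^2$ which leaves a positive gap $\log(2\log d) - 1 > 0$, this is straightforward: any $\eta_n$ tending to $0$ sufficiently slowly (for instance $\eta_n = 1/\log\log n$) works. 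No concentration beyond Markov is required because the lemma only asserts a high-probability nonexistence statement.
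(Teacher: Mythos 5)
Your proposal is correct and follows essentially the same route as the paper: a first-moment/union bound over pairs of size-$k$ subsets on the two sides, bounding $\binom{n}{k}^2(1-d/n)^{k^2}$ via $\binom{n}{k}\le (en/k)^k$ and $1-x\le e^{-x}$, and checking that the resulting exponent is negative once $d>e^2$ (the paper packages this as $(e^2/16)^k$). One small bookkeeping slip in your intermediate display: since $dk^2/n = k(2+\eta)\log d$, the contribution to the bracket multiplied by $2k$ should be $-\tfrac{2+\eta}{2}\log d$ rather than $-(2+\eta)\log d$, so the correct simplification is $2k\bigl[1+o(1)-\log\bigl((2+\eta)\log d\bigr)-\tfrac{\eta}{2}\log d\bigr]$; this is still bounded away from zero from below by a negative constant, so your conclusion stands.
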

\begin{proof}
For $k = \frac{(2+o(1))\log(d)}{d}n$, by the union bound, the probability there exists such a set is at most 
$$\binom{n}{k}^2 \left(1-\frac{d}{n}\right)^{k^2}$$ which is at most
$$\left(\left(\frac{en}{k}\right)^2\exp\left(-\frac{dk}{n}\right)\right)^k.$$ 
Substituting back the value of $k$, we get
$\left(\frac{e^2d^2}{4\log^2(d)}d^{-2}\right)^k<\left(e^2/16\right)^k \leq e^{-n^{0.9}}$, where the last inequality holds assuming $d \leq \frac{\log(n)}{100}.$
\end{proof}
Let us now prove \cref{lem:discrep}.
\begin{proof}[Proof of \cref{lem:discrep}]
    Set $q_n = (1-p_n)$ and for $t=0,\dots, {2n}$, define the process 
    $$M_t =  L_t-R_t + \frac{1}{2}\sum\limits_{s=0}^{t-1}\left(q_n^{R_{s}}- q_n^{L_{s}}\right).$$
    We claim that $M_t$ is a martingale with respect to the filtration generated by $(V_t, L_t, R_t)$. Indeed, it will be enough to show, 
    $$\EE\left[L_t - L_{t-1}-(R_t-R_{t-1})|L_{t-1},R_{t-1}, V_{t-1}\right] = \frac{1}{2}\left(q_n^{L_{t-1}}-q_n^{R_{t-1}}\right).$$
    Since $V_t$ and $L_t$ are independent
    $$\PP\left(L_t - L_{t-1}-(R_t-R_{t-1}) = 1|L_{t-1},R_{t-1}, V_{t-1}\right) = \PP\left(V_t = 1, L_t = L_{t-1} + 1\right) = \frac{1}{2}q_n^{R_{t-1}},$$
    and similarly
    $$\PP\left(L_t - L_{t-1}-(R_t-R_{t-1}) = -1|L_{t-1},R_{t-1}, V_{t-1}\right) = \frac{1}{2}q_n^{L_{t-1}}.$$
    which produces the desired result.
    It is also clear that $|M_t-M_{t-1}| < 2$ almost surely. So, by Azuma's inequality and a union bound, since $M_0 = 0,$
    \begin{equation} \label{eq:azumaapply}
        \PP\left(\max\limits_t|M_t| \geq n^{0.7}\right) = e^{-\Omega(n^{0.1})}.
    \end{equation}
    Our goal is to prove that $|L_{2n} - R_{2n}|$ is small. That will be achieved by dividing the time interval $[0,2n]$ into a constant (in $d$) number of time intervals, bounded by $\{t_i\}_{i>0}$, and showing that on each interval $[t_i,t_{i+1}]$ the auxiliary term in $M_t$, $\frac{1}{2}\sum\left(q_n^{R_{s}}- q_n^{L_{s}}\right)$ cannot increase by much. Under this condition \eqref{eq:azumaapply} will imply the same for $L_{t_i} - R_{t_i}$, and eventually for $L_{2n} - R_{2n}$.
    
    As suggested, we define a sequence of times by $t_0 = 0$ and $t_i = t_{i-1} + a_i\frac{n}{d}$, for $i > 0$, where 
    $$a_i = q_n^{(i-1)n^{0.8}}\frac{1-q_n^{n^{0.8}}}{1-q_n^{(i-1)n^{0.8}}}.$$
    A quick calculation reveals that,
    \begin{equation} \label{eq:cancelation}
        (1-q_n^{(i-1)n^{0.8}})a_i -q_n^{(i-1)n^{0.8}}(1-q_n^{n^{0.8}})=0.
    \end{equation}
    Moreover, the elementary inequalities $e^{\frac{x}{1-x}} \leq 1-x \leq e^{-x}$, which are valid for every $x <1$, show that for any fixed $i$, as long as $n$ is large enough,
    \begin{align} \label{eq:alower} 
        a_i \geq \frac{1}{2(i-1)}.
    \end{align}
    Thus, after taking an exponential in $d$ number of steps we will reach $2n$.
    
    To show that $L_t-R_t$ remains small, we define the events $A_i =\{\max\limits_{t\leq t_i} L_t-R_t < i\cdot n^{0.8}\}$. We would like to proceed by induction and estimate $\PP\left(A_{i+1}|A_i\right)$. Towards this, under the event $A_i$, suppose there exists $t_{i} \leq t < t_{i+1}$, such that,
    $L_t - R_t - (L_{t_i} - R_{t_i}) = \lfloor n^{0.8}\rfloor,$ and assume that $t$ is the minimal time that satisfies it.
    In this case, we shall show that $\frac{1}{2}|\sum\limits_{s=t_{i}}^{t-1}\left(q_n^{R_{s}}- q_n^{L_{s}}\right)|$ cannot be large. The idea is that for every $t_i \leq s \leq t$, under $A_i$ 
    $$
    |q_n^{R_{s}}- q_n^{L_{s}}| \leq 1 - q_n^{L_s - R_s} = 1 - q_n^{L_{t_i}-R_{t_i}}q_n^{L_s - R_s -(L_{t_i}-R_{t_i})} \leq 1 - q_n^{(i-1)n^{0.8}}q_n^{L_s - R_s -(L_{t_i}-R_{t_i})}.$$
    To maximize those terms under the constraint $L_s - R_s < n^{0.8}$, we would have exactly $\lfloor n^{0.8}\rfloor $ terms in which $L_s - R_s -(L_{t_i}-R_{t_i})$ increases linearly to $n^{0.8}$ and $t-n^{0.8}$ constant terms in which $L_s - R_s -(L_{t_i}-R_{t_i}) = n^{0.8}$. Thus, since $t<t_{i+1}$, and from the construction of $t_{i+1} -t_i$
      \begin{align*}
        \frac{1}{2}|\sum\limits_{s=t_{i}}^{t-1}\left(q_n^{R_{s}}- q_n^{L_{s}}\right)| &\leq \frac{1}{2}\left(t-t_{i}- q_n^{(i-1)n^{0.8}}\left[\sum\limits_{i=1}^{n^{0.8}}q_n^i + (t-t_{i}-n^{0.8})q_n^{n^{0.8}}\right]\right)\\
        &\leq \frac{1}{2}\left(t_{i+1}-t_{i}- q_n^{(i-1)n^{0.8}}\left[\sum\limits_{i=1}^{n^{0.8}}q_n^i + (t_{i+1}-t_{i}-n^{0.8})q_n^{n^{0.8}}\right]\right)\\
        &= \frac{n}{2d}\left(a_i - q_n^{(i-1)n^{0.8}}(1-q_n^{n^{0.8}}) - a_iq_n^{n^{0.8}}\right) + \frac{1}{2}q_n^{(i-1)n^{0.8}}q^{n^{0.8}}n^{0.8}\\
        & \leq \frac{1}{2}n^{0.8}.
    \end{align*}
    The equality uses the formula for a sum of a geometric series, along with the fact $\frac{1}{1-q_n} = \frac{n}{d}$. The last inequality uses \eqref{eq:cancelation} to eliminate the first term.
    This bound then implies,
    $$M_t - M_{t_i}= L_t - R_t - (L_{t_i}-R_{t_i}) + \frac{1}{2}\sum\limits_{s=t_i}^{t-1}\left(q_n^{R_{s}}- q_n^{L_{s}}\right) \geq n^{0.8} - \frac{1}{2}|\sum\limits_{s=t_{i}}^{t-1}\left(q_n^{R_{s}}- q_n^{L_{s}}\right)| \geq \frac{1}{2}n^{0.8}.$$
    But, as long as $n$ is large enough, this is impossible when
    $\max\limits_t |M_t| \leq n^{0.7}$, as in \eqref{eq:azumaapply}.
    We can thus conclude, for every $i > 0$
    \begin{equation} \label{eq:induction}
        \PP\left(A_{i+1}|A_i\right) \leq \PP\left(\max_t |M_t| > n^{0.7}\right) \leq e^{-\Omega(n^{0.1})}.
    \end{equation}
    Finally, from \eqref{eq:alower} we can choose $i_* = \Theta(2^d)$ to satisfy,
    $$t_{i_*} = \frac{n}{2d}\sum\limits_{i=1}^{t_*}a_i\geq\frac{n}{d}\log(i_*) > 2n. $$
    So, by repeating \eqref{eq:induction} for $\Theta(2^d)$ times, as long as $n$ is large enough, we can see the existence of a constant $c_d>0$, which depends only on $d$, such that
    $$\PP\left(L_{2n}-R_{2n}>n^{0.9}\right)\leq \PP\left(L_{2n}-R_{2n}>2^dn^{0.8}\right) \leq \PP(A_{i_*}) \leq e^{-c_dn^{0.1}},$$
    where we have used that $2^d \leq d^{\frac{1}{100}}$, since $d \leq \frac{1}{100} \log(n)$.
    By symmetry, $L_{2n} - R_{2n}$ and $R_{2n} - L_{2n}$ have the same distribution, so,
    $$\PP\left(|L_{2n}-R_{2n}|>n^{0.9}\right) \leq 2\PP\left(L_{2n}-R_{2n}>n^{0.9}\right),$$
    and the proof is finished.
\end{proof}

\section{Lower and Upper Bounds for the Time Complexity of Simulated Annealing in Trees}\label{sec:tree}

We now turn to study the performance of MP on trees and forests with an aim to prove \cref{thm:forest}. The proof of \cref{thm:forest} is separated into two parts. First, in \cref{sec:forsetlower} we construct a \emph{determinstic} hard instance for MP. The hard instance is a union of identical trees, each having weak hardness guarantees. It turns out that taking polynomially many copies of the same tree is enough to imply exponential lower bounds, and so proves the first point of \cref{thm:forest}. In \cref{sec:constantapproxtrees} we apply recent results about mixing times of MP on graphs with bounded treewidth to prove the second point of \cref{thm:forest}.
\subsection{Exponential lower bound} \label{sec:forsetlower}
We consider MP with an arbitrary non-adaptive fugacity schedule $\lambda_t$, and fix a small constant $\varepsilon \in (0,1/2)$.
To construct the hard instance, first consider a ``star-shaped'' tree $T_k$ that consists of a root $r$ connected to $k$ nodes 
$a_1, \ldots, a_k$, and each node $a_i$ has a single leaf neighbor $b_i$. That is, $T_k$ consists of a root connected to $k$ edge-disjoint length-$2$ paths. Let $A = \{a_1,\ldots,a_k\}$ and $B = \{b_1,\ldots,b_k\}$. The unique optimal independent set in $T_k$ is $r \cup B$, which has size $k+1$.

We first describe the first phase of the algorithm and show that it is hard for MP to include the root.
\begin{lemma}[Burn-in phase]
\label{lem:tree-burn-in}
The following holds with high ($1-o_k(1)$) probability. After $m=k^{1/2-\varepsilon}$ iterations, MP has at least $(1/2-\varepsilon)m$ vertices from $A$ (and, as a result, does not include the root $r$).
\end{lemma}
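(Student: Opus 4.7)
The plan is to show that with high probability the first $m = k^{1/2-\varepsilon}$ iterations pick $m$ distinct vertices from $A \cup B$ lying in distinct branches, which forces the process to behave like randomized greedy on these branches and makes the fugacity schedule irrelevant. Concretely, I would define two bad events: $E_1 = \{r \text{ is picked at least once among } v_1,\ldots,v_m\}$ and $E_2 = \{\text{two of the picks come from the same branch } \{a_i,b_i\}\}$. Since the picks $v_1,\ldots,v_m$ are i.i.d.\ uniform over the $2k+1$ vertices of $T_k$, a union bound gives $\PP(E_1) \le m/(2k+1) = O(k^{-1/2-\varepsilon})$ and $\PP(E_2) \le 2\binom{m}{2}/(2k+1) = O(k^{-2\varepsilon})$, both $o_k(1)$.

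The key observation is that on $E_1^c \cap E_2^c$ no vertex is ever picked twice (a repeated pick is a special case of a same-branch pick), so the deletion branch of \cref{alg:MP-1step} is never triggered and the fugacities $\{\lambda_t\}$ play no role whatsoever. Moreover, every pick $v_j \in A \cup B$ has all its neighbors outside $I_{j-1}$: its same-branch partner has not yet been picked, and neither has the root $r$. Consequently every one of the $m$ picks is added to the independent set, and $|I_m \cap A|$ equals the number $N_A$ of picks that land in $A$.

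It remains to lower bound $N_A$. Since $N_A \sim \mathrm{Bin}(m, k/(2k+1))$ has mean $(1/2 - o(1))m$, a Chernoff/Hoeffding bound gives $\PP(N_A < (1/2-\varepsilon)m) \le \exp(-\Omega(\varepsilon^2 m))$, which is $o_k(1)$ since $m = k^{1/2-\varepsilon} \to \infty$. A final union bound over $E_1$, $E_2$, and the low-$N_A$ event yields $|I_m \cap A| \ge (1/2-\varepsilon)m$ with probability $1 - o_k(1)$, and then $r \notin I_m$ is automatic because any $a_i \in I$ blocks the root. I do not foresee a real obstacle; the cleanest point to verify is that on the good event the fugacity is genuinely irrelevant, at which moment the argument collapses to a standard birthday-type calculation combined with a Chernoff bound.
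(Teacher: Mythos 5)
Your proposal is correct and follows essentially the same route as the paper's proof: the paper also argues that with high probability the root is never selected, the $m$ picks hit distinct branches (so no deletions occur and the fugacity is irrelevant), and a concentration bound gives at least $(1/2-\varepsilon)m$ picks in $A$. You have merely made the "standard balls-and-bins argument" the paper cites explicit via the birthday-type union bound and the Chernoff bound, which is exactly the intended computation.
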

\begin{proof}
With high probability, the root $r$ is not selected during the first $m$ iterations. By a standard balls-and-bins argument~\cite{mitzenmacher2017probability}, with high probability the vertices selected during the first $m$ iterations are all distinct, and furthermore at most one vertex from each branch is selected ($a_i$ or $b_i$, but not both). Thus, MP adds all the vertices that are selected during the first $m$ iterations and does not delete any. With high probability, at least $(1/2-\varepsilon)m$ of the selected vertices belong to $A$.
\end{proof}

Now condition on the first $m$ steps of MP and suppose the high-probability event from \cref{lem:tree-burn-in} holds. Letting $I \subseteq [k]$ denote the set of branches $i$ for which MP includes $a_i$ at the end of $m$ steps, we have $|I| \ge (1/2-\varepsilon)m$. We will now focus on only the branches $I$ and show that MP continues to include at least one of the corresponding $a_i$'s for exponential time, blocking the root $r$ from being added.

For the analysis, it will be convenient to consider an auxiliary process MP', defined as follows. MP' has the same underlying graph $T_k$ and starts at the same state as MP at timestep $m$. MP' has the same update rule as MP except it \emph{never} adds the root $r$ (even if $r$ is selected and none of its neighbors are present). The random choices of the two processes are coupled so that MP and MP' share the same state until the first time that MP adds the root.

Fix a time horizon $T > m$. Our goal is to show that with high probability, at each timestep $t \le T$, MP includes at least one of the vertices $\{a_i \,:\, i \in I\}$. It suffices to show that the same holds for MP', as the presence of any $a_i$ prevents MP from adding the root. Therefore we turn our attention to the analysis of MP'.

Now reveal, and condition on, the choice of which \emph{branch} is selected by MP' at each timestep $t$ for $m < t \le T$. That is, we reveal the variable $\sigma_t$ which is equal to $i \in I$ if MP' chooses either $a_i$ or $b_i$ at step $t$, and equal to $\emptyset$ otherwise (i.e., if MP' selects $r$ or a branch outside $I$). The random choice between $a_i$ and $b_i$ is not revealed yet.

For $i \in I$, let $s_i$ denote the number of timesteps $t$ (where $m < t \le T$) for which $\sigma_t = i$. For each $i \in I$, we define an associated Markov chain $X^{(i)}_0,X^{(i)}_1,\ldots,X^{(i)}_{s_i}$ on the state space $\{\cA,\cB,\emptyset\}$ with initial state $X^{(i)}_0 = \cA$. The state $\cA$ encodes that MP' has $a_i$ (but not $b_i$), the state $\cB$ encodes that MP' has $b_i$ (but not $a_i$), and the state $\emptyset$ encodes that MP' has neither $a_i$ nor $b_i$. Every time branch $i$ updates (that is, $t$ such that $\sigma_t = i$), the Markov chain $X^{(i)}$ updates according to the following rules.
\begin{itemize}
    \item If the previous state is $\emptyset$, the new state is
    \[ \begin{cases} \cA & \text{with probability }1/2, \\ \cB & \text{with probability } 1/2. \end{cases} \]
    \item If the previous state is $\cA$, the new state is
    \[ \begin{cases} \emptyset & \text{w.p. }\,1/(2\lambda_t), \\ \cA & \text{w.p. }\, 1-1/(2\lambda_t). \end{cases} \]
    \item If the previous state is $\cB$, the new state is
    \[ \begin{cases} \emptyset & \text{w.p. }\,1/(2\lambda_t), \\ \cB & \text{w.p. }\, 1-1/(2\lambda_t). \end{cases} \]
\end{itemize}
Note that (conditioned on $\{\sigma_t\}$) the Markov chains $X^{(1)},\ldots,X^{(|I|)}$ are independent (since MP' never adds the root, by definition).

\begin{lemma}
For any fixed $i \in I$ and any fixed $0 \le \ell \le s_i$, we have $\PP(X^{(i)}_\ell = \cA) \ge 1/4$.
\end{lemma}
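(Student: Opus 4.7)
The plan is a short direct inductive argument on the marginals of $X^{(i)}$, exploiting the $\cA \leftrightarrow \cB$ symmetry of the update rule together with the hypothesis $\lambda_t \ge 1$. I would write $p_\ell = \PP(X^{(i)}_\ell = \cA)$, $q_\ell = \PP(X^{(i)}_\ell = \cB)$, $r_\ell = \PP(X^{(i)}_\ell = \emptyset)$, and read the one-step recurrences directly off the three transition rules in the lemma's statement. The key observation is that the symmetric combination $p_\ell - q_\ell$ and the scalar $r_\ell$ already form a closed two-dimensional system: letting $\lambda$ denote the value of the fugacity at the global time of the $\ell$-th update of this branch,
\[
p_{\ell+1} - q_{\ell+1} \;=\; \bigl(1 - \tfrac{1}{2\lambda}\bigr)(p_\ell - q_\ell),
\qquad
r_{\ell+1} \;=\; \tfrac{1 - r_\ell}{2\lambda}.
\]

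Two easy inductions then finish the job. Since $p_0 - q_0 = 1$ and the multiplier $1 - 1/(2\lambda) \in [\tfrac{1}{2},1)$ is nonnegative for every $\lambda \ge 1$, we keep $p_\ell \ge q_\ell$ for all $\ell$. Likewise $r_0 = 0 \le \tfrac{1}{2}$, and the second recurrence with $\lambda \ge 1$ gives $r_{\ell+1} \le (1 - r_\ell)/2 \le \tfrac{1}{2}$, so $r_\ell \le \tfrac{1}{2}$ throughout. Combining the two bounds using $p_\ell + q_\ell + r_\ell = 1$,
\[
p_\ell \;\ge\; \tfrac{p_\ell + q_\ell}{2} \;=\; \tfrac{1 - r_\ell}{2} \;\ge\; \tfrac{1}{4},
\]
which is exactly the desired claim.

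There is no real obstacle here. The argument is uniform in the fugacity schedule precisely because both recurrences behave monotonically in $\lambda$ over $[1,\infty)$, so the inhomogeneity of the $X^{(i)}$ chain (the step-$\ell$ transition uses $\lambda$ evaluated at a data-dependent global time) is harmless. The one point to keep clean in the write-up is that $X^{(i)}$ is indexed by the subsequence of updates of branch $i$ and is obtained after conditioning on $\{\sigma_t\}$; but since the inductive bound holds pointwise for every realization of the selection sequence (and any deterministic sequence of fugacity values encountered along it), we may integrate out that conditioning at the end without changing the conclusion.
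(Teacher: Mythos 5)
Your proof is correct and takes essentially the same approach as the paper: an induction on the marginal law of $X^{(i)}_\ell$ that exploits the $\cA$--$\cB$ symmetry of the update rule and $\lambda_t \ge 1$. The only (cosmetic) difference is that you decouple the induction into two scalar recurrences — $p_\ell - q_\ell$ contracts by $1-\tfrac{1}{2\lambda}$ so $p_\ell \ge q_\ell$, and $r_{\ell+1} = (1-r_\ell)/(2\lambda) \le \tfrac12$ — and then combine, whereas the paper runs a single joint induction on the two hypotheses $a \ge \tfrac14$ and $a \ge b$; both yield the same bound.
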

\begin{proof}
Proceed by induction on $\ell$, strengthening the induction hypothesis to include both (i) $\PP(X^{(i)}_\ell = \cA) \ge 1/4$ and (ii) $\PP(X^{(i)}_\ell = \cA) \ge \PP(X^{(i)}_\ell = \cB)$. The base case $\ell = 0$ is immediate, as $X^{(i)}$ is defined to start at state $\cA$. For the inductive step, we analyze the update rules given above. If $X^{(i)}_\ell$ takes values $\cA,\cB,\emptyset$ according to the vector of probabilities $(a,b,c)$, then $X^{(i)}_{\ell+1}$ is distributed according to the vector of probabilities
\[ (a',b',c') := \left(\frac{c}{2} + \left(1-\frac{1}{2\lambda_t}\right)a\,,\; \frac{c}{2} + \left(1-\frac{1}{2\lambda_t}\right)b\,,\; \frac{1}{2\lambda_t}(a+b) \right). \]
By induction we have $a \ge 1/4$ and $a \ge b$. Note that from $a \ge b$ we immediately have $a' \ge b'$, which proves (ii). For (i), $a \ge b$ implies $b \le 1/2$ and so
\[ a' = \frac{c}{2} + \left(1-\frac{1}{2\lambda_t}\right)a \ge \frac{c+a}{2} = \frac{1-b}{2} \ge \frac{1}{4}, \]
completing the proof.
\end{proof}

Therefore, by independence across branches, we have for any fixed timestep $t$ (with $m < t \le T$), the probability that MP' has \emph{none} of the vertices $\{a_i \,:\, i \in I\}$ is at most $(3/4)^{|I|}$. Taking a union bound over $t$, the probability that MP' intersects $\{a_i \,:\, i \in I\}$ until time $T$ is at least $1 - T (3/4)^{|I|}$. For $T \le \exp(k^{1/2-2\varepsilon})$ this is $1-o_k(1)$, recalling $|I| \ge (1/2-\varepsilon)m = (1/2-\varepsilon)k^{1/2-\varepsilon}$. As discussed above, we conclude the same result for the original process MP.

\begin{proposition} \label{prop:hardtrees}
    Fix any constant $\eta \in (0,1/2)$ and consider MP run on $T_k$ with an arbitrary non-adaptive fugacity schedule $\lambda_t$. With probability $1-o_k(1)$, MP does not add the root $r$ at any point during the first $\exp(k^\eta)$ iterations (and thus does not find a maximum independent set).
\end{proposition}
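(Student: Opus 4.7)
The plan is to show that, conditional on a high-probability event from the burn-in phase, the auxiliary process MP' (which is MP but forbidden to add the root) maintains at least one occupied vertex $a_i$ among a large identified set of branches throughout the time horizon $\exp(k^\eta)$. Since MP and MP' can be coupled to agree until MP first picks up the root, any such persistence bound on MP' translates directly into a bound on MP, which is exactly what the proposition asks.

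First I would invoke \cref{lem:tree-burn-in}: after the first $m = k^{1/2-\varepsilon}$ steps (with $\varepsilon$ small enough that $1/2-\varepsilon > \eta$), with probability $1-o_k(1)$ the set $I \subseteq [k]$ of branches $i$ with $a_i \in I_m$ satisfies $|I| \ge (1/2-\varepsilon)m = (1/2-\varepsilon)k^{1/2-\varepsilon}$. From this moment on, I would switch to analyzing MP', condition on the sequence $\sigma_{m+1}, \sigma_{m+2}, \ldots$ that records only which \emph{branch} (i.e., which pair $\{a_i,b_i\}$, or $\emptyset$) is selected at each step, and observe that under this conditioning the evolutions of the different branches are independent (since MP' never adds the root, branches never constrain one another). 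For each $i \in I$, the restriction of MP' to the pair $\{a_i,b_i\}$ is the three-state Markov chain $X^{(i)}$ on $\{\cA,\cB,\emptyset\}$ with transitions as described in the text, started at $\cA$.

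The crux of the argument is the pointwise lower bound $\PP(X^{(i)}_\ell = \cA) \ge 1/4$ for every $\ell$ and every schedule $\lambda_t$. I would prove it by induction on $\ell$, strengthening the hypothesis to include the symmetry estimate $\PP(X^{(i)}_\ell = \cA) \ge \PP(X^{(i)}_\ell = \cB)$. The base case is immediate since $X^{(i)}_0 = \cA$ deterministically, and the inductive step is a short direct computation from the update rules: if the current law is $(a,b,c)$ with $a \ge b$, the new law $(a',b',c')$ satisfies $a' \ge b'$ automatically, and $a' \ge (a+c)/2 = (1-b)/2 \ge 1/4$.

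Combining independence across branches, for any fixed timestep $t$ the probability that MP' has none of $\{a_i : i \in I\}$ is at most $(3/4)^{|I|}$. A union bound over $t \le T = \exp(k^\eta)$ gives a total failure probability at most $T \cdot (3/4)^{|I|} = \exp(k^\eta - c \, k^{1/2-\varepsilon})$, which is $o_k(1)$ provided $\eta < 1/2-\varepsilon$. The main obstacle I anticipate is ensuring the independence-and-conditioning bookkeeping is clean: the chains $X^{(i)}$ are only independent after conditioning on the sequence of selected branches, and one must check that the uniform bound $\PP(X^{(i)}_\ell = \cA) \ge 1/4$ holds pointwise for every realization of $\{\sigma_t\}$ (which it does, since the induction uses only the transition rules, not the schedule of branch selections). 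Combining with the initial burn-in event and the MP/MP' coupling yields the desired statement.
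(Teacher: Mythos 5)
Your proposal is correct and follows essentially the same route as the paper: the burn-in lemma, the root-forbidding auxiliary process MP$'$ coupled to MP, conditioning on the branch-selection sequence to get independent three-state chains, the inductive bound $\PP(X^{(i)}_\ell = \cA) \ge 1/4$ with the strengthened hypothesis $\PP(\cA)\ge\PP(\cB)$, and the final $(3/4)^{|I|}$ union bound over the time horizon. The bookkeeping point you flag (independence only after conditioning on $\{\sigma_t\}$, and the pointwise validity of the $1/4$ bound for every realization) is exactly how the paper handles it.
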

The hardness result in Theorem \ref{thm:forest} now follows by bootstrapping \cref{prop:hardtrees}.

\begin{proof}[Proof of first item in \cref{thm:forest}]
Let us first prove the result for a forest rather than a tree. Consider $k$ disjoint copies of $T_k$, which has a total of $k(2k+1)$ vertices. Additional isolated vertices can be added to create a forest of any desired size. To find the maximal independent set in the union, MP must add the root of every copy of $T_k$. Conditional on the exact number of updates per copy, MP evolves independently on each copy. Thus, the independence of the different copies and \cref{prop:hardtrees} yield an exponential bound $\exp(-\Omega(k))$ on the probability that it adds all the roots by time $\exp(k^\alpha)$.

We now modify the construction, turning the forest into a tree. Add a new vertex and connect it to the root of each $T_k$ (and also to any isolated vertices that were added to pad the instance size). The maximum independent set still includes the root of every $T_k$. If it were not for the new vertex, we have from above that with probability $1-\exp(-\Omega(k))$, there is at least one copy of $T_k$ where the root is never added before time $\exp(k^\alpha)$. With the new vertex present, this copy of $T_k$ still will never add the root before time $\exp(k^\alpha)$, as the new vertex can only affect $T_k$ by \emph{blocking} the root from being added. This completes the proof.
\end{proof}
\subsection{The Metropolis process can efficiently find approximate solutions on trees} \label{sec:constantapproxtrees}
Here we show that if one seeks an \emph{approximate} solution to the size of the largest independent set in a forest then MP finds such a solution efficiently: 

\begin{proof}[Proof of second item in \cref{thm:forest}]
Let $N$ be the maximum cardinality of an independent set in $F_n$. Since $F_n$ is bipartite we have that $N \geq n/2$. Set the fugacity as $\lambda \ge 4^{1/\varepsilon+\frac{\log_2 n}{\varepsilon n}}$. We upper bound the contribution of
independent sets whose size is smallar than $(1-\epsilon)N$ to the partition function:
$$\sum_{|I| < (1-\varepsilon)N} \lambda^{|I|} \leq 2^n \lambda^{(1-\varepsilon)N} \leq \lambda^N/n.$$
Therefore, for the stationary distribution of MP with this value of $\lambda$ the probability we get
an independent set of size smaller than $(1-\varepsilon) N$ is at most $1/n$. The desired result now follows from the fact~\cite{eppstein2021rapid,chen2023combinatorial} that MP on forests mixes in time $n^{O(1)}$ (assuming $\lambda$ is a fixed constant independent of $n$) as well as the fact~\cite{levin2017markov} that after $t_{\mathrm{mix}} \log n$
iterations of MP the total variation distance between the chain and the stationary distribution is at most $1/n$. Therefore, MP finds an independent set of size at least $(1-\varepsilon)N$ with probability at least $1-2/n$.
\end{proof}

\begin{remark}
    For graphs of treewidth $t$ it is known that the mixing time of MP is $n^{O(t)}$ ~\cite{eppstein2021rapid,chen2023combinatorial}. Since any graph with treewidth $t$ is $t$-degenerate (has a vertex of degree at most $t$) it has an independent set of size at least $n/(t+1)$. The argument above shows that assuming $t$ is constant (independent of $n$), MP will find a $1-\varepsilon$ approximation to the optimal solution in polynomial time with high probability. 
\end{remark}

\section{Conclusion}

We have proved super polynomial lower bounds on the time complexity of SA and the MP when approximating the size of a maximum independent set in several graph families. 
Many questions remain. Can we strengthen our lower bounds in the sparse case to hold for SA and not only for MP with a fixed temperature?
What are the approximation guarantees these heuristics can achieve efficiently for the independent set problem? Our techniques show MP cannot achieve efficiently an approximation ratio of $\omega(\log^3 n/n)$. We believe that MP will achieve worse approximation ratios compared to state-of-the-art methods~\cite{feige2004approximating} that achieve roughly $\Theta(\log^3 n/n)$
approximation (up to low-order multiplicative terms). Studying this question further is an interesting future direction.

Studying the performance of SA in approximating the independent set problem for specific graphs families could yield new findings about the strengths and weaknesses of these algorithms. For example, for planar graphs (such as trees) we proved that SA will not find the optimal solution in polynomial time. 
While NP-hard to solve exactly,
the independent set problem admits a polynomial-time approximation scheme in planar graphs~\cite{baker1994approximation}.
How well does SA approximate $\alpha(G)$ in planar graphs? As another example, we do not know whether SA
(or MP) finds an optimal independent set in random bipartite graphs in polynomial time where each edge between the sides occurs independently with probability $p$. Efficient recovery of optimal independent sets in such graphs could extend to additional graph families such as bipartite expanders. If SA fails to find the optimal solutions in polynomial time in random bipartite graphs that could lead to a simpler construction of hard bipartite instances.

It would be interesting to study the efficacy of SA in approximating other NP-hard problems such as vertex cover and min bisection. Finally, one could examine whether the lower bounds established here for MP when initialized from the empty set could lead to lower bounds on the mixing time of the Glauber dynamics for the hard-core model when initialized from the empty set.

\bibliographystyle{alpha}
\bibliography{reference1}

\newcommand{\etalchar}[1]{$^{#1}$}
\begin{thebibliography}{MRR{\etalchar{+}}53}

\bibitem[AK89]{aarts1989simulated}
Emile Aarts and Jan Korst.
\newblock {\em Simulated annealing and {B}oltzmann machines}.
\newblock Wiley-Interscience Series in Discrete Mathematics and Optimization.
  John Wiley \& Sons, Ltd., Chichester, 1989.
\newblock A stochastic approach to combinatorial optimization and neural
  computing.

\bibitem[AV94]{aldous1994go}
David Aldous and Umesh Vazirani.
\newblock `go with the winners' algorithms.
\newblock In {\em Proceedings of the 35th Annual Symposium on Foundations of
  Computer Science}, pages 492--501. IEEE, 1994.

\bibitem[Bak94]{baker1994approximation}
Brenda~S. Baker.
\newblock Approximation algorithms for {NP}-complete problems on planar graphs.
\newblock {\em J. Assoc. Comput. Mach.}, 41(1):153--180, 1994.

\bibitem[BGG18]{bansal2015lovasz}
Nikhil Bansal, Anupam Gupta, and Guru Guruganesh.
\newblock On the {L}ov\'{a}sz theta function for independent sets in sparse
  graphs.
\newblock {\em SIAM J. Comput.}, 47(3):1039--1055, 2018.

\bibitem[BK22]{bhangale2022ug}
Amey Bhangale and Subhash Khot.
\newblock U{G}-hardness to {NP}-hardness by losing half.
\newblock {\em Theory Comput.}, 18:Paper No. 5, 28, 2022.

\bibitem[BR04]{bhatnagar2004torpid}
Nayantara Bhatnagar and Dana Randall.
\newblock Torpid mixing of simulated tempering on the {P}otts model.
\newblock In {\em Proceedings of the {F}ifteenth {A}nnual {ACM}-{SIAM}
  {S}ymposium on {D}iscrete {A}lgorithms}, pages 478--487. ACM, New York, 2004.

\bibitem[BT92]{bertsimas1993simulated}
Dimitris Bertsimas and John Tsitsiklis.
\newblock Simulated annealing.
\newblock In {\em Probability and algorithms}, pages 17--29. Nat. Acad. Press,
  Washington, DC, 1992.

\bibitem[Che23]{chen2023combinatorial}
Zongchen Chen.
\newblock Combinatorial approach for factorization of variance and entropy in
  spin systems.
\newblock {\em arXiv preprint arXiv:2307.08212}, 2023.

\bibitem[CI01]{carson2001hill}
Ted Carson and Russell Impagliazzo.
\newblock Hill-climbing finds random planted bisections.
\newblock In {\em Proceedings of the {T}welfth {A}nnual {ACM}-{SIAM}
  {S}ymposium on {D}iscrete {A}lgorithms ({W}ashington, {DC}, 2001)}, pages
  903--909. SIAM, Philadelphia, PA, 2001.

\bibitem[CLP99]{chen1999lifting}
Fang Chen, L\'{a}szl\'{o} Lov\'{a}sz, and Igor Pak.
\newblock Lifting {M}arkov chains to speed up mixing.
\newblock In {\em Annual {ACM} {S}ymposium on {T}heory of {C}omputing
  ({A}tlanta, {GA}, 1999)}, pages 275--281. ACM, New York, 1999.

\bibitem[CMZ23]{chen2022almost}
Zongchen Chen, Elchanan Mossel, and Ilias Zadik.
\newblock Almost-linear planted cliques elude the {M}etropolis process.
\newblock In {\em Proceedings of the 2023 {A}nnual {ACM}-{SIAM} {S}ymposium on
  {D}iscrete {A}lgorithms ({SODA})}, pages 4504--4539. SIAM, Philadelphia, PA,
  2023.

\bibitem[COE15]{coja2015independent}
Amin Coja-Oghlan and Charilaos Efthymiou.
\newblock On independent sets in random graphs.
\newblock {\em Random Structures Algorithms}, 47(3):436--486, 2015.

\bibitem[CT99]{clementi1999improved}
Andrea~EF Clementi and Luca Trevisan.
\newblock Improved non-approximability results for minimum vertex cover with
  density constraints.
\newblock {\em Theoretical Computer Science}, 225(1-2):113--128, 1999.

\bibitem[DFJ02]{dyer2002counting}
Martin Dyer, Alan Frieze, and Mark Jerrum.
\newblock On counting independent sets in sparse graphs.
\newblock {\em SIAM J. Comput.}, 31(5):1527--1541, 2002.

\bibitem[DPV08]{dasgupta2008algorithms}
Sanjoy Dasgupta, Christos~H Papadimitriou, and Umesh~Virkumar Vazirani.
\newblock {\em Algorithms}.
\newblock McGraw-Hill Higher Education New York, 2008.

\bibitem[DRW22]{doerr2022simulated}
Benjamin Doerr, Amirhossein Rajabi, and Carsten Witt.
\newblock Simulated annealing is a polynomial-time approximation scheme for the
  minimum spanning tree problem.
\newblock In {\em G{ECCO}'22---{P}roceedings of the {G}enetic and
  {E}volutionary {C}omputation {C}onference}, pages 1381--1389. ACM, New York,
  [2022] \copyright 2022.

\bibitem[EF21]{eppstein2021rapid}
David Eppstein and Daniel Frishberg.
\newblock Rapid mixing of the hardcore glauber dynamics and other markov chains
  in bounded-treewidth graphs.
\newblock {\em arXiv preprint arXiv:2111.03898}, 2021.

\bibitem[Fei04]{feige2004approximating}
Uriel Feige.
\newblock Approximating maximum clique by removing subgraphs.
\newblock {\em SIAM J. Discrete Math.}, 18(2):219--225, 2004.

\bibitem[Fel68]{feller1968probability}
William Feller.
\newblock {\em An introduction to probability theory and its applications.
  {V}ol. {I}}.
\newblock John Wiley \& Sons, Inc., New York-London-Sydney, third edition,
  1968.

\bibitem[Fil91]{fill1991eigenvalue}
James~Allen Fill.
\newblock Eigenvalue bounds on convergence to stationarity for nonreversible
  {M}arkov chains, with an application to the exclusion process.
\newblock {\em Ann. Appl. Probab.}, 1(1):62--87, 1991.

\bibitem[FK16]{frieze2016introduction}
Alan Frieze and Micha{\l} Karo{\'n}ski.
\newblock {\em Introduction to random graphs}.
\newblock Cambridge University Press, 2016.

\bibitem[GG10]{gamarnik2010randomized}
David Gamarnik and David~A. Goldberg.
\newblock Randomized greedy algorithms for independent sets and matchings in
  regular graphs: exact results and finite girth corrections.
\newblock {\em Combin. Probab. Comput.}, 19(1):61--85, 2010.

\bibitem[GV11]{gharan2011submodular}
Shayan~Oveis Gharan and Jan Vondr\'{a}k.
\newblock Submodular maximization by simulated annealing.
\newblock In {\em Proceedings of the {T}wenty-{S}econd {A}nnual {ACM}-{SIAM}
  {S}ymposium on {D}iscrete {A}lgorithms}, pages 1098--1116. SIAM,
  Philadelphia, PA, 2011.

\bibitem[Haj88]{hajek1988cooling}
Bruce Hajek.
\newblock Cooling schedules for optimal annealing.
\newblock {\em Math. Oper. Res.}, 13(2):311--329, 1988.

\bibitem[Hal02]{halperin2002improved}
Eran Halperin.
\newblock Improved approximation algorithms for the vertex cover problem in
  graphs and hypergraphs.
\newblock {\em SIAM J. Comput.}, 31(5):1608--1623, 2002.

\bibitem[H{\aa}s99]{hastad1996clique}
Johan H{\aa}stad.
\newblock Clique is hard to approximate within {$n^{1-\epsilon}$}.
\newblock {\em Acta Math.}, 182(1):105--142, 1999.

\bibitem[HR94a]{halldorsson1994greed}
Magn{\'u}s Halld{\'o}rsson and Jaikumar Radhakrishnan.
\newblock Greed is good: Approximating independent sets in sparse and
  bounded-degree graphs.
\newblock In {\em Proceedings of the twenty-sixth annual ACM symposium on
  theory of computing}, pages 439--448, 1994.

\bibitem[HR94b]{halldorsson1994improved}
Magn\'{u}s~M. Halld\'{o}rsson and Jaikumar Radhakrishnan.
\newblock Improved approximations of independent sets in bounded-degree graphs.
\newblock In {\em Algorithm theory---{SWAT} '94 ({A}arhus, 1994)}, volume 824
  of {\em Lecture Notes in Comput. Sci.}, pages 195--206. Springer, Berlin,
  1994.

\bibitem[Ing89]{ingber1989very}
Lester Ingber.
\newblock Very fast simulated re-annealing.
\newblock {\em Mathematical and computer modelling}, 12(8):967--973, 1989.

\bibitem[JAMS89]{johnson1989optimization}
David~S Johnson, Cecilia~R Aragon, Lyle~A McGeoch, and Catherine Schevon.
\newblock Optimization by simulated annealing: An experimental evaluation; part
  i, graph partitioning.
\newblock {\em Operations research}, 37(6):865--892, 1989.

\bibitem[JAMS91]{johnson1991optimization}
David~S Johnson, Cecilia~R Aragon, Lyle~A McGeoch, and Catherine Schevon.
\newblock Optimization by simulated annealing: an experimental evaluation; part
  ii, graph coloring and number partitioning.
\newblock {\em Operations research}, 39(3):378--406, 1991.

\bibitem[Jan11]{auger2011theory}
Thomas Jansen.
\newblock Simulated annealing.
\newblock In {\em Theory of randomized search heuristics}, volume~1 of {\em
  Ser. Theor. Comput. Sci.}, pages 171--195. World Sci. Publ., Hackensack, NJ,
  2011.

\bibitem[Jer92]{jerrum1992large}
Mark Jerrum.
\newblock Large cliques elude the {M}etropolis process.
\newblock {\em Random Structures Algorithms}, 3(4):347--359, 1992.

\bibitem[JS89]{jerrum1989approximating}
Mark Jerrum and Alistair Sinclair.
\newblock Approximating the permanent.
\newblock {\em SIAM J. Comput.}, 18(6):1149--1178, 1989.

\bibitem[JS96]{jerrum1996markov}
Mark Jerrum and Alistair Sinclair.
\newblock The markov chain monte carlo method: an approach to approximate
  counting and integration.
\newblock {\em Approximation Algorithms for NP-hard problems, PWS Publishing},
  1996.

\bibitem[JS98]{jerrum1998metropolis}
Mark Jerrum and Gregory~B. Sorkin.
\newblock The {M}etropolis algorithm for graph bisection.
\newblock {\em Discrete Appl. Math.}, 82(1-3):155--175, 1998.

\bibitem[KGV83]{kirkpatrick1983optimization}
S.~Kirkpatrick, C.~D. Gelatt, Jr., and M.~P. Vecchi.
\newblock Optimization by simulated annealing.
\newblock {\em Science}, 220(4598):671--680, 1983.

\bibitem[KP06]{khot2006better}
Subhash Khot and Ashok~Kumar Ponnuswami.
\newblock Better inapproximability results for {M}ax{C}lique, chromatic number
  and {M}in-3{L}in-{D}eletion.
\newblock In {\em Automata, languages and programming. {P}art {I}}, volume 4051
  of {\em Lecture Notes in Comput. Sci.}, pages 226--237. Springer, Berlin,
  2006.

\bibitem[KT06]{kleinberg2006algorithm}
Jon Kleinberg and Eva Tardos.
\newblock {\em Algorithm design}.
\newblock Pearson Education India, 2006.

\bibitem[Ku{\v{c}}91]{kuvcera1991greedy}
Lud{\v{e}}k Ku{\v{c}}era.
\newblock The greedy coloring is a bad probabilistic algorithm.
\newblock {\em Journal of Algorithms}, 12(4):674--684, 1991.

\bibitem[LP17]{levin2017markov}
David~A Levin and Yuval Peres.
\newblock {\em Markov chains and mixing times}, volume 107.
\newblock American Mathematical Soc., 2017.

\bibitem[Mee07]{meer2007simulated}
Klaus Meer.
\newblock Simulated annealing versus {M}etropolis for a {TSP} instance.
\newblock {\em Inform. Process. Lett.}, 104(6):216--219, 2007.

\bibitem[Mih89]{mihail1989conductance}
Milena Mihail.
\newblock Conductance and convergence of {M}arkov chains -- a combinatorial
  treatment of expanders.
\newblock In {\em Proceedings of the 30th Annual Symposium on Foundations of
  Computer Science}, pages 526--531. IEEE, 1989.

\bibitem[MP92]{marinari1992simulated}
Enzo Marinari and Giorgio Parisi.
\newblock Simulated tempering: a new {M}onte {C}arlo scheme.
\newblock {\em Europhysics letters}, 19(6):451, 1992.

\bibitem[MRR{\etalchar{+}}53]{metropolis1953equation}
Nicholas Metropolis, Arianna~W Rosenbluth, Marshall~N Rosenbluth, Augusta~H
  Teller, and Edward Teller.
\newblock Equation of state calculations by fast computing machines.
\newblock {\em The journal of chemical physics}, 21(6):1087--1092, 1953.

\bibitem[MU17]{mitzenmacher2017probability}
Michael Mitzenmacher and Eli Upfal.
\newblock {\em Probability and computing: Randomization and probabilistic
  techniques in algorithms and data analysis}.
\newblock Cambridge university press, 2017.

\bibitem[MWW09]{mossel2009hardness}
Elchanan Mossel, Dror Weitz, and Nicholas Wormald.
\newblock On the hardness of sampling independent sets beyond the tree
  threshold.
\newblock {\em Probab. Theory Related Fields}, 143(3-4):401--439, 2009.

\bibitem[PW23]{perkins2023hardness}
Will Perkins and Yuzhou Wang.
\newblock On the hardness of finding balanced independent sets in random
  bipartite graphs.
\newblock {\em arXiv preprint arXiv:2307.13921}, 2023.

\bibitem[Sas91]{sasaki1991effect}
Galen Sasaki.
\newblock The effect of the density of states on the {M}etropolis algorithm.
\newblock {\em Inform. Process. Lett.}, 37(3):159--163, 1991.

\bibitem[SH88]{sasaki1988time}
Galen~H. Sasaki and Bruce Hajek.
\newblock The time complexity of maximum matching by simulated annealing.
\newblock {\em J. Assoc. Comput. Mach.}, 35(2):387--403, 1988.

\bibitem[Sor91]{sorkin1991efficient}
Gregory~B. Sorkin.
\newblock Efficient simulated annealing on fractal energy landscapes.
\newblock {\em Algorithmica}, 6(3):367--418, 1991.

\bibitem[\v{C}85]{vcerny1985thermodynamical}
V.~\v{C}ern\'{y}.
\newblock Thermodynamical approach to the traveling salesman problem: an
  efficient simulation algorithm.
\newblock {\em J. Optim. Theory Appl.}, 45(1):41--51, 1985.

\bibitem[Weg05]{wegener2005simulated}
Ingo Wegener.
\newblock Simulated annealing beats {M}etropolis in combinatorial optimization.
\newblock In {\em Automata, languages and programming}, volume 3580 of {\em
  Lecture Notes in Comput. Sci.}, pages 589--601. Springer, Berlin, 2005.

\bibitem[Zuc06]{zuckerman2006linear}
David Zuckerman.
\newblock Linear degree extractors and the inapproximability of max clique and
  chromatic number.
\newblock In {\em S{TOC}'06: {P}roceedings of the 38th {A}nnual {ACM}
  {S}ymposium on {T}heory of {C}omputing}, pages 681--690. ACM, New York, 2006.

\end{thebibliography}
\appendix
\section{Metropolis vs.~Greedy Algorithms}\label{sec:greedyvsmetro}
Below we sketch the existence of instances to the maximum independent set problem where the Metropolis process can far exceed (in polynomial time) the performance of greedy algorithms.

\subsection{Degree-based greedy}
Recall that the degree-based greedy algorithm for the maximum independent set problem iteratively chooses a vertex $v$ of minimal degree (breaking ties arbitrarily),
adds it to an independent set (initialized to be the empty set), and deletes $v$ and all its neighbors from the input graph. The algorithm terminates once all vertices
have been deleted. 

Consider the following family of graphs $G_n$. It consists of an independent set $I$ of size $n$, a clique $C$ disjoint from $I$ of size $n$, and an additional vertex $r$ (overall $2n+1$ vertices).
$I$ and $C$ are connected to one another via a complete bipartite graph: every vertex in $I$ is connected to every vertex in $C$. 
Finally, $r$ is connected to all vertices of $I$ (and no other vertex). The maximum independent set in this graph is of size $n$.

The greedy algorithm will initially choose $r$, as it has a minimal degree, add it to the solution, and delete $r$ and its neighbors from $G_n$. Thereafter it will choose 
a single vertex from $C$. 

For Metropolis, set $\lambda = n^2$. Suppose the algorithm chooses $r$ and an additional vertex from $C$. The expected number of steps it takes until both vertices are deleted is polynomial (and the expected time a single vertex not from $I$ is deleted is even smaller and polynomial as well). It follows that the expected time until Metropolis picks a vertex $v \in I$ is polynomial (and hence by Markov with high probability a vertex from $I$ will be picked within polynomially many steps).

After a vertex $u\in I$ is chosen we have, by applying a coupon collector argument, that the expected time until all vertices from $I$ are added is $n(\log(n) +O(1))$, and furthermore with probability at least $1-1/n$ all vertices from $I$ are chosen after $2n \log(n)$ iterations~\cite{mitzenmacher2017probability}. By the choice of $\lambda$, with high probability, no deletions are going to occur during these $2n \log(n)$ steps. Hence, with high probability, MP will find an optimal independent set in $G_n$ in polynomial time. Hence MP will efficiently find an independent set of size $n$ with probability $1-o(1)$ whereas the greedy algorithm will find an independent set of size $2$.  

\subsection{Randomized greedy}

In randomized greedy (RG) vertices are added at random to a growing independent set, initialized as the empty set. In every iteration, a vertex $v$ is chosen uniformly at random from the remaining graph (initialized to be the input graph)
and added to the set. Thereafter $v$ and all its neighbors are deleted from the graph. The algorithm terminates once all vertices have been deleted. 

Consider the following graph family $H_n$. Fix $\varepsilon \in (0,1)$ to be detemined later. There are $n$ disjoint copies of an independent set of size $n^{\varepsilon}$ connected with a complete bipartite graph to a clique of size $n$. Hence the graph has $n(n+n^{\varepsilon})$ vertices. The maximum size of an independent set in $H_n$ is $n^{1+\varepsilon}.$

For each copy, the probability a vertex from the independent set is chosen by randomized greedy is 
$$\frac{n^{\varepsilon}}{n+n^{\varepsilon}} \leq \frac{1}{n^{1-\varepsilon}}.$$
If a vertex from an independent set is chosen by RG all other vertices from the independent set are chosen as well, leading to an expected size of at most $\frac{1}{n^{1-2\varepsilon}}.$ If a clique vertex is chosen the contribution is a single vertex. Hence the expected size RG for a single copy is at most $1+\frac{1}{n^{1-2\varepsilon}}$. The expected size of the solution found by RG for the entire graph is at most $n+\frac{1}{n^{2\varepsilon}}.$ By Chernoff, the probability RG will return an independent set of size larger than $2(n+n^{2\varepsilon})$ 
is at most $\exp(-\Omega(n).)$

We now discuss how the Metropolis process fairs with $H_n$. Set $\lambda=n^2$. We begin by considering a single copy of an independent set $I$ connected to a clique $C$. Even if a clique vertex is chosen the expected time until a vertex from $I$ is chosen (which requires the deletion of the clique vertex) is polynomial. It follows (by Markov) that with high probability after a polynomial number of iterations, a vertex from $I$ is chosen. 

Next, as the deletion probability is $1/n^2$ it follows by a coupon collector argument~\cite{mitzenmacher2017probability} that all vertices of $I$ are chosen with probability at least $1-1/n^2$ after $3 n^{\varepsilon} \log(n)$ iterations once the first vertex from $I$ is chosen. 

We now need to bound the probability the process deletes many vertices from $I$ once all 
of the vertices of $I$ are chosen. Toward this end we need the following two results (see for example ~\cite{levin2017markov}).

\begin{lemma}
For a reversible Markov chain with transition matrix $P$ over a finite state space $\Gamma$ it holds that for every $x,y \in \Gamma$ and integer $t \geq 1$:
$$\pi(x)P^t(x,y)=\pi(y)P^t(y,x),$$
where $P^t$ is the probability of moving from $x$ to $y$ in $t$ steps and $\pi$ is the stationary distribution of $P$ which exists since the chain is reversible. 
\end{lemma}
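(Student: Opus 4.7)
The plan is to prove the identity by induction on $t$, peeling off a single step via Chapman--Kolmogorov at each inductive stage. The base case $t=1$ is nothing more than the definition of reversibility, $\pi(x) P(x,y) = \pi(y) P(y,x)$ for all $x,y \in \Gamma$, so no work is needed there.

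For the inductive step, assuming the identity for some $t \geq 1$, I would decompose
$$\pi(x) P^{t+1}(x,y) = \sum_{z \in \Gamma} \pi(x) P^t(x,z) P(z,y),$$
then substitute $\pi(x) P^t(x,z) = \pi(z) P^t(z,x)$ inside the sum using the induction hypothesis, and finally substitute $\pi(z) P(z,y) = \pi(y) P(y,z)$ using the $t=1$ detailed balance relation. The two substitutions act on disjoint factors, so they commute. Collecting the resulting terms yields
$$\sum_{z \in \Gamma} \pi(y) P(y,z) P^t(z,x) = \pi(y) P^{t+1}(y,x),$$
which is the identity at level $t+1$.

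There is essentially no obstacle in this argument: the only choice is whether to peel off the first or the last step of the $(t+1)$-step kernel, and either ordering produces the same telescoping. The existence of a stationary $\pi$ is guaranteed by the hypothesis of reversibility (detailed balance for $P$ forces $\pi P = \pi$ directly, by summing the $t=1$ identity over $x$). Conceptually, the lemma is the observation that the $t$-step chain, viewed as a Markov chain with transition matrix $P^t$ in its own right, inherits reversibility with respect to the same $\pi$, which is why one often uses $P^t$ as a black box when invoking reversibility in spectral or coupling arguments.
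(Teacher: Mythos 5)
Your proof is correct: the induction via Chapman--Kolmogorov, substituting the inductive hypothesis and then detailed balance inside the sum, is the standard argument (the paper itself gives no proof and simply cites Levin--Peres, where essentially this same reasoning appears, sometimes phrased as telescoping detailed balance along a path and summing over intermediate states). Your side remark that summing the $t=1$ identity over $x$ yields $\pi P = \pi$ is also accurate.
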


\begin{definition}
Consider the following birth and death Markov chain $B$ on the state space $\{0,\ldots k\}$: for $0\leq i< k$ the probability of moving from $i$ to $i+1$ is $p_i$; the probability of moving from $0 < i \leq k$ to $i-1$ is $q_i$ and the probability the chain stays at $i$ is $r_i=1-p_i-q_i$. For $j>0$ let $w_j=\prod_{i=1}^j \frac{p_{i-1}}{q_i}$ and set $w_0=1$. 
\end{definition}
\begin{lemma}
The stationary distribution $\pi$ of the birth and death process $B$ satisfies for $j \in [k]:$ 
$$\pi(j)=w_j/\sum_{i=0}^{k}w_i.$$
\end{lemma}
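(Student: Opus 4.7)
The plan is to prove the lemma via detailed balance, which is the natural approach since the chain is a nearest-neighbor birth and death chain on the totally ordered state space $\{0,1,\dots,k\}$. The key observation is that any stationary distribution of such a chain is automatically reversible: the only way for probability mass to cross the cut between $\{0,\dots,i\}$ and $\{i+1,\dots,k\}$ is through the edge $i \leftrightarrow i+1$, so in stationarity the rate across this cut in both directions must match, giving $\pi(i)\, p_i = \pi(i+1)\, q_{i+1}$ for every $0 \le i < k$.

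From here the computation telescopes. Rearranging the detailed balance identity gives
$$\pi(i+1) = \pi(i) \cdot \frac{p_i}{q_{i+1}},$$
and iterating from $i = 0$ up to $j-1$ yields
$$\pi(j) = \pi(0) \prod_{i=1}^{j} \frac{p_{i-1}}{q_i} = \pi(0)\, w_j,$$
so that $\pi$ is a constant multiple of the sequence $\{w_j\}_{j=0}^{k}$. Imposing the normalization $\sum_{j=0}^{k} \pi(j) = 1$ forces $\pi(0) = 1/\sum_{i=0}^{k} w_i$, which gives the claimed formula $\pi(j) = w_j / \sum_{i=0}^{k} w_i$.

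The last step is a brief verification that this candidate really is stationary and unique. Stationarity follows by summing the two detailed balance equations $\pi(i-1) p_{i-1} = \pi(i) q_i$ and $\pi(i) p_i = \pi(i+1) q_{i+1}$, which together imply the full balance equation $\pi(i) = \pi(i-1) p_{i-1} + \pi(i) r_i + \pi(i+1) q_{i+1}$ once one uses $p_i + q_i + r_i = 1$. Uniqueness follows from the (implicit) assumption that $p_i, q_i > 0$ in the relevant range, which makes the chain irreducible on $\{0,\dots,k\}$; together with finiteness of the state space this guarantees a unique stationary distribution.

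The main and essentially only obstacle is notational bookkeeping: one must be careful about the index shift between $p_{i-1}$ and $q_i$ in the definition of $w_j$ so that the telescoping product lines up correctly with the detailed balance relation. Once that is done, the proof is entirely mechanical and requires no probabilistic input beyond the detailed balance equations for a nearest-neighbor chain.
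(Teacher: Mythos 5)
Your proof is correct and is exactly the standard detailed-balance argument for birth-and-death chains that the paper implicitly relies on by citing \cite{levin2017markov} rather than proving the lemma itself: the cut argument forces $\pi(i)p_i = \pi(i+1)q_{i+1}$, the product telescopes to $\pi(j) = \pi(0)w_j$, and normalization finishes. Your added verification of stationarity and the remark that uniqueness requires $p_i,q_i>0$ (which the paper leaves implicit) are both accurate.
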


\begin{corollary}
Let $0\leq a<b \leq k$. For the chain $B$ above the probability we move in $t$ steps from $b$ to $a$ is at most $\left(\prod_{j=a+1}^b\frac{p_{i-1}}{q_i}\right)^{-1}.$
\end{corollary}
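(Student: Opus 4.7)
The plan is to derive the bound directly from the reversibility of the birth-and-death chain $B$, combined with the explicit formula for its stationary distribution that was just recorded. The first lemma cited in the excerpt gives the detailed-balance identity $\pi(b)\,P^t(b,a)=\pi(a)\,P^t(a,b)$, valid for any pair of states $a,b$ and any number of steps $t$. Since $P^t(a,b)\le 1$, this immediately yields the one-sided inequality
$$P^t(b,a)\;\le\;\frac{\pi(a)}{\pi(b)},$$
so the entire proof reduces to computing the ratio $\pi(a)/\pi(b)$.

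For this, I would invoke the second lemma, which asserts $\pi(j)=w_j/\sum_i w_i$ with $w_j=\prod_{i=1}^{j}(p_{i-1}/q_i)$. The normalizing constant cancels in the ratio, leaving $\pi(a)/\pi(b)=w_a/w_b$. Writing out the two products and cancelling common factors telescopically gives
$$\frac{w_a}{w_b}\;=\;\frac{\prod_{i=1}^{a}\tfrac{p_{i-1}}{q_i}}{\prod_{i=1}^{b}\tfrac{p_{i-1}}{q_i}}\;=\;\left(\prod_{j=a+1}^{b}\frac{p_{j-1}}{q_j}\right)^{-1}.$$
Combining the two displays yields exactly the claimed bound.

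I do not anticipate a substantive obstacle: the argument is two short lines once the correct identity is invoked. The only minor care needed is checking that reversibility is indeed available — that is, that the chain is irreducible on $\{0,\dots,k\}$ so that $\pi$ is well-defined and positive, and that $q_j>0$ for all $j\ge 1$ so that none of the $w_j$ vanish and the ratio makes sense. Both are implicit in the standing assumption that $B$ is a genuine birth-and-death chain on the finite state space; otherwise the stationary distribution stated in the preceding lemma would itself not be well-defined. With this sanity check in place, the bound follows with no further input.
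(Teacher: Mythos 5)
Your argument is correct and coincides with the paper's own proof: both apply the reversibility identity $\pi(b)P^t(b,a)=\pi(a)P^t(a,b)$, bound $P^t(a,b)\le 1$ to get $P^t(b,a)\le \pi(a)/\pi(b)$, and then compute the ratio via the telescoping cancellation in the $w_j$ formula. Your added remark about positivity of the $q_j$ and irreducibility is a reasonable sanity check that the paper leaves implicit.
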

\begin{proof}
The probability we move from $b$ to $a$ in $t$ steps is $P^t(b,a)$ where $P$ is the transition matrix of $B$. By the lemma above
$P^t(b,a) \leq \pi(a)/\pi(b)=\left(\prod_{j=a+1}^b\frac{p_{i-1}}{q_i}\right)^{-1}$.
\end{proof}

We can now prove:
\begin{theorem}
Suppose we run Metropolis on $H_n$ with $\lambda=n^2$. Fix arbitrary $\delta \in (0,1)$.
With high probability the algorithm will find an independent set of size $(1-\delta)n^{1+\varepsilon}.$
\end{theorem}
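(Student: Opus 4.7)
The plan is to treat the $n$ disjoint copies of the gadget independently, since they share no edges and the global MP on $H_n$ induces independent local dynamics on each copy. Within a single copy, write $X_t$ for the number of occupied vertices of the independent-set side $I$. Because $I$ and the clique $C$ are joined by a complete bipartite graph, whenever $X_t \ge 1$ no $C$-vertex can enter the current IS; this decouples the two sides and lets me track $X_t$ as a birth-death chain on $\{1,\dots,n^\varepsilon\}$ with up-rate proportional to $n^\varepsilon - X_t$ and down-rate proportional to $X_t/\lambda$.

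I proceed in two phases. The burn-in phase shows that within $T_0 = \poly(n)$ global steps, every copy satisfies $X_{T_0} = n^\varepsilon$ with probability at least $1 - n^{-2}$. The argument given just before the theorem already handles this per-copy: $\lambda^{-1} = n^{-2}$ makes the expected lifetime of any $C$-vertex polynomial, so in polynomial time the copy becomes $C$-free and acquires its first $I$-vertex; once this happens no $C$-vertex can return, and a coupon-collector bound gives that $O(n^\varepsilon \log n)$ further local updates (and hence $\poly(n)$ global updates, by Chernoff concentration on the local update count) fill in all of $I$.

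The stability phase applies the birth-death machinery from the excerpt. With $p_k = (n^\varepsilon - k)/N$ and $q_k = k/(N\lambda)$ where $N = |V(H_n)|$, the ratio $p_{j-1}/q_j = (n^\varepsilon - j + 1)\lambda/j$ yields, via the stated corollary, $P^t(n^\varepsilon, a) \le \binom{n^\varepsilon}{a}\lambda^{-(n^\varepsilon-a)} \le 2^{n^\varepsilon}\lambda^{-\delta n^\varepsilon}$ for any $a \le (1-\delta)n^\varepsilon$ and any $t$. With $\lambda = n^2$ this quantity is super-polynomially small, so a union bound over $a \le (1-\delta)n^\varepsilon$, over the $\poly(n)$ timesteps after $T_0$, and finally over the $n$ copies, gives that $X_t^{(i)} > (1-\delta)n^\varepsilon$ for every copy $i$ and every $t$ in the time horizon with probability $1 - o(1)$. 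Summing across copies yields an IS of total size at least $(1-\delta)n^{1+\varepsilon}$, which is what the theorem asks for.

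The main obstacle I foresee is self-consistency of the birth-death reduction: strictly speaking, the chain restricted to $I$ is a birth-death chain only while $X_t \ge 1$, since at $X_t = 0$ clique vertices can enter and perturb the transitions. Because the target threshold $(1-\delta)n^\varepsilon$ is far above $0$, any trajectory that violates it must first pass through states where the birth-death description is valid, so the stationary-ratio bound rules out the bad event and the reduction is self-consistent on the favorable event. A secondary bookkeeping point is converting between local (per-copy) and global (whole-graph) update counts, which is routine via Chernoff concentration of the number of local updates in a polynomial window of global steps.
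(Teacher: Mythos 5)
Your proposal is correct and follows essentially the same route as the paper: a burn-in phase that fills each copy's independent set via the polynomial hitting time for clearing clique vertices plus a coupon-collector bound, followed by a stability phase using the reversible birth-and-death chain and its stationary-ratio bound $P^t(b,a)\le\bigl(\prod_{j=a+1}^{b}p_{j-1}/q_j\bigr)^{-1}$ to show that losing $\delta n^{\varepsilon}$ vertices from a full copy is super-polynomially unlikely, and then union bounds over time and copies. Your explicit treatment of the self-consistency of the birth-death reduction (valid only while $X_t\ge 1$, which suffices since the threshold is hit at a state $\ge 1$) is a point the paper leaves implicit, but the argument is the same.
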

\begin{proof}
By the discussion above and using a union bound over the $n$ copies, after a polynomial number of steps $T$, MP will pick all $n^{\varepsilon}$ vertices in each of the $n$ disjoint independent sets in $H_n$ at least once. The probability of adding a vertex to an independent set that is not fully occupied is at least $n^{-\varepsilon} \geq 1/n$ (conditioned on the vertex from the independent set chosen by the algorithm) and the probability of deletion is at most $1/n^2$. We have that for each copy, once all $n^{\varepsilon}$ vertices of an independent set are chosen, the probability more than $\delta n^{\varepsilon}$ vertices are deleted in $T$ steps is at most $n^{-\delta n^{\varepsilon}-2}$. Therefore in $T$ iterations, with high probability the algorithms maintains an independent set whose intersection with each copy is of size at least $(1-\delta)n^{\varepsilon}.$ Therefore the algorithm finds an independent set of size at least $n(1-\delta)\cdot n^\varepsilon$ in polynomial time with high probability.      
\end{proof}

Plugging in $\varepsilon=1/2$ we get:
\begin{corollary}
   There exists a graph with $m$ vertices such that randomized greedy finds with high probability an independent set of size $O(m^{1/2})$, whereas MP finds with high probability in polynomial time an independent set of size $\Omega(m^{3/4})$.
\end{corollary}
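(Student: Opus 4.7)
The plan is a direct specialization of the preceding theorem. Set $\varepsilon=1/2$ in the construction of $H_n$ and write $m$ for its vertex count. Since $H_n$ consists of $n$ copies of an independent set of size $n^{1/2}$ each joined by a complete bipartite graph to a common clique of size $n$, the total number of vertices is
\[
m \;=\; n\bigl(n + n^{1/2}\bigr) \;=\; \Theta(n^2),
\]
so in particular $n=\Theta(m^{1/2})$ and $n^{3/2}=\Theta(m^{3/4})$.

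Next I would invoke, verbatim, the Chernoff-based upper bound on randomized greedy that was already established in the paragraphs preceding the theorem: with probability $1-\exp(-\Omega(n))$ the output of RG on $H_n$ has size at most $2(n+n^{2\varepsilon})=4n$. Substituting $n=\Theta(m^{1/2})$ this gives $|I^{H_n}_{\mathrm{RG}}|=O(m^{1/2})$ with high probability in $m$, which is the first half of the corollary.

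For the second half, I would apply the theorem itself with, say, $\delta=1/2$ and the fugacity $\lambda=n^2$ prescribed there. The theorem guarantees that in polynomially many steps MP produces an independent set of size at least $(1-\delta)\,n^{1+\varepsilon}=\tfrac12\,n^{3/2}=\Omega(m^{3/4})$ with high probability. Combining the two bounds on the same graph $H_n$ yields the claimed separation.

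The only thing to be careful about is that the two ``with high probability'' statements refer to different sources of randomness but the \emph{same} deterministic graph $H_n$; this is immediate because both probability bounds go to $1$ as $n\to\infty$, so they hold simultaneously for all sufficiently large $n$ (and hence all sufficiently large $m$). I do not anticipate any genuine obstacle here: the theorem does all the work, and the corollary is really just a conversion between the parameters $n$ and $m=\Theta(n^2)$.
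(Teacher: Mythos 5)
Your proposal is correct and matches the paper's proof, which is literally just ``plugging in $\varepsilon=1/2$'' into the preceding theorem and the Chernoff bound on randomized greedy; your conversion $m=\Theta(n^2)$, hence $4n=O(m^{1/2})$ and $(1-\delta)n^{3/2}=\Omega(m^{3/4})$, is exactly the intended argument.
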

\end{document}